\newtheorem{theorem}{Theorem}[section]
\newtheorem{proposition}[theorem]{Proposition}
\newtheorem{lemma}[theorem]{Lemma}
\newtheorem{corollary}[theorem]{Corollary}
\newtheorem{definition}[theorem]{Definition}
\newcommand{\bE}{\ensuremath{\mathbf{E}}}
\begin{document}

\title[A constructive algorithm for the LLL on permutations]{A constructive algorithm for the Lov\'{a}sz Local Lemma on permutations$^{1}$}

\author[David G. Harris and Aravind Srinivasan]{
{\sc David G.~Harris}$^{2}$
\and
{\sc Aravind Srinivasan}$^{3}$
}

\setcounter{footnote}{0}
\addtocounter{footnote}{1}
\footnotetext{This is an extended version of a paper which appeared in the \emph{Proc.\ ACM-SIAM Symposium on Discrete Algorithms}, (SODA) 2014.}

\addtocounter{footnote}{1}
\footnotetext{Department of Applied Mathematics, University of Maryland, 
College Park, MD 20742. 
Research supported in part by NSF Awards CNS-1010789 and CCF-1422569.
Email: \texttt{davidgharris29@hotmail.com}.}
\addtocounter{footnote}{1}
\footnotetext{Department of Computer Science and
Institute for Advanced Computer Studies, University of Maryland, 
College Park, MD 20742. 
Research supported in part by NSF Awards CNS-1010789 and CCF-1422569, and by a research award from Adobe, Inc. 
Email: \texttt{srin@cs.umd.edu}.}

\maketitle
\begin{abstract}
 While there has been significant progress on algorithmic aspects of the Lov\'{a}sz Local Lemma (LLL) in recent years, a noteworthy exception is when the LLL is used in the context of random permutations. The breakthrough algorithm of Moser \& Tardos only works in the setting of independent variables, and does not apply in this context. 
We resolve this by developing a randomized polynomial-time algorithm for such applications. A noteworthy application is for Latin transversals: the best-known general result here (Bissacot et al., improving on Erd\H{o}s and Spencer), states that any $n \times n$ matrix  in which each entry appears at most $(27/256)n$ times, has a Latin transversal. We present the first polynomial-time algorithm to construct such a transversal. We also develop RNC algorithms for Latin transversals, rainbow Hamiltonian cycles, strong chromatic number, and hypergraph packing. 

In addition to efficiently finding a configuration which avoids bad-events, the algorithm of Moser \& Tardos has many powerful extensions and properties. These include a well-characterized distribution on the output distribution, parallel algorithms, and a partial resampling variant. We show that our algorithm has nearly all of the same useful properties as the Moser-Tardos algorithm, and present a comparison of this aspect with recent works on the LLL in general probability spaces. 
\end{abstract}



\section{Introduction}
Recent years have seen substantial progress in developing algorithmic versions of the Lov\'{a}sz Local Lemma (LLL) and some of its generalizations, starting with the breakthrough work of Moser \& Tardos \cite{moser-tardos}: see, e.g., \cite{haeupler,harris-srin2,DBLP:conf/stoc/KolipakaS11,pegden}. However, one major relative of the LLL that has eluded constructive versions, is the ``lopsided" version of the LLL (with the single exception of the CNF-SAT problem \cite{moser-tardos}).
 A natural setting for the lopsided LLL is where we have one or many random permutations \cite{erdos-spencer,lu,mohr}.  This approach has been used for Latin transversals \cite{bissacot,erdos-spencer,szabo:transversals}, hypergraph packing \cite{random-inj}, certain types of graph coloring \cite{bottcher}, and in proving the existence of certain error-correcting codes \cite{keevash}. However, current techniques do not give constructive versions in this context. We develop a randomized polynomial-time algorithm to construct such permutation(s) whose existence is guaranteed by the lopsided LLL, leading to several algorithmic applications in combinatorics. Furthermore, since the appearance of the
conference version of this work \cite{harris-srin-lllperm}, related works including \cite{achlioptas,harvey,kolmogorov} have been published; we make a comparison to these in 
Sections \ref{sec:intro-comparison} and \ref{achlioptas-sec}, detailing which of our contributions do not appear to follow from the frameworks of
\cite{achlioptas,harvey,kolmogorov}. 

\subsection{The Lopsided Lov\'{a}sz Local Lemma and random permutations}
\label{sec:perm-lll}
Suppose we want to select permutations $\pi_1, \dots, \pi_N$, where each $\pi_k$ is a permutation on the set $[n_k] = \{1, \dots, n_k\}$. In addition we have a set $\mathcal B$ of ``bad events.'' We want to select permutations $\pi$ such that no bad event is true. The \emph{lopsided version} of the Lov\'{a}sz Local Lemma (LLL) can be used to prove that such permutations exist, under suitable conditions.

We suppose that that the family of bad events $\mathcal B$ consists of \emph{atomic bad-events}. That is, each bad event $B \in \mathcal B$ is a set of tuples $B = \{(k_1, x_1, y_1), \dots, (k_r, x_r, y_r)\}$; it is true iff we have $(\pi_{k_1} (x_1) = y_1) \wedge \dots \wedge (\pi_{k_r} (x_r) = y_r)$. (Complex bad-events can usually be decomposed into atomic bad-events, so this does not lose much generality.) We will assume that no bad-event contains two tuples $(k, x,y), (k,x,y')$ where $y \neq y'$, or two tuples $(k,x,y), (k,x',y)$ where $x \neq x'$; such a bad-event would have probability zero, and could be ignored.

To apply the Lopsided Local Lemma in this setting, we need to define a \emph{dependency graph} with respect to these bad events. We connect two bad events $B, B'$ by an edge iff they overlap in one slice of the domain or range of a permutation; namely, iff there are some $k, x, y_1, y_2$ with $(k, x, y_1) \in B, (k, x, y_2) \in B'$ \emph{or} there are some $k, x_1, x_2, y$ with $(k, x_1, y) \in B, (k, x_2, y) \in B'$.  We write this $B \sim B'$; note that $B \sim B$. 
The following notation will be useful: for pairs $(x_1, y_1), (x_2, y_2)$, we write $(x_1, y_1) \sim (x_2, y_2)$ if $x_1 = x_2$ \emph{or} $y_1 = y_2$ (or both). Thus, another way to write $B \sim B'$ is that ``there are $(k,x,y) \in B$, $(k, x', y') \in B'$ with $(x,y) \sim (x', y')$".
We will use the following notation at various points: we write $(k,x,*)$ to mean any (or all) triples of the form $(k,x,y)$, and similarly for $(k,*,y)$, or $(x,*)$ etc. Therefore, yet another way to write the condition $B \sim B'$ is that there are $(k,x,*) \in B, (k,x,*) \in B'$ or $(k,*,y) \in B, (k,*,y) \in B'$.

Now suppose we select each $\pi_k$ uniformly at random and independently. This defines a probability space $\Omega$, to which we can apply the lopsided LLL. One can show that the probability of avoiding a bad event $B$ can only be \emph{increased} by avoiding other bad events $B' \not \sim B$ \cite{random-inj}. Thus, in the language of the lopsided LLL, the relation $\sim$ defines a \emph{negative-dependence} graph among the bad-events. 
(See \cite{lu,random-inj,mohr} for a study of the connection between negative dependence, random injections/permutations, and the lopsided LLL.)
Hence, the standard lopsided-LLL criterion is as follows:
\begin{theorem}[\cite{random-inj}]
\label{thm:lopsided}
Suppose that there is some assignment $\mu : \mathcal B \rightarrow \mathbf [0, \infty)$ such that for all bad-events $B \in \mathcal B$ we have
$$
\mu(B) \geq P_{\Omega}(B) \prod_{B' \sim B} (1 + \mu(B')).
$$
Then the random process of selecting each $\pi_k$ uniformly at random and independently has a positive probability of selecting permutations that avoid all the bad-events.
\end{theorem}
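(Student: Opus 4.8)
The plan is to deduce the theorem from the general lopsided Lov\'{a}sz Local Lemma, the one nontrivial ingredient being that $\sim$ is a legitimate negative-dependency (``lopsidependency'') graph for $\Omega$. Precisely, the fact we borrow from \cite{random-inj} is that for every $B \in \mathcal B$ and every subfamily $\mathcal S \subseteq \{B' \in \mathcal B : B' \not\sim B\}$ we have $P_{\Omega}\bigl(B \mid \bigwedge_{B' \in \mathcal S}\overline{B'}\bigr) \le P_{\Omega}(B)$; that is, conditioning on avoiding bad-events non-adjacent to $B$ can only help us avoid $B$. I would either cite this directly or re-derive it as sketched next.

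To see the negative-dependency property, first use that the $\pi_k$ are mutually independent: conditioning on all $\pi_{k'}$ with $k' \neq k$, every $B' \in \mathcal S$ either becomes impossible (so $\overline{B'}$ imposes nothing further on $\pi_k$) or reduces to the requirement that $\pi_k$ \emph{not} extend some partial matching whose rows and columns, in coordinate $k$, are disjoint from those used by $B$ (here we use $B' \not\sim B$). Hence it suffices to show, for a single uniform $\pi$ on $[n]$: if $M$ is a partial matching with row set $X$ and column set $Y$, and $A_1,\dots,A_m$ are matching-extension events $A_j = \{\pi \supseteq M_j\}$ with every $M_j$ avoiding $X$ and $Y$ in both rows and columns, then $P\bigl(\pi \supseteq M \mid \bigwedge_j \overline{A_j}\bigr) \le P(\pi \supseteq M)$. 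Conditioning on the restriction $\pi|_{[n]\setminus X}$ gives $P(\pi \supseteq M \mid \pi|_{[n]\setminus X}) = \mathbf 1[\pi(X)=Y]/|X|!$ and makes each $A_j$ measurable, so the claim is equivalent to $\{\pi(X)=Y\}$ being negatively correlated with $\bigwedge_j \overline{A_j}$; one establishes this by a counting/coupling argument, the key point being that matching-extension events supported on pairwise disjoint sets of rows and columns of $S_n$ are positively correlated (compare a uniform bijection $[n]\setminus X \to [n]\setminus Y$ with a uniform permutation of $[n]$). This combinatorial negative-correlation step is the one I expect to be the real obstacle.

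Granting that $\sim$ is a negative-dependency graph, I would finish with the textbook lopsided-LLL induction using the weights $x(B) := \mu(B)/(1+\mu(B)) \in [0,1)$. By induction on $|\mathcal S|$ one shows $P_{\Omega}\bigl(B \mid \bigwedge_{B'\in\mathcal S}\overline{B'}\bigr) \le x(B)$ whenever $B \notin \mathcal S$: write $\mathcal S = \mathcal S_1 \cup \mathcal S_2$ according to whether $B' \sim B$, bound the numerator $P_{\Omega}\bigl(B \wedge \bigwedge_{\mathcal S_1}\overline{B'} \mid \bigwedge_{\mathcal S_2}\overline{B'}\bigr) \le P_{\Omega}\bigl(B \mid \bigwedge_{\mathcal S_2}\overline{B'}\bigr) \le P_{\Omega}(B)$ via negative dependency on the non-adjacent part $\mathcal S_2$, and bound the denominator $P_{\Omega}\bigl(\bigwedge_{\mathcal S_1}\overline{B'} \mid \bigwedge_{\mathcal S_2}\overline{B'}\bigr) \ge \prod_{B'\in\mathcal S_1}(1-x(B')) \ge \prod_{B' \sim B,\, B'\neq B}(1 - x(B'))$ by the chain rule and the inductive hypothesis applied to strictly smaller conditioning sets. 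The hypothesis $\mu(B) \ge P_{\Omega}(B)\prod_{B'\sim B}(1+\mu(B'))$ is exactly the statement $P_{\Omega}(B) \le x(B)\prod_{B'\sim B,\,B'\neq B}(1-x(B'))$, which closes the induction. Finally, writing $P_{\Omega}\bigl(\bigwedge_{B\in\mathcal B}\overline{B}\bigr)$ as a telescoping product of one-sided conditional probabilities and applying the bound yields $P_{\Omega}\bigl(\bigwedge_{B\in\mathcal B}\overline{B}\bigr) \ge \prod_{B\in\mathcal B}(1-x(B)) = \prod_{B\in\mathcal B}\tfrac{1}{1+\mu(B)} > 0$ (for finite $\mathcal B$; the general case follows by a routine limiting argument), which is what we wanted.
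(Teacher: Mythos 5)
Your proposal follows the same route the paper itself sketches: the paper does not prove Theorem~\ref{thm:lopsided} but cites \cite{random-inj} for exactly the negative-dependence property you isolate (that conditioning on avoiding events $B' \not\sim B$ cannot increase $P_\Omega(B)$), and its remark after the theorem statement is precisely your translation $x(B)=\mu(B)/(1+\mu(B))$ from the $\mu$-form to the standard lopsided-LLL form, after which the usual induction closes the argument. You also correctly flag the single-permutation negative-correlation lemma as the genuinely nontrivial ingredient, which is what \cite{random-inj} supplies.
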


\smallskip \noindent \textbf{Remark:} The condition of Theorem~\ref{thm:lopsided} about the existence of such a $\mu$ is equivalent to the more-familiar LLL formulation ``there exists $x : \mathcal B \rightarrow \mathbf [0, 1)$
such that for all $B \in \mathcal B$, 
$P_{\Omega}(B) \leq x(B) \prod_{B' \sim B: B' \not= B} (1 - x(B'))$": just set $\mu(B) = x(B)/(1 - x(B))$.

\smallskip
The ``positive probability" of Theorem~\ref{thm:lopsided} is however typically exponentially small, as is standard for the LLL. 
As mentioned above, a variety of papers have used the framework of Theorem~\ref{thm:lopsided} for proving the existence of various combinatorial structures. 
Unfortunately, the algorithms for the LLL, such as Moser-Tardos resampling \cite{moser-tardos}, do not apply in this setting. The problem is that such algorithms have a more restrictive notion of when two bad-events are dependent; namely, that they share variables. (The Moser-Tardos algorithm allows for a restricted type of dependence called \emph{lopsidependence}: two bad-events which share a variable but always \emph{agree} on that value, are counted as independent. This is not strong enough to generate permutations.) So we do not have an efficient algorithm to generate such permutations, we can merely show that they exist.

We develop an algorithmic analogue of the LLL for permutations. The necessary conditions for our Swapping Algorithm are the same as for the LLL (Theorem~\ref{thm:lopsided}); however, we will construct such permutations in randomized polynomial (typically linear or near-linear) time. Our setting is far more complex than in similar contexts such as those of \cite{moser-tardos,harris-srin2,pegden}, and requires many intermediate results first. The main complication is that when we encounter a bad event involving ``$\pi_k(x) = y$", and perform our algorithm's random swap associated with it, we could potentially be changing any entry of $\pi_k$. In contrast, when we resample a variable in \cite{moser-tardos,harris-srin2,pegden}, all the changes are confined to that variable. There is a further technical issue: the current witness-tree-based algorithmic versions of the LLL
such as \cite{moser-tardos,harris-srin2}, identify, for each bad-event $B$ in the witness-tree $\tau$, some necessary event occurring with probability at most $P_{\Omega}(B)$. This is not the proof we employ here; there are significant additional terms (``$(n_k - A_{k}^{0})!/n!$'' -- see the proof of 
Lemma~\ref{witness-tree-lemma}) that are gradually ``discharged'' over time.

 We also develop RNC versions of our algorithms. Going from serial to parallel is fairly direct in \cite{moser-tardos}; our main bottleneck here is that when we resample an ``independent" set of bad events, they could still influence each other. 

(Note: we distinguish in this paper between the probability of events which occur in our algorithm, which we denote simply by $P$, and the probabilities of events within the space $\Omega$, which we denote by $P_{\Omega}$.)

\subsection{Comparison with other LLLL algorithms}
\label{sec:intro-comparison}
Building on an earlier version of this work \cite{harris-srin-lllperm}, there have been several papers which have developed generic frameworks for variations of the Moser-Tardos algorithm applied to other probability spaces. In \cite{achlioptas},  Achlioptas \& Iliopoulos gave an algorithm which is based on a compression analysis for a random walk; this was improved for permutations and matchings by Kolmogorov \cite{kolmogorov}. In \cite{harvey}, Harvey \& Vondr\'{a}k gave a probabilistic analysis similar to the parallel Moser-Tardos algorithm. These frameworks both include the permutation LLL as well as some other combinatorial applications. These papers give much simpler proofs that the Swapping Algorithm terminates quickly. 

The Moser-Tardos algorithm has many other powerful properties and extensions, beyond the fact that it efficiently finds a configuration avoiding bad-events. These properties include a well-characterized distribution on the output distribution at the end of the resampling process, a corresponding efficient parallel (RNC) algorithm, a partial-resampling variant (as developed in \cite{harris-srin2}), and an arbitrary (even adversarial) choice of which bad-event to resample. All of these properties follow from the Witness Tree Lemma we show for our Swapping Algorithm. The more generalized LLLL frameworks of \cite{achlioptas,harvey} have a limited ability to show such extensions. 

We will discuss the relationship between this paper and the other LLLL frameworks further in Section~\ref{achlioptas-sec}. As one example of the power of our proof method, we develop a parallel Swapping Algorithm in Section~\ref{sec:parallel}; we emphasize that such a parallel algorithm cannot be shown using the results of \cite{achlioptas} or \cite{harvey}. A second example is provided by Theorem~\ref{szabo-thm}, results such as which we do not see how to develop using the frameworks of \cite{achlioptas,harvey,kolmogorov}. 

One of the main goals of our paper is to provide a model for what properties a generalized LLLL algorithm should have. In our view, there has been significant progress toward this goal but there remain many missing pieces toward a \emph{true} generalization of the Moser-Tardos algorithm. We will discuss this more in a concluding section, Section~\ref{sec:conclusion}.

\subsection{Applications}
\label{sec:app-description}
We present algorithmic applications for four classical combinatorial problems: Latin transversals, rainbow Hamiltonian cycles, strong chromatic number, and edge-disjoint hypergraph packing. In addition to the improved bounds, we wish to highlight that our algorithmic approach can go beyond Theorem~\ref{thm:lopsided}: as we will see shortly, one of our (asymptotically-optimal) algorithmic results on Latin transversals, could not even have been shown non-constructively using the lopsided LLL prior to this work. 

The study of Latin squares and the closely-related Latin transversals is a classical area of combinatorics, going back to Euler and earlier \cite{denes-keedwell}. Given an $m \times n$ matrix $A$ with $m \leq n$, a \emph{transversal} of $A$ is a choice of $m$ elements from $A$, one from each row and at most one from any column. Perhaps the major open problem here is: given an integer $s$, under what conditions will $A$ have an \emph{$s$-transversal}: a transversal in which no value appears more than $s$ times \cite{bissacot,erdos-etal-latinsq,erdos-spencer,shor:latin,stein:latin,szabo:transversals}? The usual type of sufficient condition sought here is an upper bound $\Delta$ on the number of occurrences of any given value in $A$. That is, we ask: what is the maximum $\Delta$ such that any $m \times n$ matrix $A$ in which each value appears at most $\Delta$ times, is guaranteed to have an $s$-transversal? We denote this quantity by $L(s; m,n)$. The case $s = 1$ is perhaps most studied, and $1$-transversals are also called \emph{Latin transversals}. The case $m = n$ is also commonly studied (and includes Latin squares as a special case), and we will also focus on these. It is well-known that $L(1; n, n) \leq n - 1$ \cite{stein:latin}. In perhaps the first application of the lopsided LLL to random permutations, Erd\H{o}s \& Spencer essentially proved a result very similar to Theorem~\ref{thm:lopsided}, and used it to show that $L(1; n, n) \geq n/(4e)$ \cite{erdos-spencer}. (Their paper shows that $L(1; n, n) \geq n/16$; the $n/(4e)$ lower-bound follows easily from their technique.) To our knowledge, this is the first $\Omega(n)$ lower-bound on $L(1; n, n)$. Alon asked if there is a constructive version of this result \cite{alon:prob-proofs}. Building on \cite{erdos-spencer} and using the connections to the LLL from 
\cite{scott-sokal,shearer:lll}, Bissacot \emph{et al}.\ showed non-constructively that $L(1; n, n) \geq (27/256) n$ \cite{bissacot}. Our result makes these results constructive. 

The lopsided LLL has also been used to study the case $s > 1$ \cite{szabo:transversals}. Here, we prove a result that is asymptotically optimal for large $s$, except for the lower-order $O(\sqrt{s})$ term: we show (algorithmically) that $L(s; n, n) \geq (s - O(\sqrt{s})) \cdot n$. An interesting fact is that this was not known even non-constructively before: Theorem~\ref{thm:lopsided} roughly gives $L(s; n, n) \geq (s/e) \cdot n$. We also give faster serial and perhaps the first RNC algorithms with good bounds, for the strong chromatic number. Strong coloring is quite well-studied \cite{alon1992,axenovich,fellows,haxell2004,haxell2008}, and is in turn useful in \emph{covering} a matrix with Latin transversals \cite{alon-spencer-tetali}.

\subsection{Outline}
In Section~\ref{sec:swap-alg} we introduce our Swapping Algorithm, a variant of the Moser-Tardos resampling algorithm. In it, we randomly select our initial permutations; as long as some bad-event is currently true, we perform certain random swaps to randomize (or resample) them.

Section~\ref{witness-tree-sec} introduces the key analytic tools to understand the behavior of the Swapping Algorithm, namely the witness tree and the witness subdag. The construction for witness trees follows \cite{moser-tardos}; it provides an explanation or history for the random choices used in each resampling. The witness subdag is a related concept, which is new here; it provides a history not for each resampling, but for each individual swapping operation performed during the resamplings.

In Section~\ref{perm-conditions}, we show how these witness subdags may be used to deduce partial information about the permutations. As the Swapping Algorithm proceeds in time, the witness subdags can also be considered to evolve over time. At each stage of this process, the current value of the witness subdags provides information about the current values of the permutations.
In Section~\ref{total-prob-sec}, we use this process to make probabilistic predictions for certain swaps made by the Swapping Algorithm: namely, whenever the witness subdags change, the swaps must be highly constrained so that the permutations still conform to them. We calculate the probability that the swaps satisfy these constraints.

Section~\ref{constructive-lll-sec} puts the analyses of Sections~\ref{witness-tree-sec}, \ref{perm-conditions}, \ref{total-prob-sec} together, to prove that our Swapping Algorithm terminates in polynomial time under the same conditions as those of Theorem~\ref{thm:lopsided}; also, as mentioned in Section~\ref{sec:intro-comparison}, Section~\ref{achlioptas-sec} discusses certain contributions that our approach leads to that do not appear to follow from  
\cite{achlioptas,harvey,kolmogorov}.

In Section~\ref{sec:parallel}, we introduce a parallel (RNC) algorithm corresponding to the Swapping Algorithm. This is similar in spirit to the Parallel Resampling Algorithm of Moser \& Tardos. In the latter algorithm, one repeatedly selects a maximal independent set (MIS) of bad-events which are currently true, and resamples them in parallel. In our setting, bad-events which are ``independent'' in the LLL sense (that is, they are not connected via $\sim$), may still influence each other; a great deal of care must be taken to avoid these conflicts.

Section~\ref{alg-sec} describes a variety of combinatorial problems to which our Swapping Algorithm can be applied, including Latin transversals, strong chromatic number, and hypergraph packing. Finally, we conclude in Section~\ref{sec:conclusion} with a discussion of future goals for the construction of a generalized LLL algorithm.

\section{The Swapping Algorithm}
\label{sec:swap-alg}
We will analyze the following \emph{Swapping Algorithm} algorithm to find a satisfactory $\pi_1, \dots, \pi_N$:
\begin{enumerate}
\item Generate the permutations $\pi_1, \dots, \pi_N$ uniformly at random and independently. 
\item While there is some true bad-event:
\begin{enumerate}
\item[(3)] Choose some true bad-event $B \in \mathcal B$ arbitrarily. For each permutation that is involved in $B$, we perform a \emph{swapping} of all the relevant entries. (We will describe the swapping subroutine ``Swap" shortly.) We refer to this step as a \emph{resampling} of the bad-event $B$.

\smallskip 
Each permutation involved in $B$ is swapped independently, but if $B$ involves multiple entries from a single permutation, then all such entries are swapped \emph{simultaneously}. For example, if $B$ consisted of triples $(k_1, x_1, y_1), (k_2, x_2, y_2), (k_2, x_3, y_3)$, then we would perform 
$\text{Swap}(\pi_1; x_1)$ and
$\text{Swap}(\pi_2; x_2, x_3)$, where the ``Swap" procedure is given next. 
\end{enumerate}
\end{enumerate}

\smallskip \noindent
The swapping subroutine $\text{Swap}(\pi; x_1, \dots, x_r)$ for a permutation $\pi: [t] \rightarrow [t]$ as follows:

\smallskip \noindent
Repeat the following for $i = 1, \dots, r$:
\begin{itemize}
\item Select $x'_i$ uniformly at random among $[t] - \{x_1, \dots, x_{i-1} \}$. 
\item Swap entries $x_i$ and $x'_i$ of $\pi$.
\end{itemize}

Note that at every stage of this algorithm all the $\pi_k$ are permutations, and if this algorithm terminates, then the $\pi_k$ must avoid all the bad-events. So our task will be to show that the algorithm terminates in polynomial time.
We measure time in terms of a single iteration of the main loop of the Swapping Algorithm: each time we run one such iteration, we increment the time by one. 
We will use the notation $\pi_k^T$ to denote the value of permutation $\pi_k$ after time $T$. The initial sampling of the permutation (after Step (1)) generates $\pi_k^0$. 

The swapping subroutine seems strange; it would appear more natural to allow $x'_i$ to be uniformly selected among $[t]$. However, the swapping subroutine is nothing more than than the Fisher-Yates Shuffle for generating uniformly-random permutations. If we allowed $x'_i$ to be chosen from $[t]$ then the resulting permutation would be biased. The goal is to change $\pi_k$ in a minimal way to ensure that $\pi_k(x_1), \dots, \pi_k(x_r)$ and $\pi_k^{-1}(y_1), \dots, \pi_k^{-1}(y_r)$ are adequately randomized. 

There are alternative methods for generating random permutations, and many of these can replace the Swapping subroutine without changing our analysis. We discuss a variety of such equivalencies in Appendix~\ref{symmetry-sec}; these will be used in various parts of our proofs. We note that one class of algorithms that has a very different behavior is the commonly used method to generate random reals $r_i \in [0,1]$, and then form the permutation by sorting these reals. When encountering a bad-event, one would resample the affected reals $r_i$. In our setting, where the bad-events are defined in terms of specific values of the permutation, this is not a good swapping method because a single swap can drastically change the permutation. When bad-events are defined in terms of the relative \emph{rankings} of the permutation (e.g. a bad event is $\pi(x_1) < \pi(x_2) < \pi(x_3)$), then this is a better method and can be analyzed in the framework of the ordinary Moser-Tardos algorithm.

\section{Witness trees and witness subdags}
\label{witness-tree-sec}
To analyze the Swapping Algorithm, following the Moser-Tardos approach \cite{moser-tardos}, we introduce the concept of an execution log and a witness tree. The execution log consists of listing every resampled bad-event, in the order that they are resampled. We form a witness tree to justify the resampling at time $t$. We start with the resampled bad-event $B$ corresponding to time $t$, and create a single node in our tree labeled by this event. 
We move backward in time; for each bad-event $B$ we encounter, we add it to the witness tree if $B \sim B'$ for some event $B'$ already in the tree: we choose such a $B'$ that has the maximum depth in the current tree (breaking ties arbitrarily), and make $B$ a child of this $B'$ (there could be many nodes labeled $B'$). If $B \not\sim B'$ for all $B'$ in the current tree, we ignore this $B$ and keep moving backward in time. 
To make this discussion simpler we say that the root of the tree is at the ``top'' and the deep layers of the tree are at the ``bottom''. The top of the tree corresponds to later events, the bottom of the tree to the earliest events.

For the remainder of this section, the dependence on the ``justified'' bad-event at time $t$ at the root of the tree will be understood; we will omit it from the notation.

We will use the term ``witness tree'' in two closely-related senses in the following proof. First, when we run the Swapping Algorithm, we produce a witness tree $\hat \tau^T$; this is a random variable. Second, we might want to fix some labeled tree $\tau$, and discuss hypothetically under what conditions it could be produced or what properties it has; in this sense, $\tau$ is a specific object. We will always use the notation $\hat \tau^T$ to denote the specific witness tree produced by running the Swapping Algorithm, corresponding to resampling time $T$. We write $\hat \tau$ as short-hand for $\hat \tau^T$ where $T$ is understood from context (or irrelevant).

If $\tau$ is a witness tree, we say that $\tau$ \emph{appears} iff $\hat \tau^T = \tau$ for some $T \geq 0$.

The critical lemma that allows us to analyze the behavior of this algorithm is the \emph{Witness Tree Lemma}: 
\begin{lemma}[Witness Tree Lemma]
\label{witness-tree-lemma}
Let $\tau$ be a witness tree, with nodes labeled $B_1, \dots, B_s$. The probability that $\tau$ was produced as the witness tree corresponding any \emph{any} resampling time $t \geq 0$, is at most
$$
P(\text{$\tau$ appears}) \leq P_{\Omega}(B_1) \cdots P_{\Omega}(B_s)
$$

Note that the probability of the event $B$ within the space $\Omega$ can be computed as follows: if $B$ contains $r_1, \dots, r_N$ elements from each of the permutations $1, \dots, N$, (and $B$ is not impossible) then we have $$
P_{\Omega}(B) = \frac{(n_1-r_1)!}{n_1!} \dots \frac{ (n_N - r_N)! }{n_N!}
$$
\end{lemma}
This lemma is superficially similar to the corresponding lemma in Moser-Tardos \cite{moser-tardos}. However, the proof will be far more complex, and we will require many intermediate results first. The main complication is that when we encounter a bad-event involving $\pi_k(x) = y$, and we perform the random swap associated with it, then we could potentially be changing any entry of $\pi_k$. By contrast, in the usual Moser-Tardos algorithm, when we resample a variable, all the changes are confined to that variable. However, as we will see, the witness tree will leave us with enough clues about which swap was actually performed that we will be able to narrow down the possible impact of the swap.

The analysis in the next sections can be very complicated. We have two recommendations to make these proofs easier. First, the basic idea behind how to form and analyze these trees comes from \cite{moser-tardos}; the reader should consult that paper for results and examples which we omit here. Second, one can get most of the intuition behind these proofs by considering the situation in which there is a single permutation, and the bad-events all involve just a single element; that is, every bad-event has the form $\pi(x_i) = y_i$. In this case, the witness subdags (defined later) are more or less equivalent to the witness tree. (The main point of the witness subdag concept is, in effect, to reduce bad-events to their individual elements.) When reading the following proofs, it is a good idea to keep this special case in mind. In several places, we will discuss how certain results simplify in that setting.

The following proposition is the main reason the witness tree encodes sufficient information about the sequence of swaps:
\begin{proposition}
\label{change-prop}
Suppose that at some time $t_0$ we have $\pi^{t_0}_k(X) \neq Y$, and at some later time $t_2 > t_0$ we have $\pi^{t_2}_k(X) = Y$. Then there must have occurred at some intermediate time $t_1$ some bad-event including $(k, X,*)$ or $(k, *, Y)$.
\end{proposition}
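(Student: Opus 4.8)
The plan is to track, through a single execution of the \texttt{Swap} subroutine, the position that holds the value $Y$, and to use the Fisher--Yates structure of \texttt{Swap} to pin down where that value can end up.

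First I would localize the change to one resampling step. Since $\pi_k^{t_0}(X) \neq Y$ and $\pi_k^{t_2}(X) = Y$, there is a first iteration $t_1$ with $t_0 < t_1 \le t_2$ for which $\pi_k^{t_1}(X) = Y$; by minimality $\pi_k^{t_1-1}(X) \neq Y$, so $\pi_k$ genuinely changed during iteration $t_1$. By the description of the algorithm, a permutation changes during an iteration only if it is involved in the bad-event $B$ resampled there, and then exactly one call $\text{Swap}(\pi_k; x_1, \dots, x_r)$ is made, where $(k, x_1, y_1), \dots, (k, x_r, y_r)$ are the triples of $B$ on permutation $k$. Writing $\pi := \pi_k^{t_1-1}$ for the state of the permutation just before this call, the truth of $B$ gives $\pi(x_j) = y_j$ for all $j$, and the positions $x_1, \dots, x_r$ are pairwise distinct by our standing convention that a bad-event contains no two triples $(k,x,y),(k,x,y')$.

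Next I would follow the value $Y$ through the $r$ swap operations of the call. Just before the call it sits at position $p_0 := \pi^{-1}(Y)$, and $p_0 \neq X$ because $\pi(X) \neq Y$; just after the call it sits at position $X$. Hence some operation moves it; let step $i$ be the first one that does. That step swaps positions $x_i$ and $x_i'$, so $p_0 \in \{x_i, x_i'\}$. If $p_0 = x_i$, then $y_i = \pi(x_i) = \pi(p_0) = Y$, so $(k,x_i,Y)$ is a triple of $B$ of the form $(k,*,Y)$, and we are done. Otherwise $p_0 = x_i' \neq x_i$, and after step $i$ the value $Y$ lies at position $x_i$. I would then observe that $x_i$ is untouched by every later step $j > i$: indeed $x_j \neq x_i$ by distinctness, and $x_j' \notin \{x_1, \dots, x_{j-1}\} \ni x_i$ by the definition of \texttt{Swap}. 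So $Y$ remains at $x_i$ for the rest of the call, forcing $X = x_i$, and $(k,X,*)$ is a triple of $B$.

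The only content beyond bookkeeping is this ``freezing'' observation: once a swap moves the tracked value onto an anchor position $x_i$, all remaining Fisher--Yates choices avoid $\{x_1, \dots, x_i\}$, so that value can never move again. I expect that one step --- together with extracting distinctness of the $x_j$'s from the non-degeneracy convention on bad-events --- to be the only place requiring care; the rest is just tracing the orbit of $Y$ under the swaps.
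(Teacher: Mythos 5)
Your proof is correct, and it takes a modestly different route from the paper's. Both arguments localize the change to the resampling of a single bad-event $B$, use the Fisher--Yates ``freezing'' observation (once a position $x_i$ is touched at step $i$, no later step $j>i$ touches it, since $x_j\neq x_i$ by distinctness and $x'_j\notin\{x_1,\dots,x_{j-1}\}\ni x_i$), and split into two cases that yield $(k,X,*)\in B$ or $(k,*,Y)\in B$. The difference is in which step you anchor on. The paper anchors on the first step at which $\pi_k(X)$ becomes $Y$ and case-splits on whether $X$ is the ``source'' $x_i$ or the randomly chosen partner $x''$; the harder case ($X=x''$) then requires arguing that $Y$ must already have sat at some $x_j$ with $j\le i$ at the start of the resampling, which is a small backward induction along $Y$'s trajectory that the paper states rather tersely. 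You instead anchor on the first step that moves the value $Y$, and case-split on whether $Y$'s starting position $p_0=\pi^{-1}(Y)$ is the source $x_i$ (immediately giving $y_i=Y$, hence $(k,*,Y)\in B$) or the partner $x'_i$ (so $Y$ lands on $x_i$ and the freezing observation pins $X=x_i$, giving $(k,X,*)\in B$). Your choice turns the paper's backward induction into a one-shot application of freezing, which is a genuine simplification, while buying nothing extra in generality --- both proofs are specific to the Fisher--Yates structure of \texttt{Swap}.

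One small remark: your equality $y_i=\pi(x_i)=\pi(p_0)=Y$ implicitly uses that the first $\pi$ is $\pi_k^{t_1-1}$, the state \emph{before any} of the $r$ swaps in this resampling, not the state just before step $i$. This is exactly the right choice ($y_i$ comes from $B$ being true at the start, and $p_0=\pi^{-1}(Y)$ was defined relative to the same $\pi$), but the intermediate expression $\pi(x_i)$ could be misread as the value at $x_i$ just before step $i$, which need not equal $y_i$. Worth making that explicit in a polished write-up.
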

\begin{proof}
Let $t_1 \in [t_0, t_2 - 1]$ denote the earliest time at which we had $\pi^{t_1+1}(X) = Y$; this must be due to encountering some bad-event including the elements $(k,x_1, y_1), \dots, (k,x_r,y_r)$ (and possibly other elements from other permutations). Suppose that $\pi_k(X) = Y$ was first caused by swapping entry $x_i$, which at that time had $\pi_k(x_i) = y'_i$, with some $x''$.

After this swap, we have $\pi_k(x_i) = y''$ and $\pi_k(x'') = y'_i$. Evidently $x'' = X$ or $x_i = X$. In the second case, the bad event at time $t_1$ included $(k, X, *)$ as desired and we are done.

So suppose $x''  = X$ and $y'_i = Y$. So at the time of the swap, we had $\pi_k(x_i) = Y$. The only earlier swaps in this resampling were with $x_1, \dots, x_{i-1}$; so at the beginning of time $t_1$, we must have had $\pi^{t_1}_k(x_j) = Y$ for some $j \leq i$. This implies that $y_j = Y$, so that the bad-event at time $t_1$ included $(k,*,Y)$ as desired.
\end{proof}

To explain some of the intuition behind Lemma~\ref{witness-tree-lemma}, we note that Proposition~\ref{change-prop} implies Lemma~\ref{witness-tree-lemma} for a \emph{singleton} witness tree.
\begin{corollary}
Suppose that $\tau$ is a singleton node labeled by $B$. Then $P(\text{$\tau$ appears}) \leq P_{\Omega}(B)$.
\end{corollary}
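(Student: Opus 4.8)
The plan is to deduce the Corollary directly from Proposition~\ref{change-prop}: I will show that whenever the singleton witness tree $\tau$ labeled by $B$ appears, the bad-event $B$ must \emph{already} hold in the initial permutations $\pi^0_1, \dots, \pi^0_N$. Since these are chosen independently and uniformly at random, the probability that $B$ holds in $\pi^0$ is exactly $\prod_k (n_k - r_k)!/n_k! = P_\Omega(B)$ (where $r_k$ is the number of tuples of $B$ drawn from permutation $k$), which gives the bound.

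First I would unwind the meaning of "$\tau$ appears'' for a singleton. If $\hat\tau^T = \tau$, then $B$ is the bad-event resampled at time $T$, and when we scan backward through the execution log from time $T$, \emph{no} earlier resampled bad-event $B'$ can satisfy $B' \sim B$: the root $B$ is the only node ever present in the tree during its construction, so any such $B'$ would have been attached as a child of the root, making $\hat\tau^T$ non-singleton. (In particular, since $B \sim B$, this forces $B$ itself not to have been resampled before time $T$.) The event ``$\tau$ appears'' is the union over $T \ge 0$ of these events, so it suffices to bound each of them by $P(B \text{ holds in } \pi^0)$.

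Next, fix such a $T$ and suppose for contradiction that $B$ does not hold in $\pi^0$, i.e.\ there is a tuple $(k,X,Y) \in B$ with $\pi^0_k(X) \neq Y$. Because $B$ is true immediately before its resampling, we have $\pi^{t_2}_k(X) = Y$ for the configuration $\pi^{t_2}$ preceding that resampling, with $t_2 \geq 0$; if $t_2 = 0$ we are already done, so assume $t_2 \geq 1$. Applying Proposition~\ref{change-prop} with $t_0 = 0$ yields an intermediate time at which some bad-event $B'$ containing $(k,X,*)$ or $(k,*,Y)$ is resampled. Since $B$ contains $(k,X,Y)$, in either case $B' \sim B$, and this resampling occurs strictly before the resampling of $B$ at time $T$ — contradicting the singleton condition established above. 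Hence $B$ holds in $\pi^0$ whenever $\tau$ appears, so $P(\tau\text{ appears}) \leq P(B \text{ holds in }\pi^0) = P_\Omega(B)$.

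The argument is short and I do not anticipate a genuine obstacle; the only point requiring care is the first step — pinning down precisely that a singleton witness tree forbids \emph{every} $\sim B$ resampling strictly before time $T$ — together with routine bookkeeping of the time indices (the resampling ``at time $T$'' acts on the configuration $\pi^{t_2}$ just before it) and the trivial case where $B$ is already true in $\pi^0$.
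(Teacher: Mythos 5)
Your proof is correct and follows essentially the same route as the paper: both argue via Proposition~\ref{change-prop} that if some pair $(k,x,y) \in B$ did not hold in $\pi^0$, an earlier resampling of some $B' \sim B$ would have been forced, which would have been attached below the root and contradicts the tree being a singleton. You fill in a bit more of the bookkeeping (spelling out why a singleton tree excludes \emph{every} earlier $\sim B$ resampling, and handling the time indices explicitly), but the key idea and the use of Proposition~\ref{change-prop} are identical to the paper's proof.
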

\begin{proof}
Suppose $\hat \tau^T = \tau$. We claim that $B$ must have been true of the initial configuration. For suppose that $(k, x,y) \in B$ but in the initial configuration we have $\pi_k(x) \neq y$. At some later point in time $t \leq T$, the event $B$ must become true. By Proposition~\ref{change-prop}, then there is some time $t' < t$ at which we encounter a bad-event $B'$ including $(k,x,*)$ or $(k,*,y)$. This bad-event $B'$ occurs earlier than $B$, and $B' \sim B$. Hence, we would have placed $B'$ below $B$ in the witness tree $\hat \tau^T$.
\end{proof}

In proving Lemma~\ref{witness-tree-lemma}, we will \emph{not} need to analyze the interactions between the separate permutations, but rather we will be able to handle each permutation in a completely independent way. For a permutation $\pi_k$, we define the \emph{witness subdag for permutation $\pi_k$}; this is a relative of the witness tree, but which only includes the information for a single permutation at a time.

\begin{definition}[witness subdags]
For a permutation $\pi_k$, a \emph{witness subdag for $\pi_k$} is defined to be a directed acyclic simple graph, whose nodes are labeled with pairs of the form $(x,y)$. If a node $v$ is labeled by $(x,y)$, we write $v \approx (x,y)$. This graph must in addition satisfy the following properties:
\begin{enumerate}
\item If any pair of nodes overlaps in a coordinate, that is, we have $v \approx (x,y) \sim (x', y') \approx v'$, then nodes $v, v'$ must be comparable (that is, either there is a path from $v$ to $v'$ or vice-versa). 
\item Every node of $G$ has in-degree at most two and out-degree at most two.
\end{enumerate}

We also may label the nodes with some auxiliary information, for example we will record that the nodes of a witness subdag correspond to bad-events or nodes in a witness tree $\tau$.

We will use the same terminology as for witness trees: vertices on the ``bottom'' are close to the source nodes of $G$ (appearing earliest in time), and vertices on the ``top'' are close to the sink nodes of $G$ (appear latest in time).
\end{definition}

The witness subdags that we will be interested in are derived from witness trees in the following manner.
\begin{definition}[Projection of a witness tree]
For a witness tree $\tau$, we define the \emph{projection of $\tau$ onto permutation $\pi_k$} which we denote $\text{Proj}_k(\tau)$, as follows.

Suppose we have a node $v \in \tau$ which is labeled by some bad-event $B = (k_1, x_1, y_1), \dots, (k_r, x_r, y_r)$. For each $i$ with $k_i = k$, we create a corresponding node $v'_i \approx (x_i, y_i)$ in the graph $\text{Proj}_k(\tau)$. We also include some auxiliary information indicating that these nodes came from bad event $B$, and in particular that all such nodes are part of the same bad-event.

The edges of $\text{Proj}_k(\tau)$ are formed follows. For each node $v' \in \text{Proj}_k(\tau)$, labeled by $(x,y)$ and corresponding to $v \in \tau$, we find the node $w_x \in \tau$ (if any) which satisfies the following properties:
\begin{enumerate}
\item[(P1)] The depth of $w_x$ is smaller than the depth of $v$
\item[(P2)] $w_x$ is labeled by some bad-event $B'$ which contains $(k,x, *)$
\item[(P3)] Among all vertices satisfying (P1), (P2), the depth of $w_x$ is maximial
\end{enumerate}

If this node $w_x \in \tau$ exists, then it corresponds to a node $w_x' \in \text{Proj}_k(\tau)$ labeled $(k, x, *)$; we construct an edge from $v'$ to $w_x'$. Note that, since the levels of the witness tree are independent under $\sim$, there can be at most one such $w_x$ and at most one such $w_x'$.

We similary define a node $w_y$ satisfying:
\begin{enumerate}
\item[(P1')] The depth of $w_y$ is smaller than the depth of $v$
\item[(P2')] $w_y$ is labeled by some bad-event $B'$ which contains $(k,y, *)$
\item[(P3')] Among all vertices satisfying (P1), (P2), the depth of $w_y$ is maximial
\end{enumerate}
If this node exists, we create an edge from $v'$ to the corresponding $w_y' \in \text{Proj}_k(\tau)$ labeled $(k,*,y)$.

Note that since edges in $\text{Proj}_k(\tau)$ correspond to \emph{strictly} smaller depth in $\tau$, the graph $\text{Proj}_k (\tau)$ is acyclic. Also, note that it is possible that $w_x = w_y$; in this case we only add a single edge to $\text{Proj}_k(\tau)$.

\textbf{Expository Remark:} In the special case when each bad-event contains a single element, the witness subdag is a ``flattening" of the tree structure. Each node in the tree corresponds to a node in the witness subdag, and each node in the witness subdag points to the next highest occurrence of the domain and range variables.
\end{definition}

Basically, the projection of $\tau$ onto $k$ tells us all of the swaps of $\pi_k$ that occur. It also gives us some of the temporal information about these swaps that would have been available from $\tau$. If there is a path from $v$ to $v'$ in $\text{Proj}_k(\tau)$, then we know that the swap corresponding to $v$ must come before the swap corresponding to $v'$. It is possible that there are a pair of nodes in $\text{Proj}_k(\tau)$ which are incomparable, yet in $\tau$ there was enough information to deduce which event came first (because the nodes would have been connected through some other permutation). So $\text{Proj}_k(\tau)$ does discard some information from $\tau$, but it turns out that we will not need this information.

To prove Lemma~\ref{witness-tree-lemma}, we will prove (almost) the following claim: Let $G$ be a witness subdag for permutation $\pi_k$; suppose the nodes of $G$ are labeled with bad-events $B_1, \dots, B_s$. Then the probability that there is some $T > 0$ such that $G = \text{Proj}_k(\hat \tau^T)$, is at most
\begin{equation}
\label{y1}
P(\text{$G = \text{Proj}_k(\hat \tau^T)$ for some $T > 0$}) \leq P_k(B_1) \cdots P_k(B_s)
\end{equation}
where, for a bad-event $B$ we define $P_k(B)$ in a similar manner to $P_{\Omega}(B)$; namely that if the bad-event $B$ contains $r_k$ elements from permutation $k$, then we define $P_k(B) = \frac{(n_k - r_k)!}{n_k!}$.

Unfortunately, proving this directly runs into technical complications regarding the order of conditioning. It is simpler to just sidestep these issues. However, the reader should bear in mind (\ref{y1}) as the \emph{informal} motivation for the analysis in Section~\ref{perm-conditions}.

\section{The conditions on a permutation $\pi_{k^*}$ over time}
\label{perm-conditions}
In Section~\ref{perm-conditions}, we will fix a value $k^*$, and we will describe conditions that $\pi_{k^*}^t$ must satisfy at various times $t$ during the execution of the Swapping Algorithm. \emph{In this section, we are only analyzing a single permutation $k^*$. To simplify notation, the dependence on $k^*$ will be hidden henceforth; we will discuss simply $\pi, \text{Proj}(\tau)$, and so forth. }

This analysis can be divided into three phases.
\begin{enumerate}
\item We define the \emph{future-subgraph} at time $t$, denoted $G_t$. This is a kind of graph which encodes necessary conditions on $\pi^t$, in order for $\tau$ to appear, that is, for $\hat \tau^T = \tau$ for some $T > 0$. Importantly, these conditions, and $G_t$ itself, are independent of the precise value of $T$. We define and describe some structural properties of these graphs.
\item We analyze how a future-subgraph $G_t$ imposes conditions on the corresponding permutation $\pi^t$, and how these conditions change over time.
\item We compute the probability that the swapping satisfies these conditions.
\end{enumerate}

We will prove (1) and (2) in Section~\ref{perm-conditions}. In Section~\ref{total-prob-sec} we will put this together to prove (3) for all the permutations.
\subsection{The future-subgraph}
Suppose we have fixed a target graph $G$, which could hypothetically have been produced as the projection of $\hat \tau^T$ onto $k^*$. We begin the execution of the Swapping Algorithm and see if, so far, it is still possible that $G = \text{Proj}_{k^*}(\hat \tau^T)$, or if $G$ has been disqualified somehow. Suppose we are at time $t$ of this process; we will show that certain swaps must have already occurred at past times $t' < t$, and certain other swaps must occur at future times $t' > t$.  

We define the \emph{future-subgraph} of $G$ at time $t$, denoted $G_t$, which tells us all the future swaps that must occur.  

\begin{definition}[The future-subgraph]
\label{future-defn}
We define the future-subgraphs $G_t$ inductively. Initially $G_0 = G$.  When we run the Swapping Algorithm, as we encounter a bad-event $(k_1, x_1, y_1), \dots, (k_r, x_r, y_r)$ at time $t$, we form $G_{t+1}$ from $G_t$ as follows:
\begin{enumerate}
\item Suppose that $k_i = k^*$, and $G_t$ contains a source node $v$ labeled $(x_i,y_i)$. Then $G_{t+1} = G_t - v$.
\item Suppose that $k_i = k^*$, and $G_t$ has a source labeled $(x_i, y'')$ where $y'' \neq y_i$ or $(x'', y_i)$ where $x'' \neq x_i$. Then, as will be shown in Proposition~\ref{future-prop1}, we can immediately conclude $G$ is impossible; we set $G_{t+1} = \bot$, and we can abort the execution of the Swapping Algorithm.
\item Otherwise, we set $G_{t+1} = G_t$.
\end{enumerate}
\end{definition}

\begin{proposition}
\label{future-prop1}
For any time $t \geq 0$, let $\hat \tau^T_{\geq t}$ denote the witness tree built for the event at time $T$, but only using the execution log from time $t$ onwards. Then if $\text{Proj}(\hat \tau^T) = G$ we also have $\text{Proj}(\hat \tau^T_{\geq t}) = G_t$. 

Note that if $G_t = \bot$, the latter condition is obviously impossible; in this case, we are asserting that whenever $G_t = \bot$, it is impossible to have $\text{Proj}(\hat \tau^T) = G$.
\end{proposition}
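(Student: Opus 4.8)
The plan is to prove Proposition~\ref{future-prop1} by induction on $t$, simultaneously establishing that (a) the ``abort'' rule (case 2) is sound and (b) the ``delete a source'' rule (case 1) correctly tracks the projection. The base case $t = 0$ is immediate: $G_0 = G = \text{Proj}(\hat\tau^T_{\geq 0})$ by definition since the execution log from time $0$ onward is the whole log. For the inductive step, I would assume $\text{Proj}(\hat\tau^T_{\geq t}) = G_t$ (in particular $G_t \neq \bot$), look at the bad-event resampled at time $t$, and compare the witness tree $\hat\tau^T_{\geq t}$ built from the log starting at $t$ with the witness tree $\hat\tau^T_{\geq t+1}$ built from the log starting at $t+1$. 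The key combinatorial fact, inherited from the Moser--Tardos construction, is that $\hat\tau^T_{\geq t}$ is obtained from $\hat\tau^T_{\geq t+1}$ by possibly attaching the time-$t$ bad-event $B$ as a new node at the bottom of the tree (it gets attached precisely when $B \sim B'$ for some $B'$ already present, and then it becomes a lowest/deepest node among those it is comparable to). Projecting onto $k^*$, attaching $B$ at the bottom adds, for each element $(k^*, x_i, y_i)\in B$, a new source node to the subdag; so $\text{Proj}(\hat\tau^T_{\geq t})$ differs from $\text{Proj}(\hat\tau^T_{\geq t+1})$ exactly by the presence of these source nodes.

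From here the three cases of Definition~\ref{future-defn} fall out. In case 1, $G_t$ has a source node $v \approx (x_i,y_i)$; since $\text{Proj}(\hat\tau^T_{\geq t}) = G_t$ has this source and it must be the node contributed by the time-$t$ resampling of $B$ (a source labeled $(x_i,y_i)$ that is comparable to everything sharing a coordinate — there can be only one, by property (1) of witness subdags together with the level-independence of the witness tree), removing it yields exactly $\text{Proj}(\hat\tau^T_{\geq t+1})$, which is the definition of $G_{t+1}$. In case 3 (no element of $B$ lies in $\pi_{k^*}$, or more precisely none of the relevant source situations of cases 1–2 occur), the time-$t$ event contributes no source to the $k^*$-projection that survives, so $\text{Proj}(\hat\tau^T_{\geq t}) = \text{Proj}(\hat\tau^T_{\geq t+1})$, i.e. $G_{t+1} = G_t$. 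The subtle case is case 2: $G_t$ has a source $(x_i, y'')$ with $y'' \neq y_i$, or $(x'', y_i)$ with $x'' \neq x_i$, but no source $(x_i,y_i)$. I would argue that if $\text{Proj}(\hat\tau^T_{\geq t})=G_t$, then $B$ contributes a source node labeled $(x_i,y_i)$ to $\text{Proj}(\hat\tau^T_{\geq t})$; this node shares the coordinate $x_i$ (resp.\ $y_i$) with the assumed source, hence by property (1) of witness subdags the two source nodes would have to be comparable — but two distinct sources are incomparable, a contradiction. Therefore $\text{Proj}(\hat\tau^T_{\geq t})=G_t$ is impossible, so a fortiori $\text{Proj}(\hat\tau^T)=G$ is impossible, which is exactly the assertion that $G_{t+1}=\bot$ is correct.

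The main obstacle I anticipate is being careful about what it means for $B$ to ``contribute a source node'' to the projection of $\hat\tau^T_{\geq t}$, and why it is really a source. The point is that $B$, when attached, becomes a node of minimum depth among the nodes it is $\sim$-comparable to; in the projection, a node $v'$ coming from $(k^*,x,y)\in B$ points only to nodes $w_x, w_y$ of \emph{strictly smaller} depth in $\tau$ satisfying (P2)/(P2'), and since $B$ sits at the bottom there are no such nodes — so $v'$ is a source. Equally, one must check that no \emph{other} node of the projection acquires an edge into $v'$ that would destroy its source status, which again follows because edges go from larger depth to strictly smaller depth and $v'$ already has the smallest depth. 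I would also need to handle the bookkeeping that $\hat\tau^T_{\geq t}$ might equal $\hat\tau^T_{\geq t+1}$ when $B$ is not attached at all (when $B \not\sim B'$ for every $B'$ currently in the tree); this is harmless and lands in case 3. One should also note, for case 1, uniqueness of the source labeled $(x_i,y_i)$: if there were two, they would overlap in both coordinates and be incomparable, violating property (1); this justifies writing ``$G_{t+1} = G_t - v$'' unambiguously. Everything else is routine tree/DAG surgery modeled on the Moser--Tardos witness-tree argument.
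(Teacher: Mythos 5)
Your proposal is correct and follows essentially the same inductive structure as the paper: compare $\hat\tau^T_{\geq t}$ with $\hat\tau^T_{\geq t+1}$, observe that the former is obtained from the latter by (possibly) appending $B$ at the bottom so that its elements become source nodes in the projection, and match this against the three update rules of Definition~\ref{future-defn}. The only cosmetic difference is that you derive the soundness of the abort rule (case 2) by an explicit contradiction via property (1) of witness subdags, whereas the paper observes post hoc that this case is never reached under the hypothesis $\text{Proj}(\hat\tau^T)=G$; these are the same argument in different packaging.
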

\begin{proof}
We omit $T$ from the notation, as usual. We prove this by induction on $t$. When $t = 0$, this is obviously true as $\hat \tau_{\geq 0} = \hat \tau$ and $G_0 = G$.

Suppose we have $\text{Proj}(\hat \tau) = G$; at time $t$ we encounter a bad-event $B = (k_1, x_1, y_1), \dots, (k_r, x_r, y_r)$. By inductive hypothesis, $\text{Proj}(\hat \tau_{\geq t}) = G_t$. 

Suppose first that $\hat \tau_{\geq t+1}$ does not contain any bad-events $B' \sim B$. Then, by our rule for building the witness tree, we have $\hat \tau_{\geq t} = \hat \tau_{\geq t+1}$. Hence we have $G_t = \text{Proj}(\hat \tau_{\geq t+1})$.  When we project this graph onto permutation $k$, there cannot be any source node labeled $(k, x, y)$ with $(x,y) \sim (x_i, y_i)$ as such node would be labeled with $B' \sim B$. Hence, according to our rules for updating $G_t$, we have $G_{t+1} = G_t$. So in this case we have $\hat \tau_{\geq t} = \hat \tau_{\geq t+1}$ and $G_t = G_{t+1}$ and $\text{Proj}(\hat \tau_{\geq t}) = G_t$; it follows that $\text{Proj}(\hat \tau_{\geq t+1}) = G_{t+1}$ as desired.

Next, suppose $\hat \tau_{\geq t+1}$ does contain $B' \sim B$. Then bad-event $B$ will be added to $\hat \tau_{\geq t}$, placed below any such $B'$. When we project $\hat \tau_{\geq t}$, then for each $i$ with $k_i = k^*$ we add a node $(x_i, y_i)$ to $\text{Proj}(\hat \tau_{\geq t})$. Each such node is necessarily a source node; if such a node $(x_i, y_i)$ had a predecessor $(x'', y'') \sim (x_i, y_i)$, then the node $(x'', y'')$ would correspond to an event $B'' \sim B$ placed below $B$. Hence we see that $\text{Proj}(\hat \tau_{\geq t})$ is obtained from $\text{Proj}(\hat \tau_{\geq t})$ by adding source nodes $(x_i,y_i)$ for each $(k^*, x_i, y_i) \in B$. 

So $\text{Proj}(\hat \tau_{\geq t}) = \text{Proj}(\hat \tau_{\geq t+1})$ plus the addition of source nodes for each $(k^*, x_i, y_i)$. By inductive hypothesis, $G_t = \text{Proj}(\hat \tau_{\geq t})$, so that $G_t = \text{Proj}(\hat \tau_{\geq t+1})$ plus source nodes for each $(k^*, x_i, y_i)$. Now our rule for updating $G_{t+1}$ from $G_t$ is to remove all such source nodes, so it is clear that $G_{t+1} = \text{Proj}(\hat \tau_{\geq t+1})$, as desired.

Note that in this proof, we assumed that $\text{Proj}(\hat \tau) = G$, and we never encountered the case in which $G_{t+1} = \bot$. This confirms our claim that whenever $G_{t+1} = \bot$ it is impossible to have $\text{Proj}(\hat \tau) = G$.
\end{proof}

By Proposition~\ref{future-prop1}, the witness subdag $G$ and the future-subgraphs $G_t$ have a similar shape; they are all produced by projecting witness trees of (possibly truncated) execution logs. Note that if $G = \text{Proj}(\tau)$ for some tree $\tau$, then for any bad-event $B \in \tau$, either $B$ is not represented in $G$, or all the pairs of the form $(k^*, x, y) \in B$ are represented in $G$ and are incomparable there.

The following structural decomposition of a witness subdag $G$ will be critical.
\begin{definition}[Alternating paths] Given a witness subdag $G$, we define an \emph{alternating path} in $G$ to be a simple path which alternately proceeds forward and backward along the directed edges of $G$. For a vertex $v \in G$, the \emph{forward (respectively backward) path} of $v$ in $G$, is the maximal alternating path which includes $v$ and all the forward (respectively backward) edges emanating from $v$. Because $G$ has in-degree and out-degree at most two, every vertex $v$ has a unique forward and backward path (up to reflection); this justifies our reference to ``the" forward and backward path. These paths may be even-length cycles.

Note that if $v$ is a source node, then its backward path contains just $v$ itself. This is an important type of alternating path which should always be taken into account in our definitions.
\end{definition}

One type of alternating path, which is referred to as the \emph{W-configuration}, plays a particularly important role.

\begin{definition}[The W-configuration]
Suppose $v \approx (x, y)$ has in-degree at most one, and the backward path contains an \emph{even} number of edges, terminating at vertex $v' \approx (x', y')$. We refer to this alternating path as a \emph{W-configuration}. (See Figure~\ref{fig1}.)

Any W-configuration can be written (in one of its two orientations) as a path of vertices labeled $$
(x_0, y_1), (x_1, y_1), (x_1, y_2), \dots, (x_s, y_s), (x_s, y_{s+1});
$$ here the vertices $(x_1, y_1), \dots, (x_s, y_s)$ are at the ``base'' of the W-configuration. Note here that we have written the path so that the $x$-coordinate changes, then the $y$-coordinate, then $x$, and so on. When written this way, we refer to $(x_0, y_{s+1})$ as the \emph{endpoints} of the W-configuration.

If $v \approx (x,y)$ is a source node, then it defines a W-configuration with endpoints $(x,y)$. This should not be considered a triviality or degeneracy, rather it will be the most important type of W-configuration.
\end{definition}

\begin{figure}[H]
\begin{center}
\includegraphics[trim = 0.5cm 21.5cm 6.5cm 5cm,scale=0.5,angle = 0]{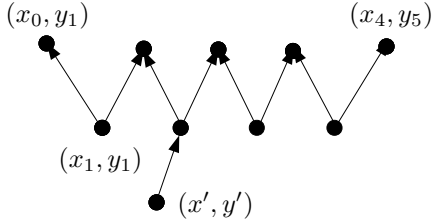}
\put(-200,70){$(x_0,y_1)$}
\put(-180,15){$(x_1,y_1)$}
\put(-135,-2){$(x', y')$}
\put(-70,70){$(x_4,y_5)$}
\caption{The vertices labeled $(x_0,y_1), (x_1, y_1), \dots, (x_4, y_5)$ form a W-configuration of length 9 with endpoints $(x_0, y_5)$. Note that the vertex $(x',y')$ is \emph{not} part of this W-configuration.
\label{fig1}}
\end{center}
\end{figure}

\subsection{The conditions on $\pi^t_{k^*}$ encoded by $G_t$}

At any $t$, the future-subgraph $G_t$ gives certain necessary conditions on $\pi$ in order for some putative $\tau$ to appear. Proposition~\ref{future-prop2} describes a certain set of conditions that plays a key role in the analysis.

\begin{proposition}
\label{future-prop2}
For any graph $G$ and integers $t \leq T$, the following condition is necessary to have $G = \text{Proj}(\hat \tau_{\geq t}^T)$:

\emph{For every W-configuration in $G_t$ with endpoints $(x_0, y_{s+1})$, we must have $\pi^t(x_0) = y_{s+1}$},
where $\pi^t$ denotes the value of the permutation at time $t$.

For example, if $v \approx (x,y)$ is a source node of $G_t$, then $\pi^t(x) = y$.
\end{proposition}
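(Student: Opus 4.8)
The plan is to prove Proposition~\ref{future-prop2} by induction on $t$, running in parallel with the structural facts already established in Proposition~\ref{future-prop1}. The base case $t=0$ is immediate: $G_0 = G = \text{Proj}(\hat\tau^T)$, and a W-configuration with endpoints $(x_0,y_{s+1})$ in $G$ corresponds to a chain of bad-events in $\tau$; tracing back through the edge definition (P1)--(P3) in the projection, the source node at the bottom of the W is labeled by a bad-event that must have been true of the \emph{initial} configuration (this is essentially the singleton-witness-tree Corollary applied along the base of the W), and the alternating structure then forces $\pi^0(x_0)=y_{s+1}$ by composing the constraints $\pi^0(x_i)=y_i$ and $\pi^0(x_i)=y_{i+1}$... wait — more carefully, the base vertices give $\pi^0(x_1)=y_1,\ \pi^0(x_1)=y_2$ which would be a contradiction, so the correct reading is that the W-configuration condition is about what the base vertices being \emph{present in $G$} (hence not-yet-resampled) tells us, which I will need to unwind from the meaning of source nodes. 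The cleanest route is: a W-configuration with endpoints $(x_0,y_{s+1})$ persisting in $G_t$ means none of its swaps have happened yet, and I claim inductively that this forces $\pi^t(x_0)=y_{s+1}$.

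For the inductive step, suppose the claim holds at time $t$ and we encounter bad-event $B=(k_1,x_1,y_1),\dots,(k_r,x_r,y_r)$ at time $t$, with $G_{t+1}\neq\bot$. I would consider a W-configuration $P$ in $G_{t+1}$ with endpoints $(x_0,y_{s+1})$ and show $\pi^{t+1}(x_0)=y_{s+1}$. Since $G_{t+1}$ is obtained from $G_t$ by deleting the source nodes corresponding to the elements $(k^*,x_i,y_i)\in B$ (case (1) of Definition~\ref{future-defn}; case (2) is excluded since $G_{t+1}\neq\bot$), the path $P$ is also a path in $G_t$, but possibly with \emph{new} source nodes at its ends (the deleted nodes were the predecessors). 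So $P$, viewed in $G_t$, is a sub-alternating-path of some W-configuration $P'$ of $G_t$, or $P$ together with one or two deleted end-nodes forms a W-configuration $P'$ of $G_t$. The key case analysis: either the endpoints of $P$ were already endpoints/source-structure in $G_t$ (then I invoke the inductive hypothesis and argue the swap at time $t$ didn't disturb $\pi(x_0)$ or $\pi^{-1}(y_{s+1})$ — using that $x_0, y_{s+1}$ don't appear in $B$ restricted to $k^*$, by the comparability axiom (1) of witness subdags and the fact that the deleted nodes are sources), or the deleted source node $(x_i,y_i)$ of $B$ was adjacent to $P$ in $G_t$, in which case $P$ extended by $(x_i,y_i)$ is a W-configuration $P'$ of $G_t$ with endpoints $(x_0, y_i)$ or $(x_i, y_{s+1})$; by induction $\pi^t$ maps that accordingly, and then I must show the swap $\text{Swap}(\pi_{k^*};\dots)$ sends this to $\pi^{t+1}(x_0)=y_{s+1}$, using the explicit structure of how the bad-event's elements lie on the base of $P'$ and how the swapping subroutine moves exactly those entries.

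The main obstacle, as I see it, is bookkeeping the interaction between the deleted source nodes and the swapping operation: I need to show that when the W-configuration of $G_t$ has its \emph{base} partially resampled by $B$ at time $t$, the swap performed is \emph{exactly} the one that, applied to the old configuration $\pi^t$ (which satisfies the old W-condition), produces a new configuration $\pi^{t+1}$ satisfying the truncated W-condition. This requires understanding the geometry of a W-configuration: the base vertices alternate $x$-coordinate and $y$-coordinate matches, so $\pi^t$ restricted to $\{x_0,x_1,\dots,x_s\}$ is pinned down as a specific partial injection, and removing a prefix/suffix of the base corresponds exactly to composing with the transpositions the Swap subroutine applies. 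I would isolate this as a sub-lemma: ``if the vertices $v_1,\dots,v_j$ at the base end of a W-configuration in $G_t$ are labeled by elements of the resampled bad-event $B$, then after resampling, the remaining path is a W-configuration whose endpoint condition is implied by the old endpoint condition together with the definition of Swap.'' The other, more routine, obstacle is handling the case where the W-configuration is an even cycle rather than a genuine path, and the degenerate source-node W-configurations; these should follow the same template but need to be checked separately. I would also remark that in the single-element special case the whole argument collapses to: a source node $(x,y)$ of $G_t$ demands $\pi^t(x)=y$, which holds because no resampling touching row $x$ or column $y$ has yet occurred (else that bad-event would sit below in the tree, contradicting $(x,y)$ being a source), exactly as in the singleton Corollary.
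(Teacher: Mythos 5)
There is a genuine gap here, and it is structural rather than a matter of bookkeeping. You induct forward on $t$, but the base case $t=0$ is already the full strength of the proposition: for a W-configuration of positive length, say $(x_0,y_1),(x_1,y_1),(x_1,y_2)$ with endpoints $(x_0,y_2)$, the claim $\pi^0(x_0)=y_2$ pairs a domain element and a range element that never occur together in any single node of $G$, so no amount of ``these swaps have not happened yet'' reasoning yields it; your own write-up of the base case ends in an acknowledged contradiction and a promissory note. The only way to see $\pi^0(x_0)=y_2$ is to look \emph{forward}: at the future time $t'$ when the base source $(x_1,y_1)$ is consumed by a resampling, the resulting new singleton W-configurations force $\pi^{t'+1}(x_0)=y_1$ and $\pi^{t'+1}(x_1)=y_2$, which (unwinding the Swap subroutine) forces $\pi^{t'}(x_0)=y_2$, and Proposition~\ref{change-prop} then carries this back to time $0$. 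This is why the paper inducts on the length $s$ of the W-configuration, jumping ahead to the resampling that destroys its base and applying the inductive hypothesis to the shorter W-configurations created at that future time, rather than inducting on $t$.

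The second problem is your proposed sub-lemma for the inductive step: that the truncated W-condition on $\pi^{t+1}$ is ``implied by the old endpoint condition together with the definition of Swap.'' It is not. The swap is random, and the new active conditions created in $G_{t+1}$ (cases (2) and (3) of Lemma~\ref{active-change-lemma}) are precisely \emph{constraints on the random swap} that hold only with some probability --- computing that probability is the entire content of Propositions~\ref{exchange-prop2a} and~\ref{exchange-prop2}. In the singleton special case: deleting a source $(x_1,y_1)$ whose forward path creates a new W-configuration with endpoints $(x_1,Y')$ requires $\pi^{t+1}(x_1)=Y'$, but the swap sends $\pi(x_1)$ to a uniformly random value, so this holds with probability $1/n$, not deterministically. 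The necessity of these new conditions flows backward from the nodes of $G$ lying \emph{above} the deleted sources, i.e.\ from the future of the execution; a forward induction on $t$ has no access to that information, which is exactly the leverage the paper's induction on $s$ is designed to provide.
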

\begin{proof}
We prove this by induction on $s$. The base case is $s = 0$; in this case we have a source node $(x, y)$. Suppose $\pi^t(x) \neq y$. In order for $\hat \tau^T$ to contain some bad-event containing $(k^*, x,y)$, we must at some point $t' > t$ have $\pi^{t'} (x) = y$; let $t'$ be the minimal such time. By Proposition~\ref{change-prop}, we must encounter a bad-event containing $(k^*, x, *)$ or $(k^*, *, y)$ at some intervening time $t'' < t'$. If this bad-event contains $(k^*,x,y)$ then necessarily $\pi^{t''}(x) = y$ contradicting minimality of $t'$. So there is a bad-event $(k^*,x, \neq y)$ or $(k^*, \neq x, y)$ earlier than the earliest occurrence of $\pi(x) = y$. This event $(k^*, x, \neq y)$ or $(k^*, \neq x, y)$ projects to a source node $(x, \neq y)$ or $(\neq x, y)$ in $G_t$. But then $(x,y)$ cannot also be a source node of $G_t$. 

We now prove the induction step. Suppose we have a W-configuration with base $(x_1, y_1), \dots, (x_s, y_s)$, and suppose the endpoints of this W-configuration are vertices $v, v'$ labeled $(x_0, y_1$ and $(x_s, y_{s+1})$ respectively.

At some future time $t' \geq t$ we must encounter a bad-event $B$ involving some subset of the source nodes, say that $B$ includes $(x_{i_1}, y_{i_1}), \dots, (x_{i_r}, y_{i_r})$ for $1 \leq r \leq s$. As these were necessarily source nodes, we had $\pi^{t'} (x_{i_1}) =  y_{i_1}, \dots, \pi^{t'}(x_{i_r}) = y_{i_r}$. After the swaps, these source nodes are removed and so the updated $G_{t'+1}$ has $r+1$ new W-configurations, whose length is all smaller than $s$. By inductive hypothesis, the updated permutation $\pi^{t'+1}$ must then satisfy $\pi^{t'+1}(x_0) = y_{i_1}, \pi^{t'+1}(x_{i_1}) = y_{i_2}, \dots, \pi^{t'+1}(x_{i_r}) = y_{s+1}$.

By Proposition~\ref{swap-invariant-prop2}, we may suppose without loss of generality that the resampling of the bad event first swaps $x_{i_1}, \dots, x_{i_r}$ in that order. Let $\pi'$ denote the result of these swaps; there may be additional swaps to other elements of the permutation, but we must have $\pi^{t'+1}(x_{i_l}) = \pi'(x_{i_l})$ for $l = 1, \dots, r$.

In this case, we see that evidently $x_{i_1}$ swapped with $x_{i_2}$, then $x_{i_2}$ swapped with $x_{i_3}$, and so on, until eventually $x_{i_r}$ was swapped with $x'' = (\pi^{t'})^{-1} y_{s+1}$. At this point, we have $\pi' ( x''  ) = y_{i_1}$. Later swaps during time $t'$ may swap $x''$ with some other $x$, where $(x,y) \in B$. Thus, at time $t'+1$ we either have $\pi^{t'+1}(x'') = y_{i_1}$ or $\pi^{t'+1}(x) = y_{i_1}$ where $(x,y) \in B$. Recall that $\pi^{t'+1}(x_0) = y_{i-1}$; thus either $x'' = x_0$ or $x = x_0$. 

In the latter case, $(x_0, y) \in B$. Thus implies that, when we encounter the bad-event $B$ at time $t'$, there is a source node labeled $(x_0, y) \in G_{t'}$. This node $(x_0, y)$ would also be a node in the graph $G_t$; thus $v$ has two in-neighbors in $G_t$ labeled $(x_0, y)$ and $(x_1, y_1)$, which contradicts that it is part of a W-configuration of $G_t$.

Thus, we conclude that $x'' = x_0$. This implies that we must have $(\pi^{t'})^{-1} y_s = x'' = x_0$; that is, that $\pi^{t'}(x_0) = y_s$. This in turn implies that  $\pi^t(x_0) = y_{s+1}$. For, by Proposition~\ref{change-prop}, otherwise we would have encountered a bad-event involving $(x_0, *)$ or $(*, y_{s+1})$; these would imply an additional in-neighbor of either $v$ or $v'$ respectively, which contradicts that it is part of a W-configuration of $G_t$.
\end{proof}

Proposition~\ref{future-prop2} can be viewed equally as a definition:
\begin{definition}[Active conditions of a future-subgraph]
We refer to the conditions implied by Proposition~\ref{future-prop2} as the \emph{active conditions} of the graph $G_t$. More formally, we define
$$
\text{Active}(G) = \{ (x, y)  \mid \text{ $(x,y)$ are the end-points of a $W$-configuration of $G$} \}
$$
We also define $A^{t}_k$ to be the cardinality of $\text{Active}(G_t)$, that is, the number of active conditions of permutation $\pi_k$ at time $t$. (The subscript $k$ may be omitted in context, as usual.)
\end{definition}

When we remove source nodes $(x_1, y_1),  \dots, (x_r, y_r)$ from $G_t$, the new active conditions of $G_{t+1}$ are related to $(x_1, y_1), \dots, (x_r, y_r)$ in a particular way.
\begin{lemma}
\label{active-change-lemma}
Suppose $G$ is a future-subgraph with source nodes $v_1 \approx (x_1, y_1), \dots, v_r \approx (x_r, y_r)$. Let $H = G - v_1 - \dots - v_r$ denote the graph obtained from $G$ by removing these source nodes. Then there is a set $Z \subseteq \{ (x_1, y_1), \dots, (x_r, y_r) \}$ with the following properties:
\begin{enumerate}
\item There is an \emph{injective} function $f : Z \rightarrow \text{Active}(H)$, with the property that $(x,y) \sim f( (x,y) )$ for all $(x,y) \in Z$
\item $|\text{Active}(H)| = |\text{Active}(G)| - (r - |Z|)$
\end{enumerate}

\textbf{Expository remark:} We have recommended bearing in mind the special case when each bad-event consists of a single element. In this case, we would have $r = 1$; and the stated theorem would be that either $|\text{Active}(H)| = |\text{Active}(G)| - 1$; OR we have  $|\text{Active}(H)| = |\text{Active}(G)| $ and $(x_1, y_1) \sim (x'_1, y'_1) \in \text{Active}(H)$.

Intuitively, we are saying that every node $(x, y)$ we are removing is either explicitly constrained in an ``independent way" by some new condition in the graph $H$ (corresponding to $Z$), or it is almost totally unconstrained. We will never have the bad situation in which a node $(x,y)$ is constrained, but in some implicit way depending on the previous swaps.

\end{lemma}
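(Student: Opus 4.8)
The plan is to analyze how the set of W-configurations (hence active conditions) changes when the $r$ source nodes $v_1,\dots,v_r$ are deleted. Recall that $\text{Active}(G)$ is the set of endpoint-pairs of W-configurations, and that a source node $v_i \approx (x_i,y_i)$ is itself a (degenerate) W-configuration with endpoints $(x_i,y_i)$, so $(x_1,y_1),\dots,(x_r,y_r) \in \text{Active}(G)$. I would first describe the local structure around a source node: since $G$ has in-degree and out-degree at most two and the $v_i$ are sources, each $v_i$ has out-degree at most two, pointing to at most one node sharing its $x$-coordinate and at most one sharing its $y$-coordinate. Deleting $v_i$ can therefore affect at most those two out-neighbors, and in particular can only destroy or create W-configurations whose alternating path passed through $v_i$ or terminated adjacent to it.

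The core of the argument is a bijection/injection bookkeeping. For each removed source $v_i \approx (x_i,y_i)$: the alternating paths in $G$ through $v_i$ are its forward path (a W-configuration with one endpoint $(x_i,y_i)$, or possibly a cycle) and the trivial backward path. After deletion, the forward path of $v_i$ in $G$ gets ``cut'': its far endpoint $(x_i', y_i')$ becomes (part of) a new W-configuration endpoint in $H$, unless that path was actually a cycle or was also anchored at another removed source. I would set $Z$ to be exactly the set of those $(x_i,y_i)$ for which $v_i$'s forward path in $G$, after removing all of $v_1,\dots,v_r$, yields a genuine new W-configuration endpoint $f((x_i,y_i))$ in $H$; by construction $f((x_i,y_i))$ shares a coordinate with $v_i$'s out-neighbor along that path, and chasing the alternating path one checks $(x_i,y_i)\sim f((x_i,y_i))$ — more carefully, the out-neighbor of $v_i$ shares a coordinate with $v_i$, and the W-configuration endpoint inherits \emph{that same} coordinate (this is exactly the structure in the displayed $(x_0,y_1),(x_1,y_1),\dots$ parametrization of a W-configuration, where consecutive endpoints of sub-configurations share alternating coordinates). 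Injectivity of $f$ follows because two distinct removed sources cannot lie on the same maximal alternating path on the same side without one being in the backward path of the other, contradicting that both are sources; and a careful count of how many W-configuration endpoints are lost versus gained gives property (2): each removed source destroys its own active condition $(x_i,y_i)$, and restores a new one in $H$ iff $(x_i,y_i)\in Z$, so $|\text{Active}(H)| = |\text{Active}(G)| - r + |Z|$.

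The main obstacle I anticipate is the case analysis for W-configurations that are \emph{not} anchored at a removed source but merely \emph{pass through} a neighborhood of one, and the degenerate cases where the relevant alternating path is an even cycle, or where two removed sources $v_i, v_j$ interact (e.g. both point into the same node, or the path from $v_i$ ends at $v_j$). In those cases a W-configuration may be destroyed without any compensating creation, which is precisely why we only get an injection $f: Z \to \text{Active}(H)$ rather than a bijection onto all of $\text{Active}(H)$, and why the count in (2) is stated as an inequality-like equality with the slack $r - |Z|$. I would handle this by arguing that each maximal alternating path of $G$ contributes in a controlled way: a path containing $k$ of the removed sources loses those $k$ trivial configurations and its (at most one, or two if it was open at both such ends) long W-configuration, and contributes back the appropriate number of shortened W-configurations to $H$; summing over all maximal alternating paths and matching terms yields both the injection and the cardinality identity. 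The single-element special case in the expository remark is the sanity check: $r=1$, the forward path of $v_1$ is a W-configuration whose far endpoint $(x_1',y_1')$ either exists (then $Z = \{(x_1,y_1)\}$, $f$ sends it to $(x_1',y_1')$, and $|\text{Active}(H)| = |\text{Active}(G)|$) or does not (then $Z = \emptyset$ and $|\text{Active}(H)| = |\text{Active}(G)| - 1$).
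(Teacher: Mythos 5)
Your plan — classify the maximal alternating paths of $G$, count W-configurations lost and gained when all $r$ source nodes are deleted at once, and read off $Z$ and $f$ from the pieces that survive — is a genuinely different organization from the paper's, which instead removes the $v_i$ one at a time, tracking $H_i = G - v_1 - \cdots - v_i$ and updating $Z^i, f^i$ incrementally with a three-way case split on the forward path of $v_i$ in $H_{i-1}$. But your proposal as written has a concrete error at the crux: the claim that "two distinct removed sources cannot lie on the same maximal alternating path... without one being in the backward path of the other, contradicting that both are sources" is false. Every source has a trivial backward path, so the claimed contradiction never arises, and in fact two (or more) of the $v_i$ can perfectly well sit at the base of a single W-configuration. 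For instance, take $v_1 \approx (1,1)$, $v_2 \approx (2,2)$ and nodes $(4,1),(1,2),(2,3)$ above them so that $(4,1),(1,1),(1,2),(2,2),(2,3)$ is one W-configuration with endpoints $(4,3)$; both $v_1,v_2$ lie at its base. Your injectivity argument gives nothing here, and the bookkeeping you postpone to "a careful count... over all maximal alternating paths" is exactly where the real work lies.

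This multi-source-on-one-path scenario is also where the paper's iterative scheme earns its keep, and where a subtlety appears that your one-shot construction does not address: when $v_i$ is removed and its forward path in $H_{i-1}$ is itself a W-configuration, that W-configuration may be the one a \emph{previously} removed $v_l$ was mapped to under $f^{i-1}$. The paper then splits the endpoint pair $(X_0,Y_{s+1})$ into $(X_0,Y_j)$ and $(X_j,Y_{s+1})$ and re-routes $f$: if $x_l = X_0$ it sends $v_l \mapsto (X_0,Y_j)$ and $v_i \mapsto (X_j,Y_{s+1})$, and symmetrically if $y_l = Y_{s+1}$, so that both the $\sim$ constraint and injectivity are preserved. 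A global approach that defines $f$ by "the new endpoint $v_i$'s removal yields" must make an explicit, consistent choice of which neighboring piece each base vertex claims (e.g., always the piece on its left) and then verify injectivity and the coordinate-sharing property path by path, including the degenerate cycle case and the case where several $v_i$ are adjacent on the base. None of that is written out, and the one piece of the argument you do state in full — the injectivity claim — is the piece that is wrong. I'd recommend either filling in the path-by-path accounting with an explicit assignment rule, or adopting the paper's one-at-a-time strategy, which makes the injectivity and re-mapping local and checkable.
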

\begin{proof}
Let $H_i$ denote the graph $G - v_1 - \dots - v_i$.
We will recursively build up set $Z^i$ and functions $f^i: Z^i \rightarrow \text{Active}(H)$, where $Z_i \subseteq \{(x_1, y_1), \dots, (x_i, y_i)\}$, and which satisfy the given conditions up to stage $i$.

Now, suppose we remove the source node $v_i$ from $H_{i-1}$. Observe that $(x_i, y_i) \in \text{Active}(H_{i-1})$, but (unless there is some other vertex with the same label in $G$), $(x_i, y_i) \not \in \text{Active}(H_i)$. Thus, the most obvious change when we remove $v_i$ is that we destroy the active condition $(x_i, y_i)$. This may add or subtract other active conditions as well.  

We will need to update $Z^{i-1}, f^{i-1}$. Most importantly, $f^{i-1}$ may have mapped $(x_j, y_j)$ for $j < i$, to an active condition of $H_{i-1}$ which is destroyed when $v_i$ is removed. In this case, we must re-map this to a new active condition.    Note that we cannot have $f^{i-1}(x_j, y_j) = (x_i, y_i)$ for $j < i$, as $x_i \neq x_j$ and $y_i \neq y_j$.

There are now a variety of cases depending on the forward-path of $v_i$ in $H_{i-1}$. 

\begin{enumerate}
\item This forward path consists of a cycle, or the forward path terminates on both sides in forward-edges. This is the easiest case. Then no more active conditions of $H_{i-1}$ are created or destroyed. We update $Z^i = Z^{i-1}, f^i = f^{i-1}$. One active condition is removed, in net, from $H_{i-1}$; hence $|\text{Active}(H_i)| = |\text{Active}(H_{i-1}) | - 1$.

\item This forward path contains a forward edge on one side and a backward edge on the other. For example, suppose the path has the form $(X_1, Y_1), (X_1, Y_2), (X_2, Y_2), \dots, (X_s, Y_{s+1})$, where the vertices $(X_1, Y_1), \dots, (X_s, Y_s)$ are at the base, and the node $(X_1, Y_1)$ has out-degree 1, and the node $(X_s, Y_{s+1})$ has in-degree 1. Suppose that $(x_i, y_i) = (X_j, Y_j)$ for some $j \in \{1, \dots, s \}$. (See Figure~\ref{fig2}.) In this case, we do not destroy any W-configurations, but we create a new W-configuration with endpoints $(X_j, Y_{s+1}) = (x_i, Y_{s+1})$.

We now update $Z^i = Z^{i-1} \cup \{(x_i, y_i) \}$. We define $f^i = f^{i-1}$ plus we map $(x_i, y_i)$ to the new active condition $(x_i, Y_{s+1})$. In net, no active conditions were added or removed, and $|\text{Active}(H_i)| = |\text{Active}(H_{i-1})|$.
\begin{figure}[H]
\begin{center}
\includegraphics[trim = -3.5cm 21.5cm 6.5cm 5cm,scale=0.5,angle = 0]{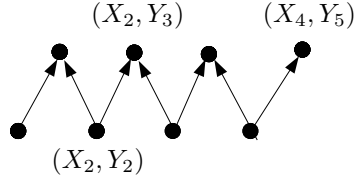}
\put(-185,15){$(X_2,Y_2)$}
\put(-170,70){$(X_2,Y_3)$}
\put(-105,70){$(X_4,Y_5)$}
\caption{When we remove $(X_2, Y_2)$, we create a new W-configuration with endpoints $(X_2, Y_5)$.
\label{fig2}}
\end{center}
\end{figure}

\item This forward path was a W-configuration $(X_0, Y_1), (X_1, Y_1), \dots, (X_s, Y_s), (X_s, Y_{s+1})$ with the pairs $(X_1, Y_1), \dots, (X_s, Y_s)$ on the base, and we had $(x_i, y_i) = (X_j, Y_j)$. This is the most complicated situation; in this case, we destroy the original W-configuration with endpoints $(X_0, Y_{s+1})$ but create two new W-configurations with endpoints $(X_0, Y_j)$ and $(X_j, Y_{s+1})$. We update $Z^i = Z^{i-1} \cup \{(x_i, y_i) \}$. We will set $f^i = f^{i-1}$, except for a few small changes as follows.

Now, suppose $f^{i-1} (x_l, y_l) = (X_0, Y_{s+1})$ for some $l < i$; so either $x_l = X_0$ or $y_l = Y_{s+1}$. If it is the former, we set $f^i(x_l, y_l) = (X_0, Y_j), f^i(x_i, y_i) = (X_j, Y_{s+1})$. If it is the latter, we set $f^i(x_l, y_l) = (X_j, Y_{s+1}), f^i(x_i, y_i) = (X_0, Y_j).$. If $(f^{i-1})^{-1} (X_0, Y_{s+1}) = \emptyset$ then we simply set $f^i(x_i, y_i) = (X_0, Y_j)$.

In any case, $f^i$ is updated appropriately, and in the net no active conditions are added or removed, so we have $|\text{Active}(H_i)| = |\text{Active}(H_{i-1})|$.
\end{enumerate}
\end{proof}

\section{The probability that the swaps are all successful}
\label{total-prob-sec}
In the previous sections, we determined necessary conditions for the permutations $\pi^t$, depending on the graphs $G_t$. In this section, we finish by computing the probability that the swapping subroutine causes the permutations to, in fact, satisfy all such conditions.

Proposition~\ref{exchange-prop2a} states the key randomness condition satisfied by the swapping subroutine.  The basic intuition behind this is as follows: suppose $\pi: [n] \rightarrow [n]$ is a fixed permutation with $\pi(x) = y$, and we call $\pi' = \text{Swap}(\pi;x_1, \dots, x_r)$. Then $\pi'(x_1)$ has a uniform distribution over $[n]$. Similarly, $\pi'^{-1}(y_1)$ has a uniform distribution over $[n]$. However, the joint distribution is \emph{not} uniform --- there is essentially only one degree of freedom for the two values. In general, any subset of the variables $\pi'(x_1), \dots, \pi'(x_r), \pi'^{-1}(y_1), \dots, \pi^{-1}(y_r)$ will have the uniform distribution, \emph{as long as the subset does not simultaneously contain $\pi'(x_i), \pi'^{-1} (y_i)$ for some $i \in [r]$.}

\begin{proposition}
\label{exchange-prop2a}
Suppose $n,r,s,q$ are non-negative integers obeying the following constraints:
\begin{enumerate}
\item $0 \leq s \leq \min(q,r)$
\item $q + (r-s) \leq n$
\end{enumerate}

Let $\pi$ be a fixed permutation of $[n]$, and let $x_1, \dots, x_r \in [n]$ be distinct, and let $y_i = \pi(x_i)$ for $i = 1, \dots, r$.  Let $(x'_1, y'_1), \dots, (x'_q, y'_q)$ be a given list with the following properties:

\begin{enumerate}
\item[(3)] All $x'$ are distinct; all $y'$ are distinct
\item[(4)] For $i = 1, \dots, s$ we have $x_i = x'_i$ or $y_i = y'_i$.
\end{enumerate}

Let $\pi' = \text{Swap}(\pi; x_1, \dots, x_r)$. Then the probability that $\pi'$ satisfies all the constraints $(x', y')$ is at most
$$
P(\pi'(x'_1) = y'_1 \wedge \dots \wedge \pi'(x'_q) = y'_q) \leq \frac{(n-r)! (n-q)!}{n! (n-q - r + s)!} 
$$

\textbf{Expository remark:} Consider the special case when each bad-event contains a single element. In that case, we have $r = 1$. There are two possibilities for $s$; either $s = 0$ in which case this probability on the right is $1 - q/n$ (i.e. the probability that $\pi'(x_1) \neq y'_1, \dots, y'_q$); or $s = 1$ in which case this probability is $1/n$ (i.e. the probability that $\pi'(x_1) = y'_1$).

\end{proposition}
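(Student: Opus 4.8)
The plan is to analyze the Swap procedure step by step and track how the constraints $\pi'(x'_j) = y'_j$ get "used up." First I would set up the natural coupling between $\text{Swap}(\pi; x_1, \dots, x_r)$ and the Fisher--Yates shuffle, so that the $i$-th step picks $x'_i$ uniformly from the remaining $n - (i-1)$ positions. The key observation, matching the expository remark, is that after the full run, the values $\pi'(x_1), \dots, \pi'(x_r)$ together with the preimages $\pi'^{-1}(y_1), \dots, \pi'^{-1}(y_r)$ are jointly "almost uniform": each is individually uniform, any subset avoiding a conjugate pair $\{\pi'(x_i), \pi'^{-1}(y_i)\}$ is jointly uniform, but a conjugate pair has only one degree of freedom. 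Condition (4) is exactly what guarantees that each of the first $s$ constraints $(x'_i, y'_i)$ is "attached" (via $x_i = x'_i$ or $y_i = y'_i$) to one of the swap positions, so it will be the \emph{second} constraint of a conjugate pair and hence cost only a factor $\approx 1/n$, while the remaining $q - s$ constraints are free and cost a factor like $(n - r - \cdots)/(n - \cdots)$ each.

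Concretely, I would prove the bound by induction on $r$ (or equivalently, by exposing the swap choices $x'_1, \dots, x'_r$ one at a time and multiplying conditional probabilities). At step $i$ we swap $x_i$ with a uniformly random $x'_i \in [n] - \{x_1, \dots, x_{i-1}\}$. I would classify which of the target constraints $(x'_j, y'_j)$ can be "resolved" at this step: a constraint whose $x'_j$-coordinate forces $x'_i$ to a specific value, or whose $y'_j$-coordinate does, contributes a conditional probability of the form $1/(n - i + 1 - (\text{already-forbidden positions}))$, and I must check that the forbidden positions never cause the denominator to vanish — this is where hypotheses (1)–(3), especially $q + (r - s) \le n$, enter. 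The bookkeeping: of the $q$ constraints, $s$ of them are resolved "for free" as a byproduct of the $s$ constrained swaps (contributing roughly $\prod 1/(n - \text{stuff})$), and the other $q - s$ must each be separately resolved at some step, each costing a factor roughly $(\text{count})/(n - \text{stuff})$; collecting all the factors telescopes to $\frac{(n-r)!(n-q)!}{n!(n-q-r+s)!}$.

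The main obstacle I anticipate is the careful accounting of the conditioning order and the "forbidden positions" so that (a) the claimed denominators are correct and never smaller than advertised, and (b) the two roles a swap can play — being the position that resolves a free constraint versus being half of a conjugate pair for one of the first $s$ constraints — are not double-counted. A clean way to handle this is to first reduce, using the symmetry/reordering results in Appendix~\ref{symmetry-sec} (and Proposition~\ref{swap-invariant-prop2}, invoked similarly in the proof of Proposition~\ref{future-prop2}), to the case where the $s$ "constrained" indices come first and are handled in a canonical order; then each of the remaining swaps and each of the remaining $q - s$ constraints can be matched up greedily. I would also separate the argument into: (i) the probability that the $s$ conjugate-pair constraints hold, bounding it by $\frac{(n-r)!}{n! / \big(\text{a falling factorial of length } q-s+\cdots\big)}$-type expressions, and (ii) conditioned on that, the probability the remaining $q - s$ constraints hold, which is at most a product of terms $\frac{n - r - (\text{stuff})}{n - (\text{stuff})}$. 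Multiplying (i) and (ii) and simplifying the factorials yields the stated bound; verifying that the simplification is exactly $\frac{(n-r)!(n-q)!}{n!(n-q-r+s)!}$ is the routine-but-delicate computation I would defer to the write-up.
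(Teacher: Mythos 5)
Your overall strategy -- induction peeling off one swap at a time, using the reordering symmetry of Proposition~\ref{swap-invariant-prop2} to put the $s$ attached constraints first, a $1/n$-type cost for each attached constraint, and a near-$1$ survival factor for the rest -- is the same skeleton as the paper's proof, and the attached-constraint half of your plan is essentially correct (the paper also needs Proposition~\ref{exchange-sym-prop} to reduce the subcase $y_i = y'_i$ to the subcase $x_i = x'_i$ by inverting the permutation; note also that the conditional probability of a forced swap at step $i$ is exactly $1/(n-i+1)$, with no further ``forbidden positions'' -- the choice set of Swap is not restricted by the constraints). The genuine gap is in your treatment of the $q-s$ unattached constraints. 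You say each of them ``must be separately resolved at some step, costing a factor $(\text{count})/(n-\text{stuff})$,'' but these constraints refer to the \emph{final} permutation $\pi'$ and are not localized to any step: a priori an early swap could disturb such a constraint and a later swap could restore it, so a per-step conditional-probability argument does not go through without an additional structural lemma (that a value can only end at its original position or at one of $x_1,\dots,x_r$, so a disturbed free constraint is never restored). Moreover the correct bookkeeping is one survival factor \emph{per swap} (the $i$-th swap target must avoid all $q$ constrained columns, probability $\frac{n-i+1-q}{n-i+1}$), not one factor per constraint; your version does not telescope to $g(n,r,s,q)=\frac{(n-r)!(n-q)!}{n!(n-q-r+s)!}$. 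The paper sidesteps this entirely by a union bound: in the fully disjoint case it conditions on the images $y''_1,\dots,y''_r$ of the swapped positions, applies the induction hypothesis with all $r$ constraints attached, and multiplies by the $(n-q)!/(n-q-r)!$ admissible choices of $y''$.

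A second, smaller omission: hypothesis (4) only guarantees overlaps for $i\le s$; it does not forbid an overlap $(x_i,y_i)\sim(x'_j,y'_j)$ among the remaining indices. The paper needs a separate case for $s=0$ with such an incidental overlap, where one reorders to treat it as $s=1$ and then uses the monotonicity $g(n,r,1,q)\le g(n,r,0,q)$ -- and this monotonicity (equivalently $n-q-r+1\ge 1$) is where hypothesis (2) is actually used, not merely to keep denominators positive in a step-by-step count. Your proposal does not address this case.
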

\begin{proof}

Define the function $g(n,r,s,q) =  \frac{(n-r)! (n-q)!}{n! (n-q-r+s)!}$.  We will prove this proposition by induction on $s, r$. There are a few cases we handle separately:
\begin{enumerate}
\item Suppose $s > 0$ and $x_1 = x'_1$. Then, in order to satisfy the desired conditions, we must swap $x_1$ to $x'' = \pi^{-1} (y'_1)$; this occurs with probability $1/n$. The subsequent $r-1$ swaps starting with the permutation $\pi (x_1 \ x'')$ must now satisfy the conditions $\pi'(x'_2) = y'_2, \dots, \pi'(x_q) = y'_q$. We claim that we have $(x_i, \pi(x_1 \ x'') x_i) \sim (x'_i, y'_i)$ for $i = 2, \dots, s$. If $x'' \neq x_2, \dots, x_s$, this is immediately clear. Otherwise, suppose $x'' = x_j$. If $x_j = x'_j$, then we again still have $(x_j, \pi(x_1 \ x'') x_j) \sim (x'_j, y'_j)$. If $y_j = y'_j$, then this implies that $y'_1 = y_j = y'_j$, which contradicts that the $y'_j \neq y'_1$ .

 So we apply the induction hypothesis to $\pi(x_1 \ x'')$; in the induction, we subtract one from $n, q, r, s$. This gives
\begin{align*}
P(\pi'(x'_1) = y'_1 \wedge \dots \wedge \pi(x'_q) = y'_q) \leq \tfrac{1}{n} g(n-1,r-1,s-1,q-1) = g(n,r,s,q)
\end{align*}
as desired.
\item Similarly, suppose $s > 0$ and suppose $y_1 = y'_1$.  By Proposition~\ref{exchange-sym-prop}, we would obtain the same distribution if we executed $(\pi')^{-1} = \text{Swap} (\pi^{-1}; y_1, \dots, y_r)$. Hence we have
\begin{align*}
P(\pi'(x'_1) = y'_1 \wedge \dots \wedge \pi(x'_q) = y'_q) = P((\pi')^{-1}(y'_1) = x'_1 \wedge \dots \wedge (\pi')^{-1} (y'_q) = x'_q)
\end{align*}
Now, the right-hand side has swapped the roles of $x_1/y_1$; in particular, it now falls under the previous case (1) already proved, and so the right-hand side is at most $g(n,r,s,q)$ as desired.

\item Suppose $s = 0$ and that there is some $i \in [r], j \in [q]$ with $(x_i, y_i) \sim (x'_j, y'_j)$. By Proposition~\ref{swap-invariant-prop2}, we can assume without loss of generality that $(x_1, y_1) \sim (x'_1, y'_1)$. So, in this case, we are really in the case with $s = 1$. This is covered by case (1) or case (2), which have already shown. Thus, we have that
$$
P(\pi'(x'_1) = y'_1 \wedge \dots \wedge \pi(x'_q) = y'_q) \leq g(n,r,1,q) = \frac{g(n,r,0,q)}{n - q - r + 1} \leq g(n,r,s,q)
$$
Here, we are using our hypothesis that $n \geq q + (r-s) = q + r$.

\item Finally, suppose $s = 0$ and $x_1, \dots, x_r$ are distinct from $x'_1, \dots, x'_q$ and $y_1, \dots, y_q$ are distinct from $y'_1, \dots, y'_q$.  In this case, a necessary (although not sufficient) condition to have $\pi'(x'_1) = y'_1,  \dots,  \pi(x'_q) = y'_q$ is that there are some $y''_1, \dots, y''_r$, distinct from each other and distinct from $y'_1, \dots,  y'_q$, with the property that $\pi'(x_i) = y''_i$ for $j = 1, \dots, r$. By the union bound, we have
\begin{align*}
P(\pi'(x'_1) = y'_1 \wedge \dots \wedge \pi(x'_q) = y'_q)
& \leq \sum_{y''_{1}, \dots, y''_r} P(\pi'(x_1) = y''_1 \wedge \dots \wedge \pi(x_r) = y''_r)
\end{align*}
For each individual summand, we apply the induction hypothesis; the summand has probability at most $g(n,r,r,q)$. As there are $(n-q)!/(n-q-r)!$ possible values for $y''_1, \dots, y''_r$, the total probability is at most $(n-q)!/(n-q-r)! \times g(n,r,r,q) = g(n,r,s,q)$.
\end{enumerate}
\end{proof}

We apply Proposition~\ref{exchange-prop2a} to upper-bound the probability that the Swapping Algorithm successfully swaps when it encounters a bad event.
\begin{proposition}
\label{exchange-prop2}
Suppose we encounter a bad-event $B$ at time $t$ containing elements $(k, x_1, y_1)$, $\dots$, $(k, x_r, y_r)$ from permutation $k$ (and perhaps other elements from other permutations).
Then the probability that $\pi^{t+1}_k$ satisfies all the active conditions of its future-subgraph, conditional on all past events and all other swappings at time $t$, is at most 
$$
P(\text{$\pi_k^{t+1}$ satisfies $\text{Active}(G_{k}^{t+1})$}) \leq P_k(B) \frac{ (n_k - A_k^{t+1})!}{(n_k - A_k^t)!}.
$$

Recall that we have defined $A_{k}^t$ to be the number of active conditions in the future-subgraph corresponding to permutation $\pi_k$ at time $t$, and we have defined $P_k(B) = \frac{(n_k - r)!}{n_k!}.$

\textbf{Expository remark:} Consider the special case when each bad-event consists of a single element. In this case, we would have $P_k(B) = 1/n$. The stated theorem is now: either $A^{t+1} = A^t$, in which case the probability that $\pi$ satisfies its swapping condition is $1/n$; or $A^{t+1} = A^t - 1$; in which case the probability that $\pi$ satisfies its swapping condition is $1 - A^{t+1}/n$.
\end{proposition}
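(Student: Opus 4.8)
The plan is to treat each permutation separately and reduce the whole claim, for the fixed permutation $k$, to a single application of Proposition~\ref{exchange-prop2a}. Write $\pi = \pi_k^t$ and drop $k$ from the notation, as in Section~\ref{perm-conditions}. First I would dispose of the degenerate possibilities in Definition~\ref{future-defn}: if $G_{t+1} = \bot$ there is nothing to prove, and if the resampling of $B$ removes no source node of $G_t$ (case (3)), then $G_{t+1} = G_t$, $A^{t+1} = A^t$, and this is not the situation in which the proposition is invoked. So the substantive case is that $B$ is represented in $G_t$, in which case — by the structural remark after Proposition~\ref{future-prop1}, and because the bad-event resampled at time $t$ sits at the bottom of the (truncated) witness tree — the elements $(k,x_1,y_1),\dots,(k,x_r,y_r)\in B$ are exactly a collection of source nodes $v_1\approx(x_1,y_1),\dots,v_r\approx(x_r,y_r)$ of $G_t$, and $G_{t+1}=G_t-v_1-\cdots-v_r$.

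Next I would pin down the data. Since $B$ is true at time $t$ (it is the bad-event being resampled), $\pi_k^t(x_i)=y_i$ for each $i$; thus with $\pi=\pi_k^t$ we have $y_i=\pi(x_i)$, precisely the relation between the $x_i$ and $y_i$ assumed in Proposition~\ref{exchange-prop2a}. Let $\mathrm{Active}(G_{t+1})=\{(x'_1,y'_1),\dots,(x'_q,y'_q)\}$ with $q=A^{t+1}$; by definition of the active conditions, ``$\pi_k^{t+1}$ satisfies $\mathrm{Active}(G_{t+1})$'' is exactly the event $\pi_k^{t+1}(x'_j)=y'_j$ for all $j$. We may assume these $q$ pairs have distinct $x$-coordinates and distinct $y$-coordinates (otherwise no permutation satisfies them and the bound is trivial), and likewise $A^t\le n_k$. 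Then I would invoke Lemma~\ref{active-change-lemma} with $G=G_t$, $H=G_{t+1}$ to obtain a set $Z\subseteq\{(x_1,y_1),\dots,(x_r,y_r)\}$ and an injection $f:Z\to\mathrm{Active}(G_{t+1})$ with $(x,y)\sim f(x,y)$ for every $(x,y)\in Z$, together with the count identity $A^{t+1}=|\mathrm{Active}(G_{t+1})|=|\mathrm{Active}(G_t)|-(r-|Z|)=A^t-r+|Z|$.

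The crux is lining this up with Proposition~\ref{exchange-prop2a}, applied with $n=n_k$, $r=r$, $q=A^{t+1}$ and $s=|Z|$. Using Proposition~\ref{swap-invariant-prop2} I may reorder the swaps of permutation $k$ performed in resampling $B$ so that the elements of $Z$ come first, say $Z=\{(x_1,y_1),\dots,(x_s,y_s)\}$, and I may relabel the list $\mathrm{Active}(G_{t+1})$ so that $(x'_i,y'_i)=f(x_i,y_i)$ for $i\le s$; then $(x_i,y_i)\sim(x'_i,y'_i)$ becomes ``$x_i=x'_i$ or $y_i=y'_i$'', which is hypothesis (4). The remaining hypotheses of Proposition~\ref{exchange-prop2a} are immediate: $s=|Z|\le r$ and $s\le|\mathrm{Active}(G_{t+1})|=q$ by injectivity of $f$; the $x'$ (resp.\ $y'$) are distinct; and $q+(r-s)=A^{t+1}+r-|Z|=A^t\le n_k$, while $n-q-r+s=n_k-A^t$. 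Conditioning on the past (which fixes $\pi=\pi_k^t$) and on the other permutations' swaps at time $t$ (independent of $\pi_k$'s, hence leaving the $\mathrm{Swap}$ randomness fresh), Proposition~\ref{exchange-prop2a} then gives
$$
P\bigl(\pi_k^{t+1}\text{ satisfies }\mathrm{Active}(G_{t+1})\bigr)\le\frac{(n_k-r)!\,(n_k-A^{t+1})!}{n_k!\,(n_k-A^t)!}=\frac{(n_k-r)!}{n_k!}\cdot\frac{(n_k-A^{t+1})!}{(n_k-A^t)!}=P_k(B)\,\frac{(n_k-A^{t+1})!}{(n_k-A^t)!},
$$
which is the assertion. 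The main obstacle is exactly this last alignment: correctly matching $Z$, $f$, and the count identity of Lemma~\ref{active-change-lemma} to the arithmetic side-conditions and the index-aligned relation~(4) of Proposition~\ref{exchange-prop2a}, and justifying the swap reordering so that ``$\sim$'' becomes a coordinatewise equality at matched indices. The factorial simplification and the handling of the degenerate cases are routine.
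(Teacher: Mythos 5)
Your proof is correct and follows essentially the same route as the paper's: identify the removed source nodes with the resampled elements of $B$, invoke Lemma~\ref{active-change-lemma} to extract $Z$, $f$, and the count identity $A^t = A^{t+1} + (r - |Z|)$, reorder via Proposition~\ref{swap-invariant-prop2} so that $\sim$ becomes an index-aligned coordinatewise equality, and then feed the parameters $n=n_k$, $r$, $s=|Z|$, $q=A^{t+1}$ into Proposition~\ref{exchange-prop2a}. You are somewhat more explicit than the paper about dismissing the degenerate cases ($G_{t+1}=\bot$, or $B$ not represented in $G_t$) and about verifying the arithmetic side-conditions, but the substance of the argument is identical.
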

\begin{proof}
Let $H$ denote the future-subgraph $G_{k, t+1}$ after removing the source nodes corresponding to the pairs $(x_1, y_1), \dots, (x_{r}, y_{r})$. Using the notation of Lemma~\ref{active-change-lemma}, we set $s = |Z|$ and $q = A_k^{t+1}$. We have $\text{Active}(H) = \{ (x'_1, y'_1), \dots, (x'_{q}, y'_{q}) \}$.

For each $(x,y) \in Z$, we have $y = \pi^t(x)$, and there is an injective function $f: Z \rightarrow \text{Active}(H)$ and $(x,y) \sim f((x,y))$. By Proposition~\ref{swap-invariant-prop2}, we can assume without loss of generality $Z = \{(x_1, y_1), \dots, (x_s, y_s) \}$ and $f(x_i, y_i) = (x'_i,  y'_i)$. In order to satisfy the active conditions on $G_{k, t+1}$, the swapping must cause $\pi^{t+1} ( x'_i ) = y'_i$ for $i = 1, \dots, q$.

By Lemma~\ref{active-change-lemma}, we have $A_{k}^t = A_{k}^{t+1} + (r - s) = q + (r - s)$. Note that $A_k^t \leq n$. So all the conditions of Proposition~\ref{exchange-prop2a} are satisfied. Thus this probability is at most $\frac{(n_k-r)!}{n_k!} \times \frac{(n_k-q)!}{ (n_k-q-r+s)!} = \frac{ (n_k-r)! (n_k - A_{k}^{t+1})!}{n_k! (n_k - A_k^t)!}$.
\end{proof}
We have finally all the pieces necessary to prove Lemma~\ref{witness-tree-lemma}.

{
\renewcommand{\thetheorem}{\ref{witness-tree-lemma}}
\begin{lemma}
Let $\tau$ be a witness tree, with nodes labeled $B_1, \dots, B_s$. The probability that $\tau$ appears is at most
$$
P(\text{$\tau$  appears}) \leq P_{\Omega}(B_1) \cdots P_{\Omega}(B_s)
$$
\end{lemma}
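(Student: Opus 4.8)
The plan is to reduce the Witness Tree Lemma to the per-permutation claim \eqref{y1} and then prove \eqref{y1} by running the Swapping Algorithm forward and tracking, for the fixed target subdag $G$, the future-subgraphs $G_t$ together with their active conditions. First I would observe that the event ``$\tau$ appears'' decomposes across permutations: by the projection construction, $\hat\tau^T = \tau$ requires $\mathrm{Proj}_k(\hat\tau^T) = \mathrm{Proj}_k(\tau)$ for every $k$, and — this is the key independence point — the randomness used in the swaps of distinct permutations is independent, so it suffices to bound, for each $k$ separately, the probability that $G^{(k)} := \mathrm{Proj}_k(\tau)$ is realized, and then multiply. Since $\prod_k P_k(B_i) = P_\Omega(B_i)$ by the product formula for $P_\Omega(B)$ stated in the lemma, establishing \eqref{y1} for each $k$ and taking the product over $k$ yields exactly $\prod_i P_\Omega(B_i)$.

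Next, to prove \eqref{y1} for a fixed permutation (dropping the subscript $k$ as in Section~\ref{perm-conditions}): fix the target subdag $G$ with nodes carrying bad-event labels $B_1,\dots,B_s$, and run the Swapping Algorithm. Maintain the future-subgraph $G_t$ as in Definition~\ref{future-defn}. By Proposition~\ref{future-prop1}, if $G = \mathrm{Proj}(\hat\tau^T)$ for some $T$, then the sequence $G_0 = G, G_1, G_2, \dots$ never hits $\bot$ and, at the times $t_1 < t_2 < \cdots$ at which source nodes of $G_t$ get removed (i.e.\ the resamplings of the bad-events labeling $G$), the removed source nodes are exactly the $k^*$-elements of the resampled bad-event. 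At each such removal step, Proposition~\ref{future-prop2} says the permutation must satisfy the active conditions of $G_{t+1}$ — a constraint whose conditional probability, given all past history and the other permutations' swaps, is bounded by Proposition~\ref{exchange-prop2} by $P_k(B_j)\,\dfrac{(n_k - A^{t+1})!}{(n_k - A^{t})!}$, where $A^t = |\mathrm{Active}(G_t)|$. Multiplying these bounds over all the (at most $s$) removal steps, the telescoping ratios of factorials collapse: the product of $\dfrac{(n_k - A^{t_j+1})!}{(n_k - A^{t_j})!}$ over the removal steps telescopes (using that $A^t$ is unchanged between consecutive removal steps, by the ``otherwise'' case of Definition~\ref{future-defn}) to $\dfrac{(n_k - A^{\mathrm{final}})!}{(n_k - A^0)!}$. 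At the start, $G_0 = G$ and $A^0 = |\mathrm{Active}(G)|$; at the end all nodes are gone so $A^{\mathrm{final}} = 0$ and the ratio is $\dfrac{n_k!}{(n_k - A^0)!}$. Crucially, $A^0 = |\mathrm{Active}(G)|$ is at most the number of source nodes one can peel off... one needs that the telescoped factor $\dfrac{n_k!}{(n_k-A^0)!}$ is cancelled; in fact the cleanest bookkeeping is to note $A^0 \le $ (number of ``endpoint'' conditions) and that every node removal that does \emph{not} decrease $A$ contributes a $Z$-element, so over the whole run the number of steps with $A$ unchanged equals $A^0$ plus the number with $A$ decreased equals total removed; the net telescoped product is then $\le \prod_j P_k(B_j)$ exactly. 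I would present this as: the product over removal steps of $P_k(B_j)$ times the telescoped ratio equals $\big(\prod_j P_k(B_j)\big)\cdot\dfrac{(n_k-0)!}{(n_k-A^0)!}\cdot\dfrac{1}{n_k!/(n_k-A^0)!}$... — more carefully, the definition $A^0 = |\mathrm{Active}(G)|$ and the fact that in the final graph $A = 0$ makes the ratio $1/\big(\text{something} \le 1\big)$, so I must be careful about the direction of the inequality; the right statement (and the one Proposition~\ref{exchange-prop2} is designed to give) is that the accumulated product is at most $P_k(B_1)\cdots P_k(B_s)$ because $A^0 \le $ the count absorbed, and I would verify this by a clean induction on the number of removal steps rather than a raw telescoping computation.

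For the final assembly: sum over $k$... rather, take the product over $k$ of \eqref{y1}, getting $\prod_k \prod_i P_k(B_i^{(k)}) = \prod_i \prod_k P_k(B_i) = \prod_i P_\Omega(B_i)$, where the inner equality is the stated product formula for $P_\Omega$. The only remaining subtlety is the union over $T \ge 0$: ``$\tau$ appears'' means $\hat\tau^T = \tau$ for \emph{some} $T$, but the whole point of the future-subgraph machinery (Proposition~\ref{future-prop1}: $G_t$ is independent of $T$) is that the constraints and their probabilities do not depend on which $T$ we are aiming at, so the single forward run of the algorithm simultaneously handles all $T$; I would phrase this as: the event ``$\exists T: G = \mathrm{Proj}(\hat\tau^T)$'' is contained in the event that, along the actual execution, every active-condition constraint of every $G_{t+1}$ is met at the corresponding resampling, and bound the latter.

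I expect the main obstacle to be the bookkeeping in the telescoping/induction step — making Proposition~\ref{exchange-prop2}'s per-step bound $P_k(B_j)\frac{(n_k-A^{t+1})!}{(n_k-A^t)!}$ multiply cleanly to $\prod_j P_k(B_j)$ without an extra factorial factor. This hinges on the two invariants: (a) $A^t$ is constant on the intervals between consecutive source-removal times (clear from Definition~\ref{future-defn}), and (b) the initial value $A^0 = |\mathrm{Active}(G)|$ telescopes away against the final value $A^{\mathrm{final}} = 0$ in the right direction. Establishing (b) with the correct inequality direction — and in particular handling the fact that some removal steps increase $A$ while others decrease it (Lemma~\ref{active-change-lemma}'s cases) — is where the care is needed; a secondary subtlety is rigorously justifying the conditioning in Proposition~\ref{exchange-prop2} so that the per-step bounds can be chained by the tower rule over the execution. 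Everything else (the per-permutation decomposition, the product formula for $P_\Omega$, the union over $T$) is essentially formal given the earlier propositions.
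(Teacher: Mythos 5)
The proposal follows the paper's general architecture (future-subgraphs, active conditions, Proposition~\ref{exchange-prop2} as the per-step bound), but it has a genuine gap at exactly the point you flag as needing care. The telescoping of the per-step bounds really does produce $\bigl(\prod_j P_k(B_j)\bigr)\cdot n_k!/(n_k-A_k^0)!$, which is \emph{larger} than the target by the factor $n_k!/(n_k-A_k^0)! \geq 1$, and no re-indexing of the induction will make that surplus disappear. The missing idea is that the surplus is paid for by the \emph{initial random sampling} of the permutations: by Proposition~\ref{future-prop2} applied at $t=0$, the uniformly random $\pi_k^0$ must already satisfy all $A_k^0$ active conditions of $\text{Proj}_k(\tau)$, an event of probability $(n_k-A_k^0)!/n_k!$, which exactly cancels the telescoped factor. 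This is precisely the ``counter-intuitive aspect'' the paper calls out (the factor $(n_k-A_k^0)!/n_k!$ is present for the original permutation and gradually discharged); without it your accumulated product does not close, and your proposal never identifies where the cancellation comes from.

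A secondary problem is the opening reduction: you propose to bound $P(\text{Proj}_k(\hat\tau^T)=\text{Proj}_k(\tau))$ separately for each $k$ and multiply, on the grounds that the swap randomness of distinct permutations is independent. But the events are not independent --- which bad-events get resampled, and when, depends on all the permutations jointly --- and the paper explicitly declines to prove the per-permutation statement \eqref{y1} for this reason (``technical complications regarding the order of conditioning''), treating it only as informal motivation. The correct assembly is a single induction on the tree $\tau'$ (conditioning on the first resampled node $v$), in which Proposition~\ref{exchange-prop2}'s bound is stated conditionally on the entire past and on the other permutations' swaps at the same step, so that the bounds chain by the tower rule jointly across permutations and across time; the product over $k$ then appears inside each induction step rather than as an outer product of marginal probabilities.
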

\addtocounter{theorem}{-1}
}
\begin{proof}
The Swapping Algorithm, as we have defined it, begins by selecting the permutations uniformly at random. One may also consider fixing the permutations to some arbitrary (not random) value, and allowing the Swapping Algorithm to execute from that point onward. We refer to this as \emph{starting at an arbitrary state of the Swapping Algorithm.} We will prove the following by induction on $\tau'$: The probability, starting at an arbitrary state of the Swapping Algorithm, that the subsequent swaps would produce the subtree $\tau'$, is at most 
\begin{equation}
\label{wt1}
P(\text{$\hat \tau^T = \tau'$ for some $T \geq 0$}) \leq \prod_{B \in \tau'} P_{\Omega}(B) \times \prod_{k=1}^N \frac{n_k!}{(n_k - |\text{Active}( \text{Proj}_{k}(\tau'))|)!}. 
\end{equation}

When $\tau' = \emptyset$, the RHS of (\ref{wt1}) is equal to one so this is vacuously true.

To show the induction step, note that in order for $\tau'$ to be produced as the witness tree for some $T \geq 0$, it must be that some $B$ is resampled, where some node $v \in \tau'$ is labeled by $B$. Suppose we condition on that $v$ is the first such node, resampled at time $t$. A necessary condition to have $\hat \tau^T = \tau'$ for some $T \geq t$ is that $\pi^{t+1}$ satisfies all the active conditions on $G_{t+1}$. By Proposition~\ref{exchange-prop2}, the probability that $\pi^{t+1}$ satisfies these conditions is at most  $\prod_k P_k(B) \frac{ (n_k - A_k^{t+1})!}{(n_k - A_k^t)!}$.

Next, if this event occurs, then subsequent resamplings must cause $\hat \tau_{\geq t+1}^{T} =  \tau' - v$. To bound the probability of this, we use the induction hypothesis. Note that the induction hypothesis gives a bound conditional on \emph{any} starting configuration of the Swapping Algorithm, so we may multiply these probabilities. Thus
\begin{align*}
P(\text{$\hat \tau^T = \tau'$ for some $T > 0$}) &\leq \prod_k P_k(B) \frac{ (n_k - A_k^{t+1})!}{(n_k - A_k^t)!} \times  \prod_{B \in \tau' - v} P_{\Omega}(B) \times \prod_{k=1}^N \frac{n_k!}{(n_k - |\text{Active}( \text{Proj}_{k}(\tau' - v))|)!} \\
&= \prod_{B \in \tau'} P_{\Omega}(B) \prod_k \frac{ (n_k - A_k^{t+1})!}{(n_k - A_k^t)!} \frac{n_k!}{(n_k - |\text{Active}( \text{Proj}_{k}(\tau' - v))|)!} \\
&= \prod_{B \in \tau'} P_{\Omega}(B) \prod_k \frac{n_k!}{(n_k - A_k^t)!} \quad \text{as $A_k^{t+1} = |\text{Active}( \text{Proj}_{k}(\tau' - v))|$} 
\end{align*}
completing the induction argument.

We now consider the necessary conditions to produce the \emph{entire} witness tree $\tau$, and not just fragments of it. First, the \emph{original permutations} $\pi^0_k$ must satisfy the active conditions of the respective witness subdags $\text{Proj}_k(\tau)$. For each permutation $k$, this occurs with probability $\frac{(n_k - A_{k}^0)!}{n_k!}$. Next, the subsequent sampling must be compatible with $\tau$; by (\ref{wt1}) this has probability at most $\prod_{B \in \tau} P_{\Omega}(B) \times \prod_{k=1}^N \frac{n_k!}{(n_k - A_{k}^0)!}$. Again, note that the bound in (\ref{wt1}) is conditional on any starting position of the Swapping Algorithm, hence we may multiply these probabilities. In total we have
\begin{align*}
P(\text{$\hat \tau^T = \tau$ for some $T \geq 0$}) &\leq \prod_k \frac{(n_k - A_{k}^{0})!}{n_k!} \times \prod_{B \in \tau} P_{\Omega}(B) \times \prod_{k=1}^N \frac{n_k!}{(n_k - A_{k}^{0})!} = \prod_{B \in \tau} P_{\Omega}(B).
\end{align*}

We note one counter-intuitive aspect to this proof. The natural way of proving this lemma would be to identify, for each bad-event $B \in \tau$, some necessary event occurring with probability at most $P_{\Omega}(B)$. This is the general strategy in Moser-Tardos \cite{moser-tardos} and related constructive LLL variants such as \cite{harris-srin2}, \cite{achlioptas}, \cite{harvey}. This is \emph{not} the proof we employ here; there is an additional factor of $(n_k - A_{k}^{0})!/n!$ which is present for the original permutation and is gradually ``discharged'' as active conditions disappear from the future-subgraphs.
\end{proof}

\section{The constructive LLL for permutations}
\label{constructive-lll-sec}
Now that we have proved the Witness Tree Lemma, the remainder of the analysis is essentially the same as for the Moser-Tardos algorithm \cite{moser-tardos}. Using arguments and proofs from \cite{moser-tardos} with our key lemma, we can now easily show our key theorem:

\begin{theorem}
\label{thm:constr-lll}
Suppose there is some assignment of weights $\mu: \mathcal B \rightarrow [0, \infty)$ which satisfies, for every $B \in \mathcal B$ the condition
$$
\mu(B) \geq P_{\Omega}(B) \prod_{B' \sim B} (1 + \mu(B'))
$$
Then the Swapping Algorithm terminates with probability one. The expected number of iterations in which we resample $B$ is at most $\mu(B)$.
\end{theorem}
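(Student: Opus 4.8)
The plan is to run exactly the standard Moser--Tardos counting argument, now powered by the Witness Tree Lemma (Lemma~\ref{witness-tree-lemma}) in place of its Moser--Tardos analogue. First I would recall that in each iteration of the main loop the Swapping Algorithm resamples exactly one true bad-event, so if $N_B$ denotes the (random) number of iterations in which $B$ is resampled, then $\sum_B \bE[N_B]$ is the expected total running time, and it suffices to show $\bE[N_B] \le \mu(B)$ for each $B$; in particular finiteness of every $\bE[N_B]$ gives termination with probability one. Fix $B$. To each iteration $T$ at which $B$ is resampled, associate the witness tree $\hat\tau^T$ built from the execution log; its root is labeled $B$. As in \cite{moser-tardos}, distinct such times $T$ yield distinct witness trees (the tree records the entire relevant history up to $T$, so two different resampling times of $B$ cannot give the same tree). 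Hence $N_B$ is at most the number of distinct witness trees rooted at $B$ that \emph{appear}, and so
\[
\bE[N_B] \;\le\; \sum_{\tau:\ \text{root labeled }B} P(\text{$\tau$ appears}) \;\le\; \sum_{\tau:\ \text{root labeled }B} \ \prod_{v \in \tau} P_{\Omega}(B_v),
\]
where the last inequality is exactly Lemma~\ref{witness-tree-lemma}.

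The remaining step is the purely combinatorial estimate, identical to \cite{moser-tardos}: bound $\sum_{\tau} \prod_{v\in\tau} P_{\Omega}(B_v)$ over witness trees rooted at $B$ by $\mu(B)$. The tool is a branching-process (Galton--Watson) argument. Define for each bad-event $C$ the quantity $x(C) = \mu(C)/(1+\mu(C)) \in [0,1)$; the hypothesis $\mu(B)\ge P_\Omega(B)\prod_{B'\sim B}(1+\mu(B'))$ rewrites as $P_\Omega(B) \le x(B)\prod_{B'\sim B,\,B'\ne B}(1-x(B'))$, which is the usual LLL form (see the Remark after Theorem~\ref{thm:lopsided}). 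Consider the branching process that starts with a root labeled $B$ and in which a node labeled $C$ independently spawns, for each $C'\sim C$, a child labeled $C'$ with probability $x(C')$ (and no child with probability $1-x(C')$). A computation (Moser--Tardos, using that the children-labels attached under a node $C$ in a witness tree form a set with distinct labels among neighbors of $C$ at each depth) shows that the probability this process produces exactly a given tree $\tau$ rooted at $B$ equals
\[
\frac{1-x(B)}{x(B)}\ \prod_{v\in\tau}\Bigl(x(B_v)\prod_{B'\sim B_v}(1-x(B'))\Bigr)\;\ge\;\frac{1-x(B)}{x(B)}\ \prod_{v\in\tau}P_\Omega(B_v).
\]
Summing over all trees $\tau$ rooted at $B$, the left side is at most $\frac{1-x(B)}{x(B)}$ (these events are disjoint), so $\sum_\tau \prod_{v\in\tau}P_\Omega(B_v) \le \frac{x(B)}{1-x(B)} = \mu(B)$, which is the desired bound. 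Combining with the previous display gives $\bE[N_B]\le\mu(B)$, hence the total expected number of resamplings is $\sum_B\mu(B)<\infty$ and the algorithm terminates almost surely.

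The genuinely new content — the fact that witness trees in this permutation setting satisfy the same probabilistic bound as in the independent-variable setting — has already been established as Lemma~\ref{witness-tree-lemma}; everything invoked after that point (distinctness of witness trees across resampling times, the branching-process identity, and the summation) is verbatim from \cite{moser-tardos} and requires no adaptation. So the main obstacle has been discharged upstream, and the only care needed here is to state the reduction to Lemma~\ref{witness-tree-lemma} cleanly and to cite, rather than reprove, the branching-process computation; I would keep this proof short and explicitly point to \cite{moser-tardos} for the combinatorial step.
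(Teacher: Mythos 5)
Your proposal is correct and takes exactly the route the paper intends: the paper's proof of Theorem~\ref{thm:constr-lll} consists precisely of invoking Lemma~\ref{witness-tree-lemma} and then citing the Moser--Tardos counting argument (distinct resampling times of $B$ give distinct witness trees, bound $\bE[N_B]$ by the Galton--Watson sum over trees rooted at $B$), which is what you do. One small transcription slip: since $\sim$ is reflexive here, the inner product in your branching-process identity should exclude $B_v$ itself, i.e.\ $p_\tau = \frac{1-x(B)}{x(B)}\prod_{v\in\tau}x(B_v)\prod_{B'\sim B_v,\,B'\neq B_v}(1-x(B'))$, which is what the rewritten LLL condition (with the $B'\neq B$ exclusion you correctly stated earlier) compares against to yield $p_\tau \geq \frac{1-x(B)}{x(B)}\prod_v P_\Omega(B_v)$.
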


In the ``symmetric'' case, this gives us the well-known LLL criterion:
\begin{corollary}
Suppose each bad-event $B \in \mathcal B$ has probability at most $p$, and is dependent with at most $d$ bad-events. Then if
$e p (d+1) \leq 1$,
the Swapping Algorithm terminates with probability one; the expected number of resamplings of each bad-event is $O(1)$.
\end{corollary}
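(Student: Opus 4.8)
The plan is to derive Theorem~\ref{thm:constr-lll}, and then its corollary, from the Witness Tree Lemma (Lemma~\ref{witness-tree-lemma}) by exactly the counting argument Moser and Tardos use in the independent-variable setting \cite{moser-tardos}; indeed Lemma~\ref{witness-tree-lemma} was stated in precisely the form $P(\tau\text{ appears})\le\prod_{v\in\tau}P_\Omega(B_v)$ so that this transfer is essentially mechanical. Throughout, assume $\mathcal B$ is finite (which suffices for the applications).

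First I would set up the correspondence between resamplings and witness trees. Fix a bad-event $B$, let $N_B$ be the number of iterations in which the Swapping Algorithm resamples $B$, and let $\mathcal T_B$ be the set of all witness trees rooted at a node labeled $B$. Each resampling of $B$ (say at time $t$) produces the witness tree $\hat\tau^t\in\mathcal T_B$, and the witness trees produced by distinct resamplings are pairwise distinct --- this is verbatim the argument of \cite{moser-tardos}, which applies here without change since our witness-tree construction in Section~\ref{witness-tree-sec} is identical to theirs. Hence $N_B=\sum_{\tau\in\mathcal T_B}\mathbf 1[\tau\text{ appears}]$, so by Lemma~\ref{witness-tree-lemma},
$$
\bE[N_B]=\sum_{\tau\in\mathcal T_B}P(\tau\text{ appears})\le\sum_{\tau\in\mathcal T_B}\prod_{v\in\tau}P_\Omega(B_v).
$$

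Next I would bound the right-hand side by $\mu(B)$. Put $x(B)=\mu(B)/(1+\mu(B))\in[0,1)$; using $1+\mu(B')=(1-x(B'))^{-1}$ and the fact that $\sim$ is reflexive, the hypothesis $\mu(B)\ge P_\Omega(B)\prod_{B'\sim B}(1+\mu(B'))$ becomes the standard LLL inequality $P_\Omega(B)\le x(B)\prod_{B'\sim B,\,B'\neq B}(1-x(B'))$. Now I would invoke the Moser--Tardos ``Galton--Watson'' estimate: grow a random tree rooted at $B$ by independently giving each node $u$ (with label $B_u$) a child labeled $B'$ with probability $x(B')$ for each $B'\sim B_u$; if $p_\tau$ is the probability this produces exactly $\tau$, then $\sum_{\tau\in\mathcal T_B}p_\tau\le1$ and, for every $\tau\in\mathcal T_B$, $\prod_{v\in\tau}P_\Omega(B_v)\le\frac{x(B)}{1-x(B)}\,p_\tau=\mu(B)\,p_\tau$ (this is where one uses that the children of any node carry distinct labels, all $\sim$ the parent's label). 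Summing over $\tau\in\mathcal T_B$ gives $\bE[N_B]\le\mu(B)$, which is the second assertion of the theorem. For the first assertion, the expected total number of resamplings is $\sum_{B\in\mathcal B}\bE[N_B]\le\sum_{B\in\mathcal B}\mu(B)<\infty$, so almost surely only finitely many resamplings occur and the algorithm halts.

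Finally, the corollary follows by specializing to the uniform weights $\mu(B)=1/d$ (equivalently $x(B)=1/(d+1)$), assuming $d\ge1$ (for $d=0$ there are no real dependencies and one checks directly). Since $\{B':B'\sim B\}$ has at most $d+1$ elements and each has probability at most $p$, the elementary bound $(1+\tfrac1d)^d\le e$ gives $P_\Omega(B)\prod_{B'\sim B}(1+\mu(B'))\le p(1+\tfrac1d)^{d+1}\le \tfrac{ep(d+1)}{d}\le\tfrac1d=\mu(B)$ under the hypothesis $ep(d+1)\le1$, so Theorem~\ref{thm:constr-lll} applies and each bad-event is resampled $O(1)$ times in expectation. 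I expect no serious obstacle here: the only places requiring any thought are confirming that the distinctness-of-witness-trees claim and the Galton--Watson counting lemma transfer unchanged, and both do because Section~\ref{witness-tree-sec} mirrors the Moser--Tardos construction exactly --- all the genuinely new difficulty of the permutation setting has already been absorbed into the proof of Lemma~\ref{witness-tree-lemma}.
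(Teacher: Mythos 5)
Your proposal is correct and follows exactly the route the paper intends: the paper derives Theorem~\ref{thm:constr-lll} from Lemma~\ref{witness-tree-lemma} by citing the Moser--Tardos witness-tree counting and Galton--Watson argument verbatim, and obtains the corollary as the symmetric specialization $\mu(B)=1/d$, $x(B)=1/(d+1)$. Your write-up simply makes explicit the two transferred ingredients (distinctness of witness trees across resampling times, and the branching-process bound $\sum_{\tau\in\mathcal T_B}\prod_v P_\Omega(B_v)\le\mu(B)$), and the final computation $p(1+1/d)^{d+1}\le ep(d+1)/d\le 1/d$ is exactly right.
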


Some extensions of the LLL, such as the Moser-Tardos distribution bounds shown in \cite{haeupler}, the observation of Pegden regarding independent sets in the dependency graph \cite{pegden}, or the partial-resampling of \cite{harris-srin2}, follow almost immediately here. There are a few extensions which require slightly more discussion:

\subsection{Lopsidependence}
\label{sec:lopsi}
As in \cite{moser-tardos}, it is possible to slightly restrict the notion of dependence. Two bad-events which share the same valuation of a variable are not forced to be dependent. We can re-define the relation $\sim$ on bad-events as follows: for $B, B' \in \mathcal B$, we have $B \sim B'$ iff
\begin{enumerate}
\item $B = B'$, \emph{or}
\item there is some $(k, x, y) \in B, (k, x', y') \in B'$ with either $x = x', y \neq y'$ or $x \neq x', y = y'$.
\end{enumerate}
In particular, bad-events which share the same triple $(k,x,y)$, are \emph{not} caused to be dependent.

Proving that the Swapping Algorithm still works in this setting requires only a slight change in our definition of $\text{Proj}_k(\tau)$. Now, the tree $\tau$ may have multiple copies of any given triple $(k,x,y)$ on a single level. When this occurs, we create the corresponding nodes $v \approx (x,y) \in \text{Proj}_k(\tau)$; edges are added between such nodes in an arbitrary (but consistent) way. The remainder of the proof remains as before.

\subsection{LLL for injective functions}
\label{sec:injective}
The analysis of \cite{random-inj} considers a slightly more general setting for the LLL, in which we select random \emph{injections} $f_k: [m_k] \rightarrow [n_k]$, where $m_k \leq n_k$. In fact, our Swapping Algorithm can be extended to this case. We simply define a permutation $\pi_k$ on $[n_k]$, where the entries $\pi_k(m_k+1), \dots, \pi_k(n_k)$ are ``dummies'' which do not participate in any bad-events. The LLL criterion for the extended permutation $\pi_k$ is exactly the same as the corresponding LLL criterion for the injection $f_k$. Because all of the dummy entries have the same behavior, it is not necessary for the Swapping Algorithm to keep track of the dummy entries exactly; they are needed only for the analysis.

\subsection{Comparison with the approaches of Achlioptas \& Iliopoulos and Harvey \& Vondr\'{a}k}
\label{achlioptas-sec}
Achlioptas \& Iliopoulos \cite{achlioptas} and Harvey \& Vondr\'{a}k \cite{harvey} gave generic frameworks for analyzing variants of the Moser-Tardos algorithm, applicable to different types of combinatorial configurations. These frameworks can include vertex-colorings, permutations, Hamiltonian cycles of graphs, spanning trees, matchings, and other settings. For the case of permutations, both of these frameworks give a version of the Swapping Algorithm and show that it terminates under the same conditions as we do, which in turn are the same conditions as the LLL (Theorem~\ref{thm:lopsided}).

The key difference between our approach and \cite{achlioptas,harvey} is that they enumerate the entire history of all resamplings to the permutations. In contrast, our proof is based on the Witness Tree Lemma; this is a much more succinct structure that ignores most of the resamplings, and only enumerates the few resamplings that are necessary to justify a single item in the execution log. Their proofs are much simpler than ours; a major part of the complexity of our proof lies in the need to argue that the bad-events which were ignored by the witness tree do not affect the probabilities. (The ignored bad-events \emph{do} interact with the variables we need to track for the witness tree, but do so in a ``neutral'' way.) 

\emph{If our only goal is to prove that the Swapping Algorithm terminates in polynomial time, then the other two frameworks give a better and simpler approach.}
However, the Witness Tree Lemma allows much more precise estimates for many types of events. The main reason for this precision is the following: suppose we want to show that some event $E$ has a low probability of occurring during or after the execution of the Swapping Algorithm. The proof strategy of Moser \& Tardos is to take a union-bound over all witness trees that correspond to this event. In this case, we are able to show a probability bound which is proportional to the total weight of all such witness trees. This can be a relatively small number as only the witness trees connected to $E$ are relevant. Our analysis, which is also based on witness trees, is able to show similar types of bounds.

However, the analysis of Achlioptas \& Iliopoulos and Harvey \& Vondr\'{a}k is not based on witness trees, but the much larger set of \emph{full execution logs}. The number of possible execution logs can be exponentially larger than the number of witness trees. It is very inefficient to take a union bound over all such logs. Hence, Achlioptas \& Iliopoulos and Harvey \& Vondr\'{a}k give bounds which are exponentially weaker (in a certain technical sense) than the ones we provide. 

Many properties of the Swapping Algorithm depend on the fine degree of control provided by the Witness Tree Lemma, and it seems difficult to obtain them from the alternate LLLL approaches. We list a few of these properties here.

\textbf{The LLL criterion without slack.} As a simple example of the problems caused by taking a union bound over execution logs, suppose that we satisfy the LLL criterion without slack, say $e p d = 1$; here, as usual, $p$ and $d$ are bounds respectively on the probability of any bad event and the degree of any bad event in the dependency graph. In this case, we show that the expected time for our Swapping Algorithm to terminate is $O(m)$. In contrast, in Achlioptas \& Iliopoulous or Harvey \& Vondr\'{a}k, they require satisfying the LLL criterion with slack $e p (1+\epsilon) d = 1$, and achieve a termination time of $O(m/\epsilon)$. They require this slack term in order to damp the exponential growth in the number of execution logs. (Harvey \& Vondr\'{a}k show that if the symmetric LLL criterion is satisfied without slack, then the Shearer criterion \cite{shearer:lll} is satisfied with slack $\epsilon = O(1/m)$. Thus, they would achieve a running time of $O(m^2)$ without slack.)

\textbf{Arbitrary choice of which bad-event to resample.} The Swapping Algorithm as we have stated it is actually under-determined, in that the choice of which bad-event to resample is arbitrary. In contrast, in both Achlioptas \& Iliopoulos and Harvey \& Vondr\'{a}k, there is a fixed priority on the bad-events. (The work of \cite{kolmogorov} has shown that this restriction can be removed in certain special cases of the Achlioptas \& Iliopoulous setting, including for random permutations and matchings.)  This freedom can be quite useful. For example, in Section~\ref{sec:parallel} we consider a parallel implementation of our Swapping Algorithm. We will select which bad-events to resample in a quite complicated and randomized way. However, the correctness of the parallel algorithm will follow from the fact that it simulates some serial implementation of the Swapping Algorithm.

\textbf{The Moser-Tardos distribution.} The Witness Tree Lemma allows us to analyze the so-called ``Moser-Tardos (MT) distribution,'' first discussed by \cite{haeupler}. The LLL and its algorithms ensure that bad-events $\mathcal B$ cannot possibly occur. In other words, we know that the configuration produced by the LLL has the property that no $B \in \mathcal B$ is true. In many applications of the LLL, we may wish to know more about such configurations, other than they exist.

 There are a variety of reasons we might want this; we give two examples for the ordinary, variable-based LLL. Suppose that we have some weights for the values of our variables, and we define the objective function on a solution $\sum_i w(X_i)$; in this case, if we are able to estimate the probability that a variable $X_i$ takes on value $j$ in the \emph{output} of the LLL (or Moser-Tardos algorithm), then we may be able to show that configurations with a good objective function exist. A second example is when the number of bad-events becomes too large, perhaps exponentially large. In this case, the Moser-Tardos algorithm cannot test them all. However, we may still be able to ignore a subset of the bad events, and argue that the probability that they are true at the end of the Moser-Tardos algorithm is small even though they were never checked. 

The Witness Tree Lemma gives us an extremely powerful result concerning this MT distribution, which carries over to the Swapping Algorithm.
\begin{proposition}
Let $E \equiv \pi_{k_1}(x_1) = y_1 \wedge \dots \wedge \pi_{k_r}(x_r) = y_r$. Then the probability that $E$ is true in the output of the Swapping Algorithm, is at most $P_{\Omega}(E) \prod_{B' \sim E} (1 + \mu(B'))$.
\end{proposition}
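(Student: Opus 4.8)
The plan is to adapt the Moser--Tardos output-distribution argument of \cite{haeupler} to our setting, using the Witness Tree Lemma (Lemma~\ref{witness-tree-lemma}) in place of its Moser--Tardos analogue. If $E$ contains two triples $(k,x,y),(k,x,y')$ with $y\neq y'$ (or $(k,x,y),(k,x',y)$ with $x\neq x'$) then $P_{\Omega}(E)=0$, the conjunction $E$ is never satisfiable, and the claimed bound is trivial; so we may assume $E$ is a legitimate atomic event and $P_{\Omega}(E)=\prod_{k}\tfrac{(n_k-r_k)!}{n_k!}$, where $r_k$ is the number of triples of $E$ belonging to permutation $\pi_k$. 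We treat $E$ as a dummy bad-event: it is never resampled during the main loop, but we append a single fictitious resampling of $E$ to the execution log at the moment the main loop terminates, provided $E$ is then true. By Theorem~\ref{thm:constr-lll} the main loop terminates with probability one under the $\mu$-condition, so this is well defined; moreover this fictitious step, being last, does not alter the output before it, and $P(E\text{ holds in the output})$ equals the probability that the fictitious resampling is triggered.

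I would then run the witness-tree construction of Section~\ref{witness-tree-sec} rooted at this last log entry, obtaining a random witness tree $\hat\tau_E$ whose root is labelled $E$ and whose other nodes are labelled by genuine bad-events. The key observation is that nothing in the analysis of Sections~\ref{witness-tree-sec}--\ref{total-prob-sec} (Propositions~\ref{change-prop}, \ref{future-prop1}, \ref{future-prop2}, \ref{exchange-prop2a}, \ref{exchange-prop2}, Lemma~\ref{active-change-lemma}, and the proof of Lemma~\ref{witness-tree-lemma}) uses that the root of the tree corresponds to a \emph{genuine} resampling: it only uses that the root event is true immediately before the swap associated with it, and that that swap is a call to the Swap subroutine. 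Both hold here. In particular, for each $k$ the future-subgraph of permutation $\pi_k$ at the termination time $T$ is exactly the set of isolated source nodes $\{(x_i,y_i):k_i=k\}$, whose active conditions are precisely the statement ``$E$ is true at time $T$''. Consequently the proof of Lemma~\ref{witness-tree-lemma} goes through verbatim, the factor $(n_k-A_k^0)!/n_k!$ for the initial permutation still discharging against $\prod_k n_k!/(n_k-A_k^0)!$, and yields, for any fixed witness tree $\tau$ rooted at a node labelled $E$ with remaining node-labels $B_2,\dots,B_s$,
\[
P(\hat\tau_E=\tau)\ \le\ P_{\Omega}(E)\,P_{\Omega}(B_2)\cdots P_{\Omega}(B_s).
\]

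Finally I would sum over the set $\mathcal{T}_E$ of all witness trees rooted at a node labelled $E$:
\[
P(E\text{ in the output})\ \le\ \sum_{\tau\in\mathcal{T}_E}P(\hat\tau_E=\tau)\ \le\ \sum_{\tau\in\mathcal{T}_E}\ \prod_{v\in\tau}P_{\Omega}(B_v),
\]
and then apply the standard Moser--Tardos forest summation (as in \cite{moser-tardos, haeupler}, which needs only the $\mu$-condition). Peeling the root off, and using that the root of a witness tree has at most one child carrying any given label, this sum factors as
\[
P_{\Omega}(E)\prod_{B'\sim E}\Bigl(1+\sum_{\tau'\in\mathcal{T}_{B'}}\ \prod_{v\in\tau'}P_{\Omega}(B_v)\Bigr)\ \le\ P_{\Omega}(E)\prod_{B'\sim E}(1+\mu(B')),
\]
where the last inequality uses $\sum_{\tau'\in\mathcal{T}_{B'}}\prod_{v\in\tau'}P_{\Omega}(B_v)\le\mu(B')$, which is exactly the Moser--Tardos bound that follows from Lemma~\ref{witness-tree-lemma} together with the hypothesis $\mu(B')\ge P_{\Omega}(B')\prod_{B''\sim B'}(1+\mu(B''))$. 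This gives the claimed bound.

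The only place where real care is needed is the middle step: verifying that the machinery of Sections~\ref{perm-conditions}--\ref{total-prob-sec} survives the replacement of the root by the dummy event $E$. The delicate points are (i) that the inductive identity $\text{Proj}(\hat\tau^T_{\ge t})=G_t$ of Proposition~\ref{future-prop1} and the W-configuration conditions of Proposition~\ref{future-prop2} remain valid when the top of the tree is $E$, so that at time $T$ the condition ``$\pi^T$ satisfies $\text{Active}(G_T)$'' unwinds precisely to ``$E$ is true'' (consistent with the fictitious resampling being triggered); and (ii) that inserting the fictitious log entry at the very end corrupts no witness tree rooted at a genuine bad-event, since it can never be placed below any other node. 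Everything else is a transcription of the proofs already given.
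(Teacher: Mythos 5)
Your proposal is correct and is essentially the argument the paper has in mind: the paper's own ``proof'' simply defers to the Moser--Tardos distribution argument of Haeupler et al.\ and asserts the extension to the Swapping Algorithm is straightforward, and what you have written is precisely that extension spelled out (dummy root event $E$, fictitious final log entry, Witness Tree Lemma applied to trees rooted at $E$, then the standard forest summation). The care you take at the middle step --- checking that the machinery of Sections~\ref{perm-conditions}--\ref{total-prob-sec} never uses that the root corresponds to a genuine resampling --- is exactly the point the paper leaves implicit.
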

\begin{proof}
See \cite{haeupler} for the proof of this for the ordinary MT algorithm; the extension to the Swapping Algorithm is straightforward.
\end{proof}

\textbf{Bounds on the depth of the resampling process.} One key requirement for parallel variants of the Moser-Tardos algorithm appears to be that the resampling process has logarithmic depth. This is equivalent to showing that there are no deep witness trees. This follows easily from the Witness Tree Lemma, along the same lines as in the original paper of Moser \& Tardos, but appears to be very difficult in the other LLLL frameworks.

\textbf{Partial resampling.} In \cite{harris-srin2}, a partial resampling variant of the Moser-Tardos algorithm was developed. In this variant, one only resamples a small, random subset of the variables (or, in our case, permutation elements) which determine a bad-event. To analyze this variant, \cite{harris-srin2} developed an alternate type of witness tree, which only records the variables which were actually resampled. Ignoring the other variables can drastically prunes the space of witness trees. Again, this does not seem to be possible in other LLLL frameworks in which the \emph{full} execution log must be recorded. We will see an example of this in Theorem~\ref{szabo-thm}; we do not know of any way to show results such as Theorem~\ref{szabo-thm} using the frameworks of either Achlioptas \& Iliopoulos or Harvey \& Vondr\'{a}k.

\section{A parallel version of the Swapping Algorithm}
\label{sec:parallel}
The Moser-Tardos resampling algorithm for the ordinary LLL can be transformed into an RNC algorithm by allowing a slight slack in the LLL's sufficient condition \cite{moser-tardos}. The basic idea is that in every round, we select a \emph{maximal independent set} of bad-events to resample. Using the known distributed/parallel algorithms for MIS, this can be done in RNC; the number of resampling rounds is then shown to be logarithmic whp (``with high probability"), in 
\cite{moser-tardos}.

In this section, we will describe a parallel algorithm for the Swapping Algorithm, which runs along the same lines. However, everything is more complicated than in the case of the ordinary LLL. In the Moser-Tardos algorithm, events which are not connected to each other cannot affect each other in any way. For the permutation LLL, such events can interfere with each other, but do so rarely. Consider the following example. 
Suppose that at some point we have two active bad-events, ``$\pi_k(1) = 1$" and
``$\pi_k(2) = 2$" respectively, and so we decide to resample them simultaneously (since they are not
connected to each other, and hence constitute an independent set).
When we are resampling the bad-event $\pi_k(1) = 1$, we may swap $1$ with $2$; in this case, we are automatically fixing the second bad-event as well. The sequential algorithm, in this case, would only swap a single element. The parallel algorithm should likewise \emph{not} perform a second swap for the second bad-event, or else it would be over-sampling. Avoiding this type of conflict is quite tricky.

Let $n = n_1 + \dots + n_K$; since the output of the algorithm will be the contents of the permutations $\pi_1, \dots, \pi_k$, this algorithm should be measured in terms of $n$, and we must show that this algorithm runs in $\log^{O(1)} n$ time.  We will make the following assumptions in this section.  First, we assume that $|\mathcal B|$, the total number of potential bad-events, is polynomial in $n$. This assumption can be relaxed if we have the proper kind of ``separation oracle'' for $\mathcal B$. Next, we assume that every element $B \in \mathcal B$ has size $|B| \leq M = \log^{O(1)} n$; this holds in many cases.

We describe the following Parallel Swapping Algorithm:
\begin{enumerate}
\item In parallel, generate the permutations $\pi_1, \dots, \pi_N$ uniformly at random.
\item We proceed through a series of \emph{rounds} while there is some true bad-event. In round $i$ ($i = 1, 2, \ldots,$) do the following:
\begin{enumerate}
\item[(3)] Let $\mathcal V_{i,1} \subseteq \mathcal B$ denote the set of bad-events which are currently true at the beginning of round $i$. We will attempt to fix the bad-events in $\mathcal V_{i,1}$ through a series of \emph{sub-rounds}. This may introduce new bad-events, but we will not fix any newly created bad-events until round $i+1$. 

We repeat the following for $j = 1, 2, \dots$ as long as $\mathcal V_{i,j} \neq \emptyset$:
\begin{enumerate}
\item[(4)] Let $I_{i,j}$ be a maximal independent set (MIS) of bad-events in $\mathcal V_{i,j}$.
\item[(5)] For each true bad-event $B \in I_{i,j}$, choose the swaps corresponding to $B$. Namely, if we have some bad-event $B$ involving triples $(k_1, x_1, y_1), \dots, (k_r, x_r, y_r)$, then we select each $z_l \in [n_{k_l}]$, which is the element to be swapped with $\pi_{k_l}(x_l)$ according to procedure Swap. \emph{Do not perform the indicated swaps at this time though!} 
 We refer to $(k_1, x_1), \dots, (k_r, x_r)$ as the swap-sources of $B$ and the $(k_1, z_1)$, $\dots$, $(k_r, z_r)$ as the swap-mates of $B$.
\item[(6)]  Select a random ordering $\rho_{i,j}$ of the elements of $I_{i,j}$. Consider the graph $G_{i,j}$ whose vertices correspond to elements of $I_{i,j}$: add an edge connecting $B$ with $B'$ if $\rho_{i,j}(B) < \rho_{i,j}(B')$ \emph{and} one of the swap-mates of $B$ is a swap-source of $B'$.
Generate $I'_{i,j} \subseteq I_{i,j}$ as the \emph{lexicographically-first MIS} (LFMIS) of the resulting graph $G_{i,j}$, with respect to the vertex-ordering $\rho_{i,j}$.
\item[(7)] For each permutation $\pi_k$, enumerate all the transpositions $(x \ z)$ corresponding to elements of $I'_{i,j}$, arranged in order of $\rho_{i,j}$.   Say these transpositions are, in order $(x_1, z_1), \dots (x_l, z_l)$, where $l \leq n$. Compute, in parallel for all $\pi_k$, the composition $\pi_k' = \pi_k (x_l \ z_l) \dots (x_1 \ z_1)$.
\item[(8)] Update $\mathcal V_{i,j+1}$ from $\mathcal V_{i,j}$ by removing all elements which are either no longer true for the current permutation, \emph{or} are connected via $\sim$ to some element of $I'_{i,j}$.
\end{enumerate}
\end{enumerate}
\end{enumerate}

Most of the steps of this algorithm can be implemented using standard parallel algorithms. For example, step (1) can be performed simply by having each element of $[n_k]$ choose a random real and then executing a parallel sort. The independent set $I_{i,j}$ can be found in time in polylogarithmic time using \cite{abi:mis,luby:mis}. 

The difficult step to parallelize is in selecting the LFMIS $I'_{i,j}$. In general, the problem of finding the LFMIS is P-complete \cite{lfmis-p}, hence we do not expect a generic parallel algorithm for this. However, what saves us it that the ordering $\rho_{i,j}$ and the graph $G_{i,j}$ are constructed in a highly random fashion. 

This allows us to use the following greedy algorithm to construct $I'_{i,j}$, the LFMIS of $G_{i,j}$:
\begin{enumerate}
\item Let $H_1$  be the directed graph obtained by orienting all edges of $G_{i,j}$ in the direction of $\rho_{i,j}$. Repeat the following for $s = 1, 2, \dots,$:
\begin{enumerate}
\item[(2)] If $H_s = \emptyset$ terminate.
\item[(3)] Find all source nodes of $H_s$. Add these to $I'_{i,j}$.
\item[(4)]  Construct $H'_{s+1}$ by removing all source nodes and all successors of source nodes from $H'_s$.
\end{enumerate}
\end{enumerate}
The output of this algorithm is the LFMIS $I'_{i,j}$. Each step can be implemented in parallel time $O(1)$. The number of iterations of this algorithm is the length of the longest directed path in $G'_{i,j}$. So it suffices it show that, whp, all directed paths in $G'_{i,j}$ have length at most polylogarithmic in $n$.

\begin{proposition}
Let $I \subseteq \mathcal B$ be an an arbitrary independent set of true bad-events, and suppose all elements of $\mathcal B$ have size $\leq M$. Let $G = G_{i,j}$ be the graph constructed in Step (6) of the Parallel Swapping Algorithm.

Then whp, every directed path in $G$ has length $O(M + \log n)$.
\end{proposition}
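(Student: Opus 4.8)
The plan is a first-moment argument over vertex-sequences that separates the two independent sources of randomness used to build $G$: the swap-mates chosen in Step~(5) and the ordering $\rho=\rho_{i,j}$ chosen in Step~(6). Fix a sequence $B_1,\dots,B_\ell$ of \emph{distinct} bad-events of $I$. For it to be a directed path of $G$ we need (i) $\rho(B_1)<\rho(B_2)<\dots<\rho(B_\ell)$, and (ii) for each $i$ some swap-mate of $B_i$ equals some swap-source of $B_{i+1}$; write $E_i$ for the latter event. Since $\rho$ is a uniformly random ordering of $I$, (i) has probability exactly $1/\ell!$. Each $E_i$ is a function only of the (fresh) Step-(5) randomness used for $B_i$, because the swap-sources of $B_{i+1}$ are deterministic; as $B_1,\dots,B_{\ell-1}$ are distinct, the events $E_1,\dots,E_{\ell-1}$ use disjoint blocks of randomness, hence are mutually independent and independent of $\rho$. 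Therefore
$$
P\big(B_1\to\dots\to B_\ell\ \text{in}\ G\big)=\frac{1}{\ell!}\prod_{i=1}^{\ell-1}p(B_i,B_{i+1}),\qquad p(B,B'):=P(\text{a swap-mate of }B\text{ is a swap-source of }B').
$$

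The next step is to bound the branching factor $\beta:=\max_{B}\sum_{B'\in I}p(B,B')$. The $j$-th swap-mate chosen for $B$ inside a permutation $k$ is uniform over a set of size at least $n_k-|B|+1\ge n_k-M+1$, so for permutations with $n_k\ge 2M$ we get $p(B,B')\le\sum_{k}\tfrac{r_k(B)\,r_k(B')}{n_k-M+1}$, where $r_k(\cdot)$ counts the elements of a bad-event lying in permutation $k$. Here independence of $I$ is essential: distinct members of $I$ never share a domain slice $(k,x)$, so in any single permutation $k$ the domain-coordinates used by the various $B'\in I$ are pairwise disjoint, hence of total size $\le n_k$. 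Thus the total contribution of a permutation with $n_k\ge 2M$ to $\sum_{B'\in I}p(B,B')$ is at most $\tfrac{n_k}{n_k-M+1}\,r_k(B)\le 2r_k(B)$, and summing over the at most $M$ permutations meeting $B$ gives $\beta\le 2M=O(M)$. (Permutations with $n_k<2M$ are handled separately: a maximal run of consecutive edges of a directed path all routed through one such $k$ visits pairwise distinct members of $I$ that all involve $k$, and there are at most $n_k<2M$ of those, so all such permutations together inflate any path length by only an additive $O(M)$ and can be dropped from the branching estimate.)

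Putting the pieces together, summing over all sequences $B_1,\dots,B_\ell$ of distinct bad-events of $I$ and peeling one vertex at a time,
$$
\mathbf{E}\big[\#\{\text{directed paths on }\ell\text{ vertices in }G\}\big]\ \le\ \frac{1}{\ell!}\sum_{B_1,\dots,B_\ell}\ \prod_{i=1}^{\ell-1}p(B_i,B_{i+1})\ \le\ \frac{|I|\,\beta^{\ell-1}}{\ell!}\ \le\ |I|\Big(\frac{e\beta}{\ell}\Big)^{\ell},
$$
using $\ell!\ge(\ell/e)^\ell$. Since $|I|\le|\mathcal B|=n^{O(1)}$ and $\beta=O(M)$, taking $\ell=C(M+\log n)$ for a large enough absolute constant $C$ makes the right-hand side at most $n^{-10}$ (say); by Markov's inequality, with high probability $G$ has no directed path on more than $C(M+\log n)$ vertices, which is the claim.

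The crux is the branching estimate $\beta=O(M)$: this is what lets $\ell!$ beat $\beta^{\ell}$, and it genuinely needs \emph{both} structural facts --- independence of $I$, so that each of the $\le M$ swap-mates of $B$ has at most one candidate successor in $I$, and near-uniformity of the Swap subroutine, so that each candidate actually occurs with probability only $\approx 1/n_k$ --- together with the bookkeeping above for permutations that are tiny compared with $M$. The remaining ingredients (the factorization of the path probability, the $1/\ell!$ from the random ordering, and the final tail computation) are routine, following the parallel Moser--Tardos analysis.
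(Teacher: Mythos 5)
Your argument is a valid first-moment bound, but the route is genuinely different from the paper's. The paper never uses the randomness of the swap-mates: it observes that each $B \in I$ has exactly $|B| \leq M$ swap-mates, and since each swap-mate is a domain position $(k,z)$ and $I$ is independent, that position can be a swap-source of at most one member of $I$. So the out-degree of every vertex of $G$ is at most $M$ \emph{deterministically}, for every realization of the swaps. The paper then computes the probability that a uniformly random $l$-tuple drawn without replacement from $I$ is increasing under $\rho$ (probability $1/l!$) and that each consecutive pair is one of the $\leq M$ candidate edges (conditional probability $\leq M/(|I|-s)$ at step $s$), finishing with a union bound over tuples. Your proof instead bounds the \emph{expected} out-degree $\beta$ using near-uniformity of the Swap subroutine, getting the same tail bound but at the cost of controlling $\beta$ across all permutation sizes. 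Both are sound first-moment arguments; the paper's pigeonhole is simply cleaner in that no case distinction on $n_k$ ever arises.

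That brings up the one real gap: your handling of permutations with $n_k < 2M$ does not work as written. A directed path may interleave edges routed through small and large permutations arbitrarily, so ``maximal runs'' through a fixed small $k$ need not add up to an additive $O(M)$, and such permutations cannot simply be ``dropped'' from the branching estimate --- a path could consist entirely of edges through small permutations. Fortunately, the case split is unnecessary if you tighten the independence step. Since $B \in I$ and all members of $I$ occupy disjoint domain slices of $\pi_k$, you actually have $\sum_{B' \in I,\, B' \neq B} r_k(B') \leq n_k - r_k(B)$, not merely $\leq n_k$, while the $i$-th swap-mate of $B$ in $\pi_k$ is uniform over a set of size $n_k - i + 1 \geq n_k - r_k(B) + 1$. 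Hence
\begin{align*}
\sum_{B' \in I,\, B' \neq B} p(B,B') &\leq \sum_k \frac{r_k(B)}{n_k - r_k(B) + 1}\sum_{B' \in I,\, B' \neq B} r_k(B') \\
&\leq \sum_k r_k(B)\cdot\frac{n_k - r_k(B)}{n_k - r_k(B)+1} \;\leq\; \sum_k r_k(B) \;\leq\; M,
\end{align*}
valid for every $n_k$. (Note that $\beta$ should be defined with the restriction $B' \neq B$, which is needed anyway since the path visits distinct vertices; $p(B,B)$ by itself can be close to $1$ when $n_k$ is close to $r_k(B)$.) With this correction $\beta \leq M$, and the rest of your computation goes through unchanged.
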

\begin{proof}
 One of the main ideas below is to show that for the \emph{typical} 
$B_1, \dots, B_l \in I$, where $l = 5(M + \log n)$, the probability that $B_1, \dots, B_l$ form a directed path is small.  Suppose we select $B_1, \dots, B_l \in I$ uniformly at random without replacement. Let us analyze how these could form a directed path in $G$. (We may assume $|I| > l$ or otherwise the result holds trivially.)

First, it must be the case that $\rho(B_1) < \rho(B_2) < \dots < \rho(B_l)$. This occurs with probability $1/l!$.

Next, it must be that the swap-mates of $B_s$ overlap the swap-sources of $B_{s+1}$, for $s = 1, \dots, l-1$. Now, $B_s$ has $O(M)$ swap-mates; each such swap-mate can overlap with at most one element of $I$, since $I$ is an independent set.  Conditional on having chosen $B_1, \dots, B_s$, there are a remaining $|I|-s$ choices for $B_{s+1}$. This gives that the probability of having $B_s$ with an edge to $B_{s+1}$, conditional on the previous events, is at most $\frac{M}{|I|-s}$. (The fact that swap-mates are chosen randomly does not give too much of an advantage here.)

Putting this all together, the total probability that there is a directed path on $B_1, \dots, B_l$ is
$$
P(\text{directed path $B_1, \dots, B_l$}) \leq \frac {M^{l-1} ( |I| - l ) !}{ ( |I|-1 )! l!}
$$

Since the above was for a random $B_1, \dots, B_l$, the probability that there is \emph{some} such path (of length $l$) is at most
{\allowdisplaybreaks
\begin{align*}
P(\text{some directed path}) & \leq  \frac{|I|!}{(|I| - l)!} \times \frac{M^{l-1} (|I| - l)!}{(|I|-1)! l!} \\
&= |I| \times  \frac{M^{l-1}}{l!} \leq  n \times  \frac{M^{l-1}}{(l/e)^l} \leq n^{-\Omega(1)},
\end{align*}
}
since  $l = 5(M + \log n)$. 
\end{proof}

So far, we have shown that each sub-round of the Parallel Swapping Algorithm can be executed in parallel time $\log^{O(1)} n$. Next, we show that whp that number of sub-rounds corresponding to any round is bounded by $\log^{O(1)} n$.

\begin{proposition}
Suppose $|\mathcal B| = n^{O(1)}$ and all elements $B \in \mathcal B$ have size $|B| \leq M$. Then whp, we have $\mathcal V_{i,j} = \emptyset$ for some $j = O(M \log^2 n)$.
\end{proposition}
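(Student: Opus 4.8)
The plan is to show that, within a fixed round $i$, every bad-event present in $\mathcal{V}_{i,1}$ is deleted within $O(M\log^2 n)$ sub-rounds except with probability $n^{-\omega(1)}$, and then to union-bound over the $|\mathcal{B}| = n^{O(1)}$ bad-events (and, for the statement to hold for all rounds at once, over the $O(\log n)$ rounds). Two elementary structural facts drive everything. First, within a round the sets $\mathcal{V}_{i,j}$ are nested decreasing, since Step~(8) only deletes elements; so ``$B\in\mathcal{V}_{i,j}$'' is a decreasing sequence of events in $j$. Second, a bad-event $B$ is deleted in Step~(8) of sub-round $j$ as soon as $B\sim B'$ for some $B'\in I'_{i,j}$. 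Hence it suffices to prove: conditioned on an arbitrary history of the algorithm through the choice of $I_{i,j}$, if $B\in\mathcal{V}_{i,j}$ then the probability that some member of $I'_{i,j}$ is $\sim$-adjacent to $B$ is $\Omega(1/M)$.

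Fix $B\in\mathcal{V}_{i,j}$. Since $I_{i,j}$ is a maximal $\sim$-independent subset of $\mathcal{V}_{i,j}$, there is some $C\in I_{i,j}$ with $C\sim B$ (either $B\in I_{i,j}$, or maximality forces such a $C$); fix one. All elements of $\mathcal{V}_{i,j}$, hence of $I_{i,j}$, are true at the start of sub-round $j$, so the swaps for $C$ are chosen in Step~(5), and if $C\in I'_{i,j}$ then $B$ is deleted in Step~(8). So it remains to show $P(C\in I'_{i,j})=\Omega(1/M)$, the probability being over the fresh randomness of sub-round $j$ (the swap-mate choices and the random order $\rho_{i,j}$), with the prior history fixed.

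The heart of the argument is a bound on the degree of $C$ in the conflict graph $G_{i,j}$ of Step~(6), counting both in- and out-edges. The out-degree is at most $M$ deterministically: $C$ has at most $|C|\le M$ swap-mates, and each swap-mate $(k,z)$ is a swap-source of at most one member of $I_{i,j}$, since two members of $I_{i,j}$ containing $(k,z,*)$ would be $\sim$-adjacent. The in-degree is $O(M)$ in expectation: $C$ has at most $M$ swap-sources, and for a fixed swap-source $(k,x)$ of $C$, each $B'\in I_{i,j}$ with $r'$ triples $(k,*,*)$ generates $r'$ swap-mates in permutation $k$, each essentially uniform on $[n_k]$, so the probability one of them equals $x$ is at most $r'/(n_k-M+1)$; summing over $B'\in I_{i,j}$ and using that these $B'$ have pairwise-disjoint domains in permutation $k$ (so $\sum_{B'}r'\le n_k$), the expected number of in-neighbours of $C$ through $(k,x)$ is $O(1)$, and over the $\le M$ swap-sources of $C$ this is $O(M)$. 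Thus $\mathbf{E}[\deg_{G_{i,j}}(C)]=O(M)$. Since $\rho_{i,j}$ is an independent uniformly random permutation, $C$ belongs to the LFMIS $I'_{i,j}$ whenever it precedes all of its $G_{i,j}$-neighbours, so $P(C\in I'_{i,j}\mid G_{i,j})\ge 1/(\deg_{G_{i,j}}(C)+1)$, and by Jensen's inequality for the convex function $t\mapsto 1/(t+1)$ we obtain $P(C\in I'_{i,j})\ge 1/(\mathbf{E}[\deg_{G_{i,j}}(C)]+1)=\Omega(1/M)$.

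Putting this together, $P(B\in\mathcal{V}_{i,j+1}\mid B\in\mathcal{V}_{i,j})\le 1-\Omega(1/M)$ uniformly over the history, so by the nesting, $P(B\in\mathcal{V}_{i,j})\le(1-\Omega(1/M))^{j-1}$, which is $n^{-\omega(1)}$ once $j=\omega(M\log n)$; taking $j=O(M\log^2 n)$ makes this far smaller than $1/(|\mathcal{B}|\cdot n)$, and a union bound over all $B\in\mathcal{B}$ (and over the rounds) finishes the proof. The main obstacle is the in-degree estimate for $C$ in $G_{i,j}$: this is the one place where one must genuinely combine the (near-)uniform randomness of the swap-mates with the disjointness of domains and ranges forced by $\sim$-independence of the MIS; the remainder is a routine random-order greedy-MIS computation.
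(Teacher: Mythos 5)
Your proposal is correct and follows the same overall strategy as the paper's proof: by maximality of $I_{i,j}$, every surviving $B\in\mathcal V_{i,j}$ has a $\sim$-neighbour $C$ in the MIS, and one lower-bounds the probability that $C$ survives into the LFMIS $I'_{i,j}$ using the fresh randomness of the swap-mates together with the random order $\rho_{i,j}$; the disjointness of domains/ranges forced by $\sim$-independence of $I_{i,j}$ is the crucial ingredient in both arguments. The two proofs differ in how the survival probability is extracted. The paper tracks only the potential \emph{in}-neighbours of $C$ (those $B'$ whose swap-mates hit a swap-source of $C$), assigns each MIS element a uniform rank $W\in[0,1]$, bounds the expected number of conflicting predecessors at rank $w$ by $wM\ln n$ via the harmonic sum $\sum_l 1/(n_k-l+1)\le\ln\tfrac{n_k}{n_k-r'}$ and concavity, and then applies Markov and integrates over $w$, obtaining a survival probability of order $1/(M\log n)$; you instead bound the full undirected potential-conflict degree of $C$ and use the ``first among its neighbours'' event plus Jensen, obtaining the stronger $\Omega(1/M)$. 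Your route is arguably cleaner and saves a $\log$ factor, but two points need attention. First, the event ``$C$ precedes all of its $G_{i,j}$-neighbours'' should be phrased with respect to the $\rho$-independent undirected conflict structure (a swap-mate of one of $C,B'$ is a swap-source of the other), since $G_{i,j}$ itself is defined using $\rho_{i,j}$; conditioning on $G_{i,j}$ and then randomizing $\rho_{i,j}$ is circular as written, though the fix is purely notational. Second, your linearized estimate $r'/(n_k-M+1)$ gives an $O(1)$ expected in-degree per swap-source only when $M\le(1-\epsilon)n_k$; if some $n_k$ is comparable to $M$ this can degrade to $\Theta(n_k)$ per swap-source, and one should fall back on the harmonic-sum/concavity bound $\le\ln n_k$, which costs the $\log n$ factor and recovers exactly the paper's $O(M\log^2 n)$. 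Your final step (per-element survival probability $(1-\Omega(1/M))^{j-1}$ plus a union bound over $\mathcal B$) is an equivalent repackaging of the paper's expected-size recursion $\bE\bigl[|\mathcal V_{i,j+1}|\mid\mathcal V_{i,j}\bigr]\le(1-\tfrac{1}{2M\ln n})|\mathcal V_{i,j}|$ followed by Markov's inequality.
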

\begin{proof}
We will first show the following: Let $B \in I$, where $I$ is an arbitrary independent set of $\mathcal B$. Then with probability at least $1 - \frac{1}{2 M \ln n}$ we have $B \in I'$ as well, where $I'$ is the LFMIS associated with $I$.

Observe that if there is no $B' \in I$ such that $\rho(B') < \rho(B)$ and such that a swap-mate of $B'$ overlaps with a swap-source of $B$, then $B \in I'$ (this is not a necessary condition). We will analyze the ordering $\rho$ using the standard trick, in which each element $B \in I$ chooses a rank $W(B) \sim \text{Uniform}[0,1]$, independently and identically. The ordering $\rho$ is then formed by sorting in increasing ordering of $W$. In this way, we are able to avoid the dependencies induced by the rankings. For the moment, let us suppose that the rank $W(B)$ is \emph{fixed} at some real value $w$. We will then count how many $B' \in I$ satisfy $W(B') < w$ and a swap-mate of $B'$ overlaps a swap-source of $B$.

So, let us consider some swap-source $s$ of $B$ in permutation $k$, and consider some $B'_j \in I$ which has $r'_j$ other elements in permutation $k$. For $l = 1, \dots, r'_j$, there are $n_k - l + 1$ possible choices for the $l^{\text{th}}$ swap-mate from $B'_j$, and hence the total expected number of swap-mates of $B'$ which overlap $s$ is at most
\begin{align*}
\bE[ \text{ \# swap-mates of $B'_j$ overlapping $s$} ] &\leq \sum_{l=1}^{r'_j} \frac{1}{n_k - l + 1} \\
&\leq \int_{l=1}^{r'_j+1} \frac{1}{n_k - l + 1} dl \\
&=\ln (\frac{n_k}{n_k-r'_j})
\end{align*}

Next, sum over all $B'_j \in I$. Observe that since $I$ is an independent set, we must have $\sum r'_j \leq n_k-1$. Thus, 
\begin{align*}
\bE[ \text{ \# swap-mates of some $B'_j$ overlapping $s$} ] &\leq \sum_j \ln(\frac{n_k}{n_k-r'_j}) \\
 &\leq \ln(\frac{n_k}{n_k-\sum_j r'_j}) \qquad \text{by concavity} \\
 &\leq \ln n_k \leq \ln n \\
\end{align*}

Thus, summing over all swap-sources of $B$, the total probability that there is some $B'$ with $\rho(B') \leq B$ and for which a swap-mate overlaps a swap-source of $B$, is at most $w |B| \ln n \leq w M \ln n$. By Markov's inequality, we have
$$
P( B' \in I' \mid W(B) = w ) \geq 1 - w M \ln n
$$

Integrating over $w$, we have that $B' \in I'$ with probability at least
$$
P( B' \in I'  ) \geq 1 - \frac{1}{2 M \ln n}
$$

Now, using this fact, we show that $\mathcal V_{i,j}$ is decreasing quickly in size.  For, suppose $B \in \mathcal V_{i,j}$. So $B \sim B'$ for some $B' \in I_{i,j}$, as $I_{i,j}$ is a maximal independent set (possibly $B = B'$). We will remove $B$ from $\mathcal V_{i,j+1}$ if $B' \in I'_{i,j}$, which occurs with probability at least  $1 - \frac{1}{2 M \ln n}$. As $B$ was an arbitrary element of $\mathcal V_{i,j}$, this shows that $\bE\bigl[|\mathcal V_{i,j+1}| \mid \mathcal V_{i,j}\bigr] \leq (1 - \frac{1}{2 M \ln n}) |\mathcal V_{i,j}|$. 

For $j = \Omega(M \log^2 n)$,  this implies that 
\begin{align*}
\bE\bigl[ |\mathcal V_{i,j}| \bigr] &\leq  (1 - \frac{1}{2 M \ln n})^{\Omega(M \log^2 n)} |\mathcal V_{i,1}| \leq n^{-\Omega(1)}
\end{align*}

This in turn implies that $\mathcal V_{i,j} = \emptyset$ with high probability, for $j = \Omega( M \log^2 n )$.
\end{proof}

To finish the proof, we must show that the number of rounds is itself bounded whp.  We begin by showing that Witness Tree Lemma remains valid in the parallel setting.
\begin{proposition}
When we execute this parallel swapping algorithm, we may generate an ``execution log'' according to the following rule: suppose that we resample $B$ in round $i,j$ and $B'$ in round $i', j'$. Then we place $B$ before $B'$ iff:
\begin{enumerate}
\item $i < i'$; OR
\item $i = i'$ AND $j < j'$; OR
\item $i = i'$ and $j = j'$ and $\rho_{i,j}(B) < \rho_{i', j'} (B')$
\end{enumerate}
that is, we order the resampled bad-events lexicographically by round, sub-round, and then rank $\rho$.

Given such an execution log, we may also generate witness trees in the same manner as the sequential algorithm.

Now let $\tau$ be any witness tree; we have
$$
P(\text{$\tau$ appears}) \leq \prod_{B \in \tau} P_{\Omega}(B)
$$
\end{proposition}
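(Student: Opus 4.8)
The plan is to reduce everything to the already-established serial Witness Tree Lemma (Lemma~\ref{witness-tree-lemma}). Concretely, I will argue that the execution log defined in the statement is, with probability one, a legitimate run of the serial Swapping Algorithm on the same initial permutations $\pi_1^0,\dots,\pi_N^0$: at each position of the log a currently-true bad-event is resampled, and the swap performed there uses the mates pre-selected in Step~(5), which are distributed exactly as the indices drawn by the \texttt{Swap} subroutine and are independent of the past and the future. Once this is in place, the construction of witness trees from the log is verbatim the serial one, and every ingredient of the proof of Lemma~\ref{witness-tree-lemma} --- Propositions~\ref{change-prop}, \ref{future-prop1}, \ref{future-prop2} and, crucially, the conditional estimate of Proposition~\ref{exchange-prop2} --- uses only that the algorithm processes a sequence of true bad-events with fresh independent swaps; it is insensitive to the (here, randomized and history-dependent) rule by which the next bad-event is chosen. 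Hence the same induction yields $P(\tau\text{ appears})\le\prod_{B\in\tau}P_{\Omega}(B)$.

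First I would check that the log is a valid serial run. The only nontrivial point is internal to a single sub-round $(i,j)$: there the algorithm picks the mates of all of $I_{i,j}$ (Step~5), builds $G_{i,j}$ and the random order $\rho_{i,j}$, extracts the LFMIS $I'_{i,j}$, and only then performs the transpositions of $I'_{i,j}$ in $\rho_{i,j}$-order (Step~7). Serializing this sub-round means: starting from the permutations at the beginning of the sub-round, resample the bad-events of $I'_{i,j}$ one at a time in $\rho_{i,j}$-order using their pre-chosen mates. This produces exactly the permutation computed in Step~(7), since composing transpositions is the same as applying them successively; so it remains only to show that every $B\in I'_{i,j}$ is still true when its turn comes. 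That follows from $I'_{i,j}$ being an \emph{independent} set of $G_{i,j}$: if an earlier-processed $B'\in I'_{i,j}$ (so $\rho_{i,j}(B')<\rho_{i,j}(B)$) touched a swap-source $(k,x)$ of $B$, then either $(k,x)$ is also a swap-source of $B'$, forcing $B'\sim B$, which is impossible since $I_{i,j}$ is independent in the $\sim$-graph; or $(k,x)$ is a swap-mate of $B'$, which by the rule of Step~(6) puts an edge between $B'$ and $B$ in $G_{i,j}$, contradicting $B,B'\in I'_{i,j}$. So no kept swap disturbs a later kept bad-event's swap-sources, and each $B\in I'_{i,j}$ is resampled while true. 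Across sub-rounds and rounds there is nothing more to verify: Steps~(5) and~(8) only ever resample currently-true bad-events, and the log simply concatenates these resamplings in time order, so it is an initial segment of a serial run (the serial algorithm is merely allowed to leave some true bad-events un-resampled for a while, which does not affect the proof of Lemma~\ref{witness-tree-lemma}, as that proof never uses that resampling is greedy).

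Next I would check the randomness bookkeeping needed to invoke Proposition~\ref{exchange-prop2} at each step of the log. The mates chosen in Step~(5) are, per bad-event and per occasion, i.i.d.\ uniform and match the \texttt{Swap} distribution (including the ``$[t]-\{x_1,\dots,x_{i-1}\}$'' restriction for multi-element bad-events); the orderings $\rho_{i,j}$ (equivalently, i.i.d.\ ranks) and the mates of bad-events that end up \emph{outside} $I'_{i,j}$ are auxiliary randomness independent of everything else. The key observation is that whether a bad-event $B$ is placed in $I'_{i,j}$ --- hence whether and where it appears in the log --- depends only on the state at the start of round~$i$, on the ranks, and on the mates of bad-events preceding $B$ under $\rho_{i,j}$; it never depends on $B$'s own mates, since an in-edge of $B$ in $G_{i,j}$ involves only a predecessor's mates together with $B$'s fixed swap-sources. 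Thus, conditioned on the entire log-history up to just before $B$ is resampled, the mates of $B$ are still conditionally uniform and independent, which is exactly what Proposition~\ref{exchange-prop2} requires; the induction in the proof of Lemma~\ref{witness-tree-lemma} then goes through unchanged.

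I expect the main obstacle to be the non-interference argument of the second paragraph: making fully precise that the LFMIS rule of Step~(6) is exactly strong enough to serialize a sub-round without any kept bad-event ever becoming false before it is processed, while being weak enough that membership in $I'_{i,j}$ does not peek at the resampling coins of the very bad-events it selects. Everything else is essentially a re-reading of Sections~\ref{witness-tree-sec}--\ref{total-prob-sec} with ``time'' reinterpreted as position in the execution log.
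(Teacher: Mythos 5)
Your proposal is correct and follows the same route as the paper's proof: show that the lexicographic execution log is a legitimate run of the serial Swapping Algorithm (with some history-dependent, randomized selection rule that never peeks at the next swap's coins), then invoke the serial Witness Tree Lemma verbatim. The paper's own argument is essentially a terse sketch of exactly this reduction --- in particular its observation that the choice of whether/when $B$ is resampled depends only on earlier rounds, earlier sub-rounds, and lower-rank events in the same sub-round, which is why the lexicographically-first MIS is required --- and your write-up supplies the non-interference and conditional-independence details that the paper leaves implicit.
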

\begin{proof}
Observe that the choice of swaps for a bad-event $B$ at round $i$, subround $j$, and rank $\rho_{i,j}(B)$, is only affected by the events in earlier rounds / subrounds as well as other $B' \in I_{i,j}$ with $\rho_{i,j}(B') < \rho_{i,j}(B)$. 

Thus, we can view this parallel algorithm as simulating the sequential algorithm, with a particular rule for selecting the bad-event to resample. Namely, we keep track of the sets $\mathcal V_i$ and $I_{i,j}$ as we do for the parallel algorithm, and within each sub-round we resample the bad-event in $I_{i,j}$ with the minimum value of $\rho_{i,j}(B)$.

This is why it is critical in step (6) that we select $I'_{i,j}$ to be the lexicographically-first MIS; this means that the presence of $B \in I'_{i,j}$ cannot be affected with $B'$ with $\rho(B') > \rho(B)$. 
\end{proof}
\begin{proposition}
Let $B$ be any resampling performed at the $i^{\text{th}}$ round of the Parallel Swapping Algorithm (that is, $B \in I'_{i,j}$ for some integer $j > 0$) Then the witness tree corresponding to the resampling of $B$ has height exactly $i$.
\end{proposition}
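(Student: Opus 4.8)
The plan is to prove $\mathrm{height}(\hat\tau)\le i$ and $\mathrm{height}(\hat\tau)\ge i$ separately, using the convention that the root sits at depth $1$ (so a single-node tree has height $1$), where $\hat\tau$ is the witness tree built from the parallel execution log, ordered by round, then sub-round, then rank $\rho$, as in the preceding proposition. The one structural fact I will need about a single round is: \emph{all bad-events resampled during a round $i'$, i.e.\ all elements of $\bigcup_j I'_{i',j}$, are pairwise non-adjacent under $\sim$, and no bad-event is resampled twice within a round.} This is immediate from the construction: each $I'_{i',j}$ is contained in the independent set $I_{i',j}$, and Step~(8) deletes from $\mathcal V_{i',j+1}$ every bad-event that is $\sim$-connected to a resampled $B\in I'_{i',j}$ --- including $B$ itself, since $B\sim B$.

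For the upper bound: if $u$ is the parent of $v$ in $\hat\tau$, then $\mathrm{round}(u)>\mathrm{round}(v)$. Indeed, $v$ is attached to a node $u$ that is already present when $v$ is processed, so $u$ occurs strictly later than $v$ in the execution log and hence $\mathrm{round}(u)\ge\mathrm{round}(v)$; and $u\sim v$ (that is how $v$ was attached), so by the single-round fact $u$ and $v$ were not resampled in the same round. Since the root is resampled in round $i$ and all round numbers are $\ge 1$, every root-to-leaf path has at most $i$ nodes, so $\mathrm{height}(\hat\tau)\le i$.

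For the lower bound I will exhibit a path of $i$ nodes. The heart of the matter is the claim: \emph{if $v$ is a node of $\hat\tau$ resampled in round $r\ge 2$, then some bad-event $B'$ with $B'\sim v$ is resampled during round $r-1$.} Granting this, I build the path inductively: $v_1$ is the root (depth $1$, round $i$); given $v_\ell$ at depth $\ell$ and round $i-\ell+1\ge 2$, the claim gives $B'$ resampled in round $i-\ell$ with $B'\sim v_\ell$; since $B'$ precedes $v_\ell$ in the log, $v_\ell$ is already in the tree when $B'$ is processed during the backward scan, so $B'$ is added and attached to some node $w\sim B'$ with $\mathrm{depth}(w)\ge\mathrm{depth}(v_\ell)=\ell$; hence $B'$ enters $\hat\tau$ at depth $\ge\ell+1$, and by round-monotonicity down from the root its depth is exactly $\ell+1$. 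Setting $v_{\ell+1}=B'$ and iterating down to round $1$ produces a node at depth $i$, so $\mathrm{height}(\hat\tau)\ge i$; together with the upper bound this gives equality. (When $i=1$ the iteration is vacuous and the root alone gives height $1$.)

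It remains to prove the claim, which I expect to be the main obstacle. Since $v\in\mathcal V_{r,1}$, $v$ is true at the start of round $r$, i.e.\ at the end of round $r-1$. If $v$ was not true at the start of round $r-1$, some triple $(k,x,y)\in v$ had $\pi_k(x)\ne y$ then and $\pi_k(x)=y$ at the end of round $r-1$; since (by the preceding proposition) the parallel algorithm simulates a sequential run, Proposition~\ref{change-prop} applied within round $r-1$ produces a resampled bad-event containing $(k,x,*)$ or $(k,*,y)$, which is therefore $\sim v$. If instead $v$ was true at the start of round $r-1$, then $v\in\mathcal V_{r-1,1}$, and since the round ends only when $\mathcal V_{r-1,\cdot}=\emptyset$, $v$ is removed from $\mathcal V_{r-1,\cdot}$ at some sub-round; by Step~(8) this happens either because $v$ is $\sim$-connected to some resampled $B'\in I'_{r-1,j}$ (done), or because $v$ became false at that point --- but $v$ is true again at the end of round $r-1$, so Proposition~\ref{change-prop} again yields a resampled bad-event $\sim v$ in round $r-1$. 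The delicate points are justifying the use of Proposition~\ref{change-prop} (one needs the within-round sequential simulation, so that "a bad-event occurring at an intermediate time" coincides with "a bad-event resampled during that round"), and checking that this case analysis --- $v$ became true late, or was displaced from $\mathcal V$ by a neighbor, or oscillated false-then-true --- is exhaustive. The remaining bookkeeping (round-monotonicity along edges, and the depth counting) is routine once the single-round independence fact is established.
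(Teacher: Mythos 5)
Your proof is correct and follows the same approach as the paper's: the core step is to show that any $B$ resampled in round $r \geq 2$ has a $\sim$-neighbor resampled in round $r-1$ (by the same case split on whether $B$ was true at the start of round $r-1$, invoking Proposition~\ref{change-prop} and Step~(8)), which yields the lower bound by chaining, while the upper bound comes from the $\sim$-independence of $\bigcup_j I'_{r,j}$ within a single round. Your write-up is a bit more explicit than the paper's --- the paper compresses the depth accounting into an informal chaining statement (``if $B \sim B'$, $B$ earlier, height $i$, then height $i+1$'') that is not literally true in general and really relies on the round structure, which your separate upper- and lower-bound arguments make precise --- but the underlying argument is the same.
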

\begin{proof}
First, note that if we have $B \sim B'$ in the execution log, where $B$ occurs earlier in time, and the witness tree corresponding to $B$ has height $i$, then the witness tree corresponding to $B'$ must have height $i+1$. So it will suffice to show that if $B \in I'_{i,j}$, then we must have $B \sim B'$ for some $B' \in I'_{i-1, j'}$.

At the beginning of round $i$, it must be the case that $\pi^{i}$ makes the bad-event $B$ true. By Proposition~\ref{change-prop}, either the bad-event $B$ was already true at the beginning of round $i-1$, or some bad-event $B' \sim B$ was resampled at round $i-1$. If it is the latter, we are done. 

So suppose $B$ was true at the beginning of round $i-1$. So $B$ was an element of $\mathcal V_{i-1,1}$.  In order for $B$ to have been removed from $\mathcal V_{i-1}$, then either we had $B \sim B' \in I'_{i-1, j'}$, in which case we are also done, or after some sub-round $j'$ the event $B$ was no longer true. But again by Proposition~\ref{change-prop}, in order for $B$ to become true again at the beginning of round $i$, there must have been some bad-event $B' \sim B$ encountered later in round $i-1$.
\end{proof}

This gives us the key bound on the running time of the Parallel Swapping Algorithm. We give only a sketch of the proof, since the argument is identical to that of \cite{moser-tardos}.

\begin{proposition}
Suppose  that $\epsilon > 0$ and that there is some assignment of weights $\mu: \mathcal B \rightarrow [0, \infty)$ which satisfies, for every $B \in \mathcal B$, the condition
$$
\mu(B) \geq (1 +\epsilon) P_{\Omega}(B) \prod_{B' \sim B} (1 + \mu(B'))
$$
Then, whp, the Parallel Swapping Algorithm terminates after $\frac{\log^{O(1)} (n \sum_B \mu(B))}{\epsilon}$ rounds. 
\end{proposition}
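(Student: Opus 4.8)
The plan is to reuse the original Moser--Tardos termination argument, since the two nonstandard ingredients it needs have already been established in this section: the parallel Witness Tree Lemma $P(\tau\text{ appears})\le\prod_{B\in\tau}P_{\Omega}(B)$, and the fact that a resampling performed in round $i$ produces a witness tree of height exactly $i$. So the whole proof is: ``a long run forces a large witness tree to appear, and large witness trees are unlikely.''

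First I would turn a long run into a large witness tree. If round $i$ takes place at all then $\mathcal V_{i,1}\neq\emptyset$, hence $I_{i,1}$ is a nonempty independent set and therefore its lexicographically-first MIS $I'_{i,1}$ is also nonempty; thus round $i$ actually resamples at least one bad-event, whose witness tree has height exactly $i$ and hence at least $i$ nodes. Consequently, if the Parallel Swapping Algorithm runs for at least $t$ rounds, then some witness tree $\tau$ with $|\tau|\ge t$ ``appears'' (equals $\hat\tau^T$ for some $T$). By the union bound and the parallel Witness Tree Lemma,
\[ P(\text{at least }t\text{ rounds})\ \le\ \sum_{\tau:\,|\tau|\ge t}P(\tau\text{ appears})\ \le\ \sum_{\tau:\,|\tau|\ge t}\ \prod_{B\in\tau}P_{\Omega}(B), \]
the sum ranging over all witness trees, rooted at any bad-event, with at least $t$ nodes.

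The heart of the estimate is the weighted sum over witness trees, and here the argument is verbatim that of Moser--Tardos. Put $x(B)=\mu(B)/(1+\mu(B))\in[0,1)$ and $x'(B)=x(B)\prod_{B'\sim B,\,B'\neq B}(1-x(B'))$; then, using $1+\mu(B')=1/(1-x(B'))$, the hypothesis $\mu(B)\ge(1+\epsilon)P_{\Omega}(B)\prod_{B'\sim B}(1+\mu(B'))$ (the product including $B'=B$) rearranges into $P_{\Omega}(B)\le x'(B)/(1+\epsilon)$. Hence for a witness tree $\tau$ with $s=|\tau|\ge t$ nodes we get $\prod_{B\in\tau}P_{\Omega}(B)\le(1+\epsilon)^{-s}\prod_{v\in\tau}x'(B_v)\le(1+\epsilon)^{-t}\prod_{v\in\tau}x'(B_v)$. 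The Galton--Watson branching-process computation of Moser--Tardos shows that, for each fixed root bad-event $B_0$, $\sum_{\tau\text{ rooted at }B_0}\prod_{v\in\tau}x'(B_v)\le x(B_0)/(1-x(B_0))=\mu(B_0)$; summing over $B_0\in\mathcal B$ then gives
\[ P(\text{at least }t\text{ rounds})\ \le\ (1+\epsilon)^{-t}\sum_{B\in\mathcal B}\mu(B). \]
We may assume $\epsilon\le1$ (slack $1+\epsilon$ with $\epsilon>1$ certainly implies slack $2$), so $\ln(1+\epsilon)\ge\epsilon/2$, and taking $t=\big\lceil\tfrac{2}{\epsilon}\big(c\ln n+\ln\sum_B\mu(B)\big)\big\rceil$ makes this at most $n^{-c}$. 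Since this $t$ is $O\!\big(\tfrac{1}{\epsilon}\log(n\sum_B\mu(B))\big)$, it lies within the claimed $\tfrac{\log^{O(1)}(n\sum_B\mu(B))}{\epsilon}$, so whp the algorithm halts within that many rounds.

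I expect the only real obstacle to be bookkeeping rather than any new idea: one must verify carefully that ``running for $t$ rounds'' genuinely forces a witness tree of size $\ge t$ to appear (this is exactly where the LFMIS-nonemptiness remark and the height-exactly-$i$ proposition are used), and that the parallel Witness Tree Lemma --- the genuinely new ingredient, already proved above --- applies to the execution log ordered by round, then sub-round, then $\rho$-rank. Granting those, the tail bound on witness trees is the standard local-lemma computation and goes through unchanged.
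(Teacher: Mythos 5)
Your proposal is correct and follows essentially the same route the paper sketches: use the parallel Witness Tree Lemma together with the ``height exactly $i$'' proposition to reduce a long run to the appearance of a large witness tree, then invoke the standard Moser--Tardos Galton--Watson count with the $(1+\epsilon)$ slack to bound the tail. You simply fill in the computation that the paper leaves as a reference to Moser--Tardos, and in fact obtain the slightly sharper bound $O\!\big(\tfrac{1}{\epsilon}\log(n\sum_B\mu(B))\big)$ rounds.
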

\begin{proof}
Consider the event that for some $B \in \mathcal B$, that $B$ is resampled after $i$ rounds of the Parallel Swapping Algorithm. In this case, $\hat \tau$ has height $i$. As shown in \cite{moser-tardos}, the sum, over all witness trees of some height $h$, of the product of the probabilities of the constituent events in the witness trees, is decreasing exponentially in $h$. So, for any fixed $B$, the probability that this occurs is exponentially small; this remains true after taking a union-bound over the polynomial number of $B \in \mathcal B$.
\end{proof}

We can put this analysis all together to show:

\begin{theorem}
\label{parallel-thm}
Suppose $|\mathcal B| = n^{O(1)}$ and that for all $B \in \mathcal B'$ we have $|B| \leq \log^{O(1)} n$. Suppose also that $\epsilon > 0$ and that there is some assignment of weights $\mu: \mathcal B \rightarrow [0, \infty)$ which satisfies, for every $B \in \mathcal B$, the condition
$$
\mu(B) \geq (1 +\epsilon) P_{\Omega}(B) \prod_{B' \sim B} (1 + \mu(B'))
$$
Then, whp, the Parallel Swapping Algorithm terminates after $\frac{\log^{O(1)} (n \sum_B \mu(B)) }{\epsilon}$ time.
\end{theorem}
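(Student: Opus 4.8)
The plan is purely to compose the three quantitative ingredients established earlier in this section: (a) the parallel time needed to carry out one sub-round, (b) the number of sub-rounds occurring inside a single round, and (c) the number of rounds. The total parallel running time is at most the product of (a), (b) and (c), and after absorbing the $M \le \log^{O(1)} n$ and $\log n$ factors into the $\log^{O(1)}(\cdot)$ notation this product equals $\frac{\log^{O(1)}(n\sum_B \mu(B))}{\epsilon}$.

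First I would bound (a). Every primitive inside a sub-round except the construction of $I'_{i,j}$ runs in $\log^{O(1)} n$ deterministic parallel time: the initial sampling in Step~(1) by a parallel sort on freshly chosen reals, the MIS $I_{i,j}$ in Step~(4) by \cite{abi:mis,luby:mis}, the transposition composition of Step~(7) by a parallel prefix product, and the updates of Steps~(5) and (8) directly. For the LFMIS $I'_{i,j}$ of Step~(6) I would invoke the greedy peeling procedure given above, whose iteration count equals the length of the longest directed path of $G_{i,j}$; by the proposition bounding that length, this is $O(M + \log n) = \log^{O(1)} n$ whp. Hence one sub-round costs $\log^{O(1)} n$ parallel time whp. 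Next I would bound (b) by directly quoting the proposition that, whp, $\mathcal V_{i,j} = \emptyset$ for some $j = O(M \log^2 n)$, so each round consists of at most $\log^{O(1)} n$ sub-rounds whp. Finally I would bound (c): by the proposition that a bad-event resampled in round $i$ has a witness tree of height exactly $i$, together with the parallel Witness Tree Lemma ($P(\tau\text{ appears}) \le \prod_{B \in \tau} P_\Omega(B)$ in the parallel model), the argument of \cite{moser-tardos} applies verbatim --- the total $\prod_{B\in\tau} P_\Omega(B)$-weight of witness trees of height $h$ rooted at a fixed $B$ decays like $(1+\epsilon)^{-h}$ once the slack condition $\mu(B) \ge (1+\epsilon) P_\Omega(B)\prod_{B'\sim B}(1+\mu(B'))$ holds, so for $h = \frac{\log^{O(1)}(n\sum_B \mu(B))}{\epsilon}$ the probability that any of the $n^{O(1)}$ bad-events is still being resampled is $n^{-\Omega(1)}$, giving the round bound whp.

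To finish, I would take a union bound over the polynomially many rounds and sub-rounds so that the three whp statements hold simultaneously, and then multiply (a)$\times$(b)$\times$(c). I do not expect a genuine obstacle here --- this is a bookkeeping argument --- but two points need care. First, the union bound is legitimate only if the number of rounds and sub-rounds is itself $n^{O(1)}$; this holds because $\epsilon$ is a fixed parameter and $\sum_B \mu(B) \le 2^{n^{O(1)}}$ under the hypothesis $|\mathcal B| = n^{O(1)}$, so $\log^{O(1)}(n\sum_B\mu(B)) = n^{O(1)}$, and the $n^{-\Omega(1)}$ failure probabilities survive multiplication by this polynomial. Second, one must check that the three separate ``$\log^{O(1)}$'' exponents compose, which they do since a product of polylogarithms is again a polylogarithm. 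Folding the $M$ factor (itself $\log^{O(1)} n$ by hypothesis) into the same notation yields exactly the stated time bound $\frac{\log^{O(1)}(n \sum_B \mu(B))}{\epsilon}$.
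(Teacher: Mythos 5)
Your proposal is correct and matches the paper's approach exactly: the paper's proof of Theorem~\ref{parallel-thm} is a one-line composition of the three preceding propositions (sub-round cost, sub-rounds per round, and number of rounds), which is precisely the (a)$\times$(b)$\times$(c) bookkeeping you carry out in more detail.
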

\begin{proof}
The number of rounds, the number of sub-rounds per round, and the running time of each sub-round, are all polylogarithmic in $n$ whp.
\end{proof}

\section{Algorithmic Applications}
\label{alg-sec}
The LLL for permutations plays a role in diverse combinatorial constructions. Using our algorithm, nearly all of these constructions become algorithmic.  We examine a few selected applications now.

\subsection{Latin transversals}
Suppose we have an $n \times n$ matrix $A$. The entries of this matrix come from a set $C$ which are referred to as \emph{colors}. A \emph{Latin transversal} of this matrix is a permutation $\pi \in S_n$, such that no color appears twice among the entries $A(i, \pi(i))$; that is, there are no $i \neq j$ with $A(i, \pi(i)) = A(i', \pi(i'))$. A typical question in this area is the following: suppose each color $c$ appears at most $\Delta$ times in the matrix. How large can $\Delta$ be so as to guarantee the existence of a Latin transversal?

In \cite{erdos-spencer}, a proof using the probabilistic form of the Lov\'{a}sz Local Lemma for permutations was given, showing that $\Delta \leq n/(4e)$ suffices. This was the first application of the LLL to permutations. This bound was subsequently improved by \cite{bissacot} to the criterion $\Delta \leq (27/256) n$; this uses a variant of the probabilistic Local Lemma which is essentially equivalent to Pegden's variant on the constructive Local Lemma.
Using our algorithmic LLL, we can almost immediately transform the existential proof of \cite{bissacot} into a constructive algorithm. To our knowledge, this is the first polynomial-time algorithm for constructing such a transversal.

\begin{theorem}
Suppose $\Delta \leq (27/256) n$. Then there is a Latin transversal of the matrix. Furthermore, the Swapping Algorithm selects such a transversal in polynomial time.
\end{theorem}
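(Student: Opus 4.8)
The plan is to instantiate the permutation-LLL framework of Section~\ref{sec:perm-lll} with a single permutation, check that the weighting used by Bissacot et al.\ satisfies the hypothesis of Theorem~\ref{thm:constr-lll} (in its Pegden/cluster-expansion form), and then simply read the running time off that theorem. First I would fix the bad-events: there is one permutation $\pi \in S_n$, and for every unordered pair of distinct cells $(i,j)\ne(i',j')$ with $i\ne i'$, $j\ne j'$ and $A(i,j)=A(i',j')$, I introduce the atomic bad-event $B=\{(1,i,j),(1,i',j')\}$, which holds exactly when $\pi(i)=j$ and $\pi(i')=j'$; pairs with $i=i'$ or $j=j'$ are impossible and are discarded. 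A permutation avoids all of these iff it is a Latin transversal, so it suffices to run the Swapping Algorithm on this instance --- whenever it halts, the current $\pi$ is a Latin transversal. Each $B$ has exactly two entries in $\pi$, so $P_\Omega(B)=(n-2)!/n!=\tfrac{1}{n(n-1)}=:p$, and by the definition of $\sim$ we have $B\sim B'$ precisely when some cell of $B$ shares a row or a column with some cell of $B'$.

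Next I would verify the LLL criterion, using the Pegden/cluster-expansion form of Theorem~\ref{thm:constr-lll} available here (as noted in the remarks following that theorem): it suffices to exhibit $\mu:\mathcal B\to[0,\infty)$ with $\mu(B)\ge p\sum_{S}\prod_{B'\in S}\mu(B')$, the sum ranging over independent sets $S$ in the dependency neighbourhood of $B$. I take $\mu$ uniform, $\mu(B)=\mu$. The neighbourhood of $B$ carries enough clique structure --- any two bad-events meeting a common cell are mutually dependent --- that its independence polynomial at $\mu$ is tightly controlled, and estimating it is exactly the computation carried out in \cite{bissacot}: the resulting scalar inequality in $\mu$ is solvable precisely when $\Delta/n\le 27/256$, the constant appearing as $\max_{x\in(0,1)} x(1-x)^3$. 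I would invoke that estimate verbatim rather than reprove it --- this is where essentially all of the combinatorics lives, but it is inherited, not redone. Since Theorem~\ref{thm:constr-lll} needs no slack, $\Delta=(27/256)n$ is admissible; existence of a Latin transversal then follows as well, either from Theorem~\ref{thm:lopsided} (as in \cite{bissacot}) or simply because the Swapping Algorithm halts almost surely.

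Finally, for the running time: with $\Delta\le n$ there are $|\mathcal B|=\sum_c\binom{m_c}{2}\le\tfrac12(\Delta-1)n^2=O(n^3)$ bad-events, where $m_c\le\Delta$ counts the occurrences of colour $c$; and at the solution $\mu=\Theta(p)=\Theta(n^{-2})$, so Theorem~\ref{thm:constr-lll} bounds the expected total number of resamplings by $\sum_B\mu(B)=O(n)$. Each resampling is a Swap on the two domain elements of $B$, i.e.\ $O(1)$ transpositions, and a currently-true bad-event can be located in $O(|\mathcal B|)$ time by a scan (or faster, by maintaining for each colour the set of cells hit by $\pi$), so the Swapping Algorithm produces a Latin transversal in expected polynomial time --- equivalently, in polynomial time with high probability, by restarting if it ever runs too long. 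The only real obstacle is the sharp-constant estimate of \cite{bissacot}; the point of the theorem is that Theorem~\ref{thm:constr-lll} converts that existential bound into a polynomial-time algorithm with no loss in the constant $27/256$.
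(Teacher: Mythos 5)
Your proposal is correct and follows essentially the same route as the paper: the same atomic bad-events of probability $\tfrac{1}{n(n-1)}$, a uniform weight $\mu$, and Pegden's independent-set (cluster-expansion) refinement of Theorem~\ref{thm:constr-lll}, with the neighbourhood of each bad-event splitting into four cliques of size $n(\Delta-1)$ so that the criterion reduces to $\alpha \geq \tfrac{1}{n(n-1)}(1+n(\Delta-1)\alpha)^4$, solvable exactly when $\Delta \leq (27/256)n$. The paper spells out this last computation directly rather than citing \cite{bissacot} verbatim, but the content is identical, and your running-time accounting matches as well.
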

\begin{proof}
For any quadruples $i,j, i', j'$ with $A(i,j) = A(i', j')$, we have a bad-event $(i, j), (i', j')$. Such an event has probability $\tfrac{1}{n(n-1)}$. We give weight $\mu(B) = \alpha$ to every bad event $B$, where $\alpha$ is a scalar to be determined. 

This bad-event can have up to four types of neighbors $(i_1, j_1, i'_1, j'_1)$, which overlap on one of the four coordinates $i, j, i', j'$; as discussed in \cite{bissacot}, all the neighbors of any type are themselves neighbors in the dependency graph. Since these are all the same, we will analyze just the first type of neighbor, one which shares the same value of $i$, that is $i_1= i$. We now may choose any value for $j_1$ ($n$ choices). At this point, the color $A(i_1, j_1)$ is determined, so there are $\Delta-1$ remaining choices for $i'_1, j'_1$. 

By Lemma~\ref{witness-tree-lemma} and Pegden's criterion \cite{pegden}, a sufficient condition for the convergence of the Swapping Algorithm is that
$$
\alpha \geq \frac{1}{n(n-1)} (1 + n (\Delta - 1) \alpha)^4
$$

Routine algebra shows that this has a positive real root $\alpha$ when $\Delta \leq (27/256) n$.
\end{proof}

In \cite{szabo:transversals}, Szab\'{o} considered a generalization of this question: suppose that we seek a transversal, such that no color appears more than $s$ times. When $s = 1$, this is asking for a Latin transversal. Szab\'{o} gave similar criteria ``$\Delta \leq \gamma_s n$" for $s$ a small constant. Such bounds can be easily obtained constructively using the permutation LLL as well.

By combining the permutation LLL with the partial resampling approach of \cite{harris-srin2}, we can provide asymptotically optimal bounds for large $s$:
\begin{theorem}
\label{szabo-thm}
Suppose $\Delta \leq (s - c \sqrt{s}) n$, where $c$ is a sufficiently large constant. Then there is a transversal of the matrix in which each color appears no more than $s$ times. This transversal can be constructed in polynomial time.
\end{theorem}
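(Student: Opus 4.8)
The plan is to run the Swapping Algorithm with a \emph{partial resampling} rule in the spirit of \cite{harris-srin2}, prove a fragment-level analogue of the Witness Tree Lemma, and then check the resulting convergence criterion by optimizing a single free parameter, the core size $k$. Write $\beta=\Delta/n$ and let $S_c$ be the set of cells of color $c$, so $|S_c|\le\beta n$; we want a permutation $\pi\in S_n$ meeting at most $s$ cells of each $S_c$, so for every color $c$ introduce the bad-event $B_c$: ``$\pi$ meets at least $s+1$ cells of $S_c$.'' Applying Theorem~\ref{thm:constr-lll} directly, one is forced to replace each $B_c$ by its $\binom{|S_c|}{s+1}$ individual $(s+1)$-cell sub-events; this cover of $B_c$ is so large that the criterion degrades to the weak bound $\beta\lesssim s/e$ (the non-constructive bound already implicit in Theorem~\ref{thm:lopsided}). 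Instead, fix an integer $k$ (to be chosen $\Theta(\sqrt s)$): whenever some color $c$ is currently over-represented, pick a uniformly random $k$-subset $R$ of the met cells of $S_c$ that forms a partial permutation, and run $\text{Swap}$ on exactly those $k$ entries of $\pi$. Each such \emph{fragment} $R$ is itself an atomic bad-event of the permutation LLL, with $P_\Omega(R)=(n-k)!/n!$; and for each color $c$ the family of all $k$-subsets of $S_c$ that are partial permutations, each given weight $1/\binom{s+1}{k}$, is a valid fractional hitting set for $B_c$, since any $\pi$ meeting $m\ge s+1$ cells of $S_c$ covers $\binom{m}{k}\ge\binom{s+1}{k}$ of them. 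Crucially, $R$ is a fragment of only the single bad-event $B_c$ fixed by its color.

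The analytic heart is to show that the ``alternate witness tree'' of \cite{harris-srin2}, which records only the fragments actually resampled, slots into the apparatus of Sections~\ref{witness-tree-sec}--\ref{total-prob-sec}. Because each fragment is an atomic event touching at most $k$ entries of one permutation, the future-subgraph analysis behind Lemma~\ref{witness-tree-lemma} carries over essentially verbatim and yields $P(\tau\text{ appears})\le\prod_{R\in\tau}P_\Omega(R)$ for a witness tree $\tau$ of fragments. The essential point is that fragments interact only through the permutation relation $\sim$ (two fragments are dependent only when they share a row or a column of $\pi$), and \emph{not} through the coarser ``fragments of a common bad-event'' relation: disjoint $k$-subsets of a single $S_c$ in distinct rows and columns are mutually independent, which is what keeps the effective degree polynomially bounded. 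Combining this fragment-level Witness Tree Lemma with the partial-resampling bookkeeping of \cite{harris-srin2} shows the modified algorithm terminates, each fragment being resampled at most $\mu(R)$ times in expectation, provided there is a weighting $\mu$ on fragments satisfying, for every fragment $R$,
$$
\mu(R)\ \ge\ \frac{P_\Omega(R)}{\binom{s+1}{k}}\ \prod_{R'\sim R}\bigl(1+\mu(R')\bigr)
$$
(or the sharper Pegden form \cite{pegden} with one factor per row and column occupied by $R$).

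To finish, take $\mu(R)\equiv\gamma\,n^{-k}$. The number of fragments meeting a fixed row or column is at most $n\binom{\Delta-1}{k-1}\le\beta^{k-1}n^{k}/(k-1)!$, so after cancelling powers of $n$ the displayed condition reduces to an inequality in $\gamma$ alone, which has a positive solution once $2ek^{2}(s+1-k)!\,\beta^{k-1}/(s+1)!\lesssim 1$, i.e.\ $\beta^{k-1}\lesssim (s+1)!/\bigl(2ek^{2}(s+1-k)!\bigr)$. Estimating $(s+1)!/(s+1-k)!=\prod_{i=0}^{k-1}(s+1-i)$ by Stirling and optimizing over $k$ leads to the choice $k=\Theta(\sqrt s)$, for which the inequality becomes $\beta\lesssim s\,\exp\bigl(-\Theta(1/\sqrt s)\bigr)=s-\Theta(\sqrt s)$. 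Hence the criterion is met whenever $\Delta=\beta n\le (s-c\sqrt s)\,n$ with $c$ a sufficiently large absolute constant, so the desired transversal exists and the algorithm produces it. The algorithm runs in polynomial time: each resampling is a single $\text{Swap}$ on $k$ entries, and the expected number of resamplings is $\sum_R\mu(R)\le\gamma\,n^{-k}\cdot n^{2}\binom{\Delta}{k}=O(n^{2}\beta^{k})$, polynomial for fixed $s$ (and, with a non-uniform choice of $\mu$, more generally).

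I expect the main obstacle to be the second step --- grafting the partial-resampling witness-tree machinery of \cite{harris-srin2} onto the permutation-specific proof of Lemma~\ref{witness-tree-lemma}. One must verify that restricting attention to the \emph{resampled} fragments does not spoil the delicate ``discharging'' of the $(n_k-A_k^0)!/n_k!$ factors through the future-subgraphs, and that the fragment-level negative-dependence graph really is governed by $\sim$ on permutation entries --- so that, in particular, the many fragments belonging to one $B_c$ do not all become mutually dependent. Once the fragment-level Witness Tree Lemma is in place, the remaining optimization over $k$ is a routine Stirling computation.
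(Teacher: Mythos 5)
Your approach is essentially the one in the paper: partial resampling with a fractional hitting set assigning weight $\binom{s}{r}^{-1}$ (the paper) or $\binom{s+1}{k}^{-1}$ (you) to each $r$-set (resp.\ $k$-set) of same-colored cells, $r = k = \Theta(\sqrt{s})$, followed by the LLL criterion and a Stirling optimization. Your final inequality and the choice $k \approx \sqrt{s}$ match the paper.

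There is, however, one concrete error in the middle of your argument. You assert that ``disjoint $k$-subsets of a single $S_c$ in distinct rows and columns are mutually independent'' and that the fragment-dependency graph is governed purely by $\sim$ on permutation entries. This is \emph{not} how the partial resampling framework of \cite{harris-srin2} works, and the paper does not make that claim either. In that framework, two fragments of the \emph{same} bad-event are always declared dependent (the $\bowtie$ relation), regardless of whether they share a row or column; intuitively this is because resampling is triggered by the entire bad-event being true, so the choice of which fragment to resample is itself correlated with the state of the other fragments. Accordingly, the paper's degree count is $2rn\binom{\Delta}{r-1} + \binom{\Delta}{r}$, with the second term coming from this same-color dependency; your degree count $2kn\binom{\Delta-1}{k-1}$ omits it. Quantitatively the omission is harmless: with $k = \Theta(\sqrt s)$ and $\beta = \Delta/n \le s$, the ratio $\binom{\Delta}{k}\big/\bigl(2kn\binom{\Delta}{k-1}\bigr) \approx \beta/(2k^2) \le 1/2$, so including the $\bowtie$ term changes only the absolute constant $c$, not the asymptotic form $\Delta\le(s-c\sqrt s)n$. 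But conceptually you should not assert that fragments of a common $B_c$ are independent; you correctly flagged this as a worry, and the resolution is the opposite of what you guessed. Your other concern, that the partial-resampling witness-tree bookkeeping must be verified to interact correctly with the $(n_k - A_k^0)!/n_k!$ discharging and the future-subgraph machinery, is legitimate and is one the paper itself does not spell out in detail (it is asserted via the discussion in Section~\ref{achlioptas-sec}); this is where the remaining work lies if you wanted to make the proof fully self-contained.
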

\begin{proof}
For each set of $s$ appearances of any color, we have a bad event. We use the partial resampling framework, to associate the fractional hitting set which assigns weight $\binom{s}{r}^{-1}$ to any $r$ appearances of a color, where $r = \lceil \sqrt{s} \rceil$.

We first compute the probability of selecting a given $r$-set $X$. From the fractional hitting set, this has probability $\binom{s}{r}^{-1}$. In addition, the probability of selecting the indicated cells is $\frac{(n-r)!}{n!}$. So we have $p \leq \binom{s}{r}^{-1} \frac{(n-r)!}{n!}$.

Next, we compute the dependency of the set $X$. First, we may select another $X'$ which overlaps with $X$ in a row or column; the number of such sets is $2 r n \binom{\Delta}{r-1}$. Next, we may select any other $r$-set with the same color as $X$ (this is the dependency due to $\bowtie$ in the partial resampling framework; see \cite{harris-srin2} for more details). The number of such sets is $\binom{\Delta}{r}$.

So the LLL criterion is satisfied if
$$
e \times  \binom{s}{r}^{-1} \frac{(n-r)!}{n!} \times \Bigl( 2 r n \binom{\Delta}{r-1} + \binom{\Delta}{r} \Bigr) \leq 1
$$

Simple calculus now shows that this can be satisfied when $\Delta \leq (s - O(\sqrt{s})) n$. Also, it is easy to detect a true bad-event and resample it in polynomial time, so this gives a polynomial-time algorithm.
\end{proof}

Our result depends on the Swapping Algorithm in a fundamental way --- it does not follow from Theorem~\ref{thm:lopsided} (which would roughly require  $\Delta \leq (s/e) n$). Hence, prior to this paper, we would not have been able to even show the existence of such transversals; here we provide an efficient algorithm as well.
To see that our bound is asymptotically optimal, consider a matrix in which the first $s+1$ rows all contain a given color, a total multiplicity of $\Delta = (s+1) n$. Then the transversal must contain that color at least $s+1$ times.

\subsection{Rainbow Hamiltonian cycles and related structures}
The problem of finding Hamiltonian cycles in the complete graph $K_n$, with edges of distinct colors, was first studied in \cite{hahn-thomassen}. This problem is typically phrased in the language of graphs and edges, but we can rephrase it in the language of Latin transversals, with the additional property that the permutation $\pi$ has full cycle. How often can a color appear in the matrix $A$, for this to be possible? In \cite{albert}, it was shown that such a transversal exists if each color appears at most $\Delta = n/32$ times.\footnote{The terminology used for rainbow Hamilton cycles is slightly different from that of Latin transversals. In the context of Hamilton cycles, one often assumes that the matrix $A$ is symmetric. Furthermore, since $A(x,y)$ and $A(y,x)$ always have the same color, one only counts this as a single occurrence of that color. Thus, for example, in \cite{albert}, the stated criterion is that the matrix $A$ is symmetric and a color appears at most $\Delta/64$ times.} This proof is based on applying the non-constructive Lov\'{a}sz Local Lemma to the probability space induced by a random choice of full-cycle permutation.
This result was later generalized in \cite{kriv}, to show the following result: if each color appears at most $\Delta \leq c_0 n$ times for a certain constant $c_0 > 0$, then not only is there a full-cycle Latin transversal, but there are also cycles of each length $3 \leq k \leq n$. The constant $c_0$ was somewhat small, and this result was also non-constructive.
Theorem~\ref{thm:rainbow-cycles} uses the Swapping Algorithm to construct Latin transversals with essentially arbitrary cycle structures; this generalizes \cite{kriv} and \cite{albert} quite a bit. 

\begin{theorem}
\label{thm:rainbow-cycles}
Suppose that each color appears at most $\Delta \leq 0.027 n$ times in the matrix $A$, and $n$ is sufficiently large.
Let $\tau$ be any permutation on $n$ letters, whose cycle structure contains no fixed points nor swaps (2-cycles). Then there is a Latin transversal $\pi$ which is conjugate to $\tau$ (i.e., has the same cycle structure); furthermore the Swapping Algorithm finds it in polynomial time. Also, the Parallel Swapping Algorithm finds it in time $\log^{O(1)} n$.
\end{theorem}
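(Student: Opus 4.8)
The plan is to reduce the problem to running the Swapping Algorithm on a \emph{single} uniformly random permutation, via the conjugation trick. Fix once and for all a template permutation $\tau_0$ with the same cycle type as $\tau$ — for instance, lay out the cycles of $\tau$ on consecutive blocks of $[n]$. Every $\pi$ conjugate to $\tau$ is of the form $\pi = \sigma \tau_0 \sigma^{-1}$, and the number of $\sigma \in S_n$ realizing a given such $\pi$ is $|C_{S_n}(\pi)|$, which is constant over the conjugacy class; hence if $\sigma$ is uniform then $\pi = \sigma\tau_0\sigma^{-1}$ is uniform over the conjugacy class of $\tau$. So I would run the Swapping Algorithm on $\sigma$ and output $\pi = \sigma \tau_0 \sigma^{-1}$ at the end: regardless of the final $\sigma$ the output automatically has the prescribed cycle type, so the only thing to arrange is that $\pi$ be a Latin transversal, i.e.\ that $A(\sigma(p),\sigma(\tau_0 p)) \ne A(\sigma(q),\sigma(\tau_0 q))$ for all distinct $p,q$.

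Next I would encode this as a family of atomic bad-events on the single permutation $\sigma$: for every ordered pair of distinct positions $p,q$ and every pair of distinct matrix cells $(i,j),(i',j')$ with $A(i,j)=A(i',j')$, declare the atomic bad-event $[\sigma(p)=i] \wedge [\sigma(\tau_0 p)=j] \wedge [\sigma(q)=i'] \wedge [\sigma(\tau_0 q)=j']$ (discarding the ones that are impossible because some coordinate repeats). The key structural observation — and the only place the hypotheses on $\tau$ enter — is that, because $\tau$ has no fixed points and no $2$-cycles, for any distinct $p,q$ the four positions $p,\tau_0 p,q,\tau_0 q$ take at least three distinct values: an overlap of two elements between $\{p,\tau_0 p\}$ and $\{q,\tau_0 q\}$ would force the $2$-cycle $(p\ q)$, while a single overlap, say $q=\tau_0 p$, leaves $p,\tau_0 p,\tau_0^2 p$ pairwise distinct since there are no $1$- or $2$-cycles. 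Consequently every non-impossible bad-event $B$ involves $r_B$ elements of $\sigma$ with $3 \le r_B \le 4$, so by Lemma~\ref{witness-tree-lemma} its probability is $\tfrac{(n-r_B)!}{n!}$, namely $\tfrac{1}{n(n-1)(n-2)(n-3)}$ for the generic ($4$-element) events and $\tfrac{1}{n(n-1)(n-2)}$ for the degenerate ($3$-element) ones arising from pairs of $\tau_0$-consecutive positions.

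The bulk of the work is then the LLL computation of Theorem~\ref{thm:constr-lll}, which I would carry out in the style of the plain Latin transversal bound but with Pegden's clique refinement \cite{pegden}. I would give weight $\alpha$ to the $4$-element bad-events and a separate (much smaller, $\Theta(n^{-3})$-scale) weight $\beta$ to the $3$-element ones, since the latter's probability is only $\Theta(n^{-3})$ and cannot be absorbed at the $\alpha$-scale. Two bad-events are $\sim$-adjacent exactly when they share a domain slot or a range slot of $\sigma$; the set of bad-events containing any fixed slot is a clique, and a bad-event built from cells $(\sigma(p),\sigma(\tau_0 p)),(\sigma(q),\sigma(\tau_0 q))$ has at most four domain slots and four range slots. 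Counting, for a fixed slot, the bad-events through it — the ``far'' cell ranges over $O(n^2)$ cells, the cell touching the slot over $O(n)$ cells, and the cell conflicting with the far cell over $\le \Delta$ cells — shows each clique has size $O(n^3\Delta)$, with the $3$-element cliques a factor $n$ smaller. Plugging these counts and the probability bounds into Pegden's criterion turns the required conditions into a fixed system of polynomial inequalities in $\alpha,\beta$ and $\Delta/n$, and I expect routine (if tedious) optimization to show it is solvable precisely when $\Delta \le 0.027\, n$ (for $n$ large, with a constant amount of slack to spare). Given this, Theorem~\ref{thm:constr-lll} yields termination in expected polynomial time — detecting a true bad-event costs only $O(n^2)$: form $\pi=\sigma\tau_0\sigma^{-1}$ and search for two equally coloured transversal cells, which directly exhibits the offending $p,q$ and values — and since $|\mathcal B|=n^{O(1)}$ while every bad-event has size at most $4$, the slack version of the criterion lets Theorem~\ref{parallel-thm} give the Parallel Swapping Algorithm bound $\log^{O(1)}n$.

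The main obstacle, I expect, is this last step: unlike the plain Latin transversal case (where each bad-event touches only two permutation entries and one gets the clean $27/256$), here each bad-event touches up to four entries of $\sigma$, its overlap structure is richer, and the $3$- and $4$-element events interact, so some care is needed in the clique bookkeeping to extract a clean constant rather than an unoptimized one. Conceptually, though, the reduction of the first paragraph is the heart of the matter: it makes the arbitrary cycle-type constraint essentially free — something not available to an approach that tried to generate $\pi$ directly — which is exactly why this generalizes \cite{albert} and \cite{kriv} to arbitrary fixed-point- and swap-free cycle types.
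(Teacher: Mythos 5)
Your reduction is exactly the paper's: write $\pi$ as a conjugate of a fixed template by $\sigma$ and run the Swapping Algorithm on $\sigma$; observe that every atomic conflict involves either four distinct entries of $\sigma$ (generic, probability $\frac{(n-4)!}{n!}$) or three distinct entries when the two positions are $\tau_0$-consecutive (probability $\frac{(n-3)!}{n!}$); give the two types separate weights; and apply Pegden's cluster-expansion criterion. Your structural observation about where the ``no $1$- or $2$-cycles'' hypothesis is used (guaranteeing at least three distinct slots per bad event) is precisely the paper's type A / type B dichotomy. The one place the proposal is short of a complete proof — and you flag this honestly — is the Pegden bookkeeping: the per-slot clique cover you describe (four domain-slot cliques plus four range-slot cliques, each of size $O(n^3\Delta)$) is a valid but \emph{coarser} cover than what the paper uses, because in the conjugated variable $\sigma$ the slots on one side come in $\tau_0$-linked chains $\ldots,p,\tau_0 p,\tau_0^2 p,\ldots$, and bad events sharing a slot near $p$ can be grouped by which link of the chain ``leads'' them. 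The paper exploits exactly this: it enumerates, for each of the two leads $i$ and $i'$, which combinations of neighbors led by $i-2,i-1,i,i+1$ can be simultaneously independent, yielding the factor $(1+b+sb+3s+s^2)^2$ on top of $(1+t)^4$; this finer accounting is what allows the explicit weights $\mu_A = 2.83036\,n^{-4}$, $\mu_B = 1.96163\,n^{-3}$ to close the inequalities at $\Delta \le 0.027n$. With your uniform per-slot cliques, you would still get a theorem of the same form $\Delta \le c\,n$ (with an RNC version), but the constant $c$ you would extract without the chain refinement is not obviously $0.027$, so that last optimization step is the genuine remaining work rather than a routine plug-in.
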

\begin{proof}
We cannot apply the Swapping Algorithm directly to the permutation $\pi$, because we will not be able to control its cycle structure. Rather, we will set $\pi = \sigma^{-1} \tau \sigma$, and apply the Swapping Algorithm to $\sigma$.

A bad-event is that $A(x, \pi(x)) = A(x', \pi(x'))$ for some $x \neq x'$. Using the fact that $\tau$ has no fixed points or 2-cycles, we can see that this is equivalent to one of the following two situations:  (A) There are $i, i', x, y, x', y'$ such that $\sigma(x) = i, \sigma(y) = \tau (i), \sigma(x') = i', \sigma(y') = \tau(i')$, and $x, y, x', y'$ are distinct, and $i, i', \tau(i), \tau(i')$ are distinct, and $A(x,y) = A(x', y')$ or (B) There are $i, x, y, z$ with $\sigma(x) = i, \sigma(y) = \tau(i), \sigma(z) = \tau^2(i)$, and all of $x,y,z$ are distinct, and $A(x,y) = A(y,z)$. We will refer to the first type of bad-event as an event of type A led by $i$ (such an event is also led by $i'$); we will refer to the second type of bad-event as type B led by $i$.

Note that in an A-event, the color is repeated in distinct column and rows, and in a B-event the column of one coordinate is the row of another. So, to an extent, these events are mutually exclusive. Much of the complexity of the proof lies in balancing the two configurations. To a first approximation, the worst case occurs when A-events are maximized and B-events are impossible. This intuition should be kept in mind during the following proof.

We will define the function $\mu$ as follows. Each event of type A is assigned the same weight $\mu_A$, and each event of type B is assigned weight $\mu_B$. The event of type A has probability $(n-4)!/n!$ and each event of type B has probability $(n-3)!/n!$. In the following proof, we shall need to compare the relative magnitude of $\mu_A, \mu_B$. In order to make this concrete, we set
$$
\mu_A = 2.83036 n^{-4}, \mu_B = 1.96163 n^{-3}
$$
(In deriving this proof, we left these constant coefficients undetermined until the end of the computation, and we then verified that all desired inequalities held.)

Now, to apply Pegden's criterion \cite{pegden} for the convergence of the Swapping Algorithm, we will need to analyze the independent sets of neighbors each bad-event can have in the dependency graph. In order to keep track of this neighborhood structure, it will be convenient to define the following sums. We let $t$ denote the sum of $\mu(X)$ over all bad-events $X$ involving some fixed term $\sigma(x)$. Let $s$ denote the sum of $\mu(X)$ over all bad-events $X$ (of type either A or B) led by some fixed value $i$, and let $b$ denote the sum of $\mu(X)$ over B-events $X$ alone. Recall that each bad-event of type A is led by $i$ and also by $i'$. 

We now examine how to compute the term $t$. Consider a fixed value $x$; we will enumerate all the bad-events that involve $\sigma(x)$. These correspond to color-repetitions involving either row or column $x$ in the matrix $A$. Let $c_i$ (respectively $r_i$) denote the number of occurrences of color $i$ in column (respectively row) $x$ of the matrix, excluding $A(x,y)$ itself. 

We can have a color repetition of the form $A(y,x) = A(x,y')$ where $y \neq y'$; or we can have repetitions of the form $A(x,y) = A(x', y')$ or $A(y, x) = A(y', x')$, where $x \neq x', y \neq y'$ (but possibly $x' = y$). The total number of repetitions of the first type is $v_1 \leq \sum_i c_i r_i$. The total number of repetitions of the  second type is at most $v_2 \leq \sum_i c_i (\Delta - c_i - r_i)$. The total number of repetitions of the third type is at most $v_3 \leq \sum_i r_i (\Delta - c_i - r_i)$.

For a repetition of the first type, this must correspond to an B-event, in which $\sigma(y) = i, \sigma(x) = \tau(i), \sigma(y') = \tau^2(i)$ for some $i$. For a repetition of the second type, if $x' \neq y$ this correspond to an A-event in which $\sigma(x) = i, \sigma(y) = \tau(i), \sigma(x') = i', \sigma(y') = \tau(i')$ for some $i, i'$ \emph{or} alternatively if $x' = y$ it correspond to a B-event in which $\sigma(x) = i, \sigma(y) = \tau(i), \sigma(y') = \tau^2(i)$ for some $i$. A similar argument holds for the third type of repetition.

Summing all these cases, we have 
{\allowdisplaybreaks
\begin{align*}
t &\leq v_1 n \mu_B + v_2 (\max(n^2 \mu_A + n \mu_B)) + v_3 (\max(n^3 \mu_A + n \mu_B)) \\
&\leq v_1 n \mu_B + v_2 n^2 \mu_A + v_3 n^2 \mu_A  \\
&\leq \sum_j (c_j r_j n \mu_B + c_j (\Delta - c_j - r_j) n^2 \mu_A + r_j (\Delta - c_j - r_j) n^2 \mu_A)
\end{align*}
}

Observe that the the RHS is maximized when there are $n$ distinct colors with $c_j = 1$ and $n$ distinct colors with $r_j = 1$. For, suppose that a color has (say) $c_j > 1$. If we decrement $c_j$ by 1 while adding a new color with $c_{j'} = 1$, this changes the RHS by $(-1 + 2 (c_j + r_j) - \Delta) n^2 \mu_A + (-1 + \Delta - r_j) n \mu_B \geq 0$.

This gives us 
$$
t \leq 2 n^3 \Delta \mu_A
$$

Similarly, let us consider $s$. Given $i$, we choose some $y$ with $\sigma(y) = \tau(i)$. Now, we again list all color repetitions $A(x,y) = A(x',y')$ or $A(x,y) = A(y,z)$. The number of the former is at most $\sum_j c_j (\Delta - c_j - r_j)$ and the number of the latter is at most $\sum_j c_j r_j$. As before, this is maximized when each color appears once in the column, leading to 
$$
s \leq n^3 \Delta \mu_A
$$

For term $b$, the worst case is when each color appears $\Delta/2$ times in the row and column of $y$; this yields
$$
b \leq n^2 (\Delta/2) \mu_B
$$

Now consider a fixed bad-event A, with parameters $i, i', x,y,x', y'$, and let us count the sum over all independent sets of neighbors, of $\mu$. This could have one or zero children involving $\sigma(x)$ and similarly for $\sigma(y),\sigma(x'), \sigma(y')$; this gives a total contribution of $(1+t)^4$. The children could also overlap on $i$; the total set of possibilities is either zero children, a B-child led by $i-2$, a B-child led by $i-2$ and a child led by $i$, a child led by $i-1$, a child led by $i-1$ and a child led by $i+1$, a child led by $i$, a child led by $i+1$. There is an identical factor for the contributions of bad-events led by $i'-2, \dots, i'+1$.  In total, the criterion for A is that we must have
$$
\mu_A \geq \frac{(n-4)!}{n!} (1+t)^4 (1 + b + s b + s + s^2 + s + s)^2
$$

Applying the same type of analysis to an event of type B gives us the criterion:
$$
\mu_B \geq \frac{(n-3)!}{n!} (1+t)^3 (1 + b + s b + s b+ s + s^2 + s^2 + s + s^2 + s)
$$

Putting all these constraints together gives a complicated system of polynomial equations, which can be solved using a symbolic algebra package. Indeed, the stated values of $\mu_A, \mu_B$ satisfy these conditions when $\Delta \leq 0.027n$ and $n$ is sufficiently large.

Hence the Swapping Algorithm terminates, resulting in the desired permutation $\pi = \sigma^{-1} \tau \sigma$. It is easy to see that the Parallel Swapping Algorithm works as well.
\end{proof}
We note that for certain cycle structures, namely the full cycle $\sigma = (1 2 3 \dots n-1 \ n)$ and $n/2$ transpositions $\sigma = (1 2) (3 4) \dots (n-1 \ n)$, one can apply the LLLL directly to the permutation $\pi$. This gives a qualitatively similar condition, of the form $\Delta \leq c n$, but the constant term is slightly better than ours. For some of these settings, one can also apply a variant of the Moser-Tardos algorithm to find such permutations \cite{achlioptas}. However, these results do not apply to general cycle structures, and they do not give parallel algorithms.

\subsection{Strong chromatic number of graphs}
Suppose we have a graph $G$, with a given partition of the vertices into $k$ blocks each of size $b$, i.e., $V = V_1 \sqcup \dots \sqcup V_k$. We would like to $b$-color the vertices, such that every block has exactly $b$ colors, and such that no edge has both endpoints with the same color (i.e., it is a proper vertex-coloring). This is referred to as a \emph{strong coloring} of the graph. If this is possible for \emph{any} such partition of the vertices into blocks of size $b$, then we say that the graph $G$ has strong chromatic number $b$.

A series  of papers \cite{alon1992,axenovich,fellows,haxell2004} have provided bounds on the strong chromatic number of graphs, typically in terms of their maximum degree $\Delta$. In \cite{haxell2008}, it is shown that when $b \geq (11/4) \Delta + \Omega(1)$, such a coloring exists; this is the best bound currently known. Furthermore, the constant $11/4$ cannot be improved to any number strictly less than $2$. The methods used in most of these papers are highly non-constructive, and do not provide algorithms for generating such colorings. 

In this section, we examine two routes to constructing strong colorings. The first proof, based on \cite{aharoni}, builds up the coloring vertex-by-vertex, using the ordinary LLL. The second proof uses the permutation LLL to build the strong coloring directly. 
The latter appears to be the first RNC algorithm with a reasonable bound on $b$.  

We first develop a related concept to the strong coloring known as an \emph{independent transversal}. In an independent transversal, we choose a single vertex from each block, so that the selected vertices form an independent set of the graph.
\begin{proposition}
\label{it-prop}
Suppose $b \geq 4 \Delta$. Then $G$ has an independent transversal, which can be found in expected time $O(n \Delta)$.

Furthermore, let $v \in G$ be any fixed vertex. Then $G$ has an independent transversal which includes $v$, which can be found in expected time $O(n \Delta^2)$.
\end{proposition}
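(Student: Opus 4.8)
The plan is to realize an independent transversal as a solution of the ordinary, variable-based LLL and then invoke Pegden's refinement. For each block $V_i$ we sample a representative $c_i \in V_i$ uniformly and independently; for every edge $e=\{u,w\}$ of $G$ joining two \emph{distinct} blocks $V_i,V_j$ (edges inside a block are harmless, since only one vertex of the block is chosen) we introduce the bad-event $B_e \equiv (c_i=u)\wedge(c_j=w)$, with $P_\Omega(B_e)=1/b^2$. Two bad-events $B_e,B_{e'}$ are dependent exactly when $e,e'$ share a block, so the set of edges incident to any fixed block $V_i$ forms a clique in the dependency graph, of size at most $b\Delta$ (the sum of the degrees of the $b$ vertices of $V_i$); hence the neighborhood of $B_e$ is covered by two such cliques.

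For the first statement I would apply the Moser--Tardos algorithm with Pegden's cluster-expansion criterion \cite{moser-tardos,pegden}: set $\mu(B_e)=\mu$ for all $e$; an independent set of neighbors of $B_e$ meets each of the two cliques at most once, so it suffices that
$$
\mu \;\ge\; \frac{1}{b^2}\,(1 + b\Delta\,\mu)^2 .
$$
When $b \ge 4\Delta$ this holds for $\mu = 1/(b\Delta)$ (with equality at $b = 4\Delta$), so by Theorem~\ref{thm:constr-lll} the algorithm terminates with a valid transversal, and the expected number of resamplings of each $B_e$ is at most $\mu$; summing, the expected total number of resamplings is $\sum_e \mu(B_e) \le |E|/(b\Delta) = O(n/b)$. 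An initial scan to locate violated edges costs $O(n\Delta)$, and each resampling re-examines only the $\le \Delta$ edges at the newly chosen vertex, giving the claimed $O(n\Delta)$ running time.

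For the version through a prescribed vertex $v\in V_{i_0}$, I would condition on $c_{i_0}=v$ (never resampling block $V_{i_0}$), still sampling the other $c_j$ from the full blocks $V_j$. The only new obstruction is that some $c_j$ might equal a neighbor of $v$; I would encode this by discarding every edge bad-event incident to $N(v)$ and adding instead, for each $w\in N(v)$ with $w\in V_j$, the bad-event $B'_w \equiv (c_j=w)$, of probability $1/b$. The point is that in the lopsided form of the LLL \cite{moser-tardos} (cf.\ Section~\ref{sec:lopsi}), $B'_w$ disagrees on $c_j$ with every retained edge bad-event and with every other $B'_{w'}$ in its block, so each $B'_w$ is an isolated node of the dependency graph and needs only weight $1/b<1$; the edge bad-events obey exactly the same criterion as before, so $b\ge 4\Delta$ again suffices. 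The running time is then dominated as before, with only lower-order bookkeeping for the $\le\Delta$ events $B'_w$, which is absorbed into the stated $O(n\Delta^2)$ bound.

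The main obstacle is getting the sharp constant $b\ge 4\Delta$ rather than the $2e\Delta$ the symmetric LLL would yield: this needs (i) exploiting the clique structure of the dependency neighborhoods so that Pegden's criterion replaces $(1+\mu)^{2b\Delta}$ by $(1+b\Delta\mu)^2$, and (ii) in the second part, using lopsidependence to keep the $v$-neighbor constraints from interacting with anything, rather than restricting the blocks to $V_j\setminus N(v)$, which would push the bound above $4\Delta$. Everything else is a routine check of the resulting polynomial inequality together with a standard implementation/running-time argument.
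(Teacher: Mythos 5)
Your first part is correct and is essentially the paper's argument: sample one vertex per block uniformly, one atomic bad-event per cross-block edge, exploit the fact that the neighborhood of each bad-event is covered by two cliques (one per block/variable) of size at most $b\Delta$, and apply the cluster-expansion/Pegden criterion $\mu \geq b^{-2}(1+b\Delta\mu)^2$, which closes at $b = 4\Delta$ with $\mu = 1/(b\Delta)$. The running-time accounting is also fine.

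The second part has a genuine gap: your lopsidependence claim is backwards. In the Moser--Tardos lopsidependency relation for atomic events (and in the paper's Section~\ref{sec:lopsi}), two events that share a variable are exempted from dependence only when they assign that variable the \emph{same} value; if they share a variable and \emph{disagree} on its value, they are dependent. (Probabilistically: if $B'_w \equiv (c_j = w)$ and $B_e$ requires $c_j = u$ with $u \neq w$, then $c_j = w$ implies $\overline{B_e}$, so $P(B'_w \mid \overline{B_e}) > P(B'_w)$ --- conditioning on avoiding $B_e$ \emph{increases} the probability of $B'_w$, which is exactly what the lopsided LLL forbids for non-neighbors.) So each $B'_w$ is not isolated; it is adjacent to every retained edge-event using variable $c_j$ and to every other $B'_{w'}$ in its block. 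Redoing the computation with the corrected dependency, the clique on variable $c_j$ has total weight $S = b\Delta\mu + \Delta\mu'$, and the system $\mu \geq b^{-2}(1+S)^2$, $\mu' \geq b^{-1}(1+S)$ forces $t - 1 \geq \frac{\Delta}{b}(t^2+t)$ with $t = 1+S$, which has no real solution at $b = 4\Delta$ (one needs roughly $b \geq (3+2\sqrt{2})\Delta$). So your construction does not establish the stated constant.

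The paper proves the second statement by a different mechanism that you should note: it uses the Moser--Tardos \emph{output distribution} bound from \cite{harris-srin2}. Assigning weight $1$ to $v$ and $0$ elsewhere, the expected weight of the returned transversal is $\Omega(w(V)/\Delta)$, so $v$ is selected with probability $\Omega(1/\Delta)$ in a single run of the unmodified algorithm; repeating $O(\Delta)$ independent executions then finds a transversal through $v$, which is exactly where the extra factor of $\Delta$ in the $O(n\Delta^2)$ bound comes from. This sidesteps any modification of the event family and keeps the $b \geq 4\Delta$ hypothesis intact.
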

\begin{proof}
Use the ordinary LLL to select a single vertex uniformly from each block. See \cite{bissacot}, \cite{harris-srin2} for more details. This shows that, under the condition $b \geq 4 \Delta$, an independent transversal exists and is found in expected time $O(n \Delta)$. 

To find an independent transversal including $v$, we imagine assigning a weight $1$ to vertex $v$ and weight zero to all other vertices. As described in \cite{harris-srin2}, the expected weight of the independent transversal returned by the Moser-Tardos algorithm, is at least $\Omega(w(V)/\Delta)$, where $w(V)$ is the total weight of all vertices. This implies that that vertex $v$ is selected with probability $\Omega(1/\Delta)$. Hence, after running the Moser-Tardos algorithm for $O(\Delta)$ separate independent executions, one finds an independent transversal including $v$.
\end{proof}

Using this as a building block, we can form a strong coloring by gradually adding colors:
\begin{theorem}
\label{strong-color-thm2}
Suppose $b \geq 5 \Delta$. Then $G$ has a strong coloring, which can be found in expected time $O(n^2 \Delta^2)$.
\end{theorem}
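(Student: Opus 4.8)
The plan is to build the strong coloring one color class at a time, using Proposition~\ref{it-prop} as the engine. Recall that a strong coloring with $b$ colors is exactly a partition of the vertices into $b$ independent transversals. So the natural greedy approach is: repeatedly find an independent transversal in the current graph, remove those $k$ vertices (one per block), and recurse on the remaining graph, whose blocks now have size $b-1$. After $b$ iterations every vertex has received a color, and since at each step we removed an independent set and assigned it a fresh color, the result is a proper coloring in which every block sees exactly $b$ colors.

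The thing that has to be checked is that the condition $b \geq 5\Delta$ is strong enough to keep the independent-transversal subroutine applicable throughout the whole process. When we are about to extract the $j$-th transversal (for $j = 1, \dots, b$), the residual graph $G_j$ has blocks of size $b - j + 1$, and its maximum degree is still at most $\Delta$ (removing vertices only decreases degrees). Proposition~\ref{it-prop} requires ``block size $\geq 4 \cdot (\text{max degree})$''; here we need $b - j + 1 \geq 4\Delta$, i.e.\ the smallest block size we ever encounter, namely $1$ at the very last step, must be $\geq 4\Delta$ --- which is false. So a naive ``run it all the way down to block size $1$'' does not work, and this is the main obstacle.

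The fix, which is the real content of the theorem and the reason for the ``$5\Delta$'' rather than ``$4\Delta$'', is to stop the greedy extraction early and finish off the remainder by brute force. Concretely: run the independent-transversal subroutine for the first $b - 4\Delta$ steps (say), so that after these steps the residual graph $G'$ has blocks of size exactly $4\Delta$ and maximum degree $\leq \Delta$; at each of these steps the hypothesis $b - j + 1 \geq 4\Delta$ holds, so Proposition~\ref{it-prop} applies and each transversal is found in expected time $O(n\Delta)$. It then remains to strong-color $G'$, a graph with blocks of size exactly $4\Delta$ and max degree $\leq\Delta$; for this we can iterate once more but this time using the ``independent transversal through a prescribed vertex $v$'' part of Proposition~\ref{it-prop}: to assign a color we pick an uncolored vertex $v$ in some block and find an independent transversal of the residual graph containing $v$, so that progress is guaranteed vertex-by-vertex. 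Each such call costs expected time $O(n\Delta^2)$ by Proposition~\ref{it-prop}, there are $O(n)$ vertices to place (or $O(b) = O(\Delta)$ transversals over $k = O(n/\Delta)$ blocks), and the bookkeeping has to confirm that the ``$+\Delta$'' of slack built into $b \geq 5\Delta$ (i.e.\ the gap between the true residual block size $4\Delta$ and the requirement $4\Delta$) keeps every intermediate graph inside the regime where Proposition~\ref{it-prop} applies. Summing the costs: $O(b)$ cheap phases at $O(n\Delta)$ each gives $O(nb\Delta) = O(n\Delta^2)$, and the expensive endgame gives $O(\Delta)$ transversals $\times$ $O(k)$ vertices $\times$ $O(n\Delta^2)$ per vertex-placement, which is the dominant $O(n^2\Delta^2)$ term claimed. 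The main subtlety to get right is the exact arithmetic of how much slack is needed so that both the plain and the ``through $v$'' versions of Proposition~\ref{it-prop} remain valid at every stage; everything else is routine.
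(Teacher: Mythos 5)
Your Phase~1 is fine, but your Phase~2 does not actually work, and the gap you wave away at the end is exactly where the real idea of the proof lives. After your Phase~1 the residual graph $G'$ has blocks of size exactly $4\Delta$. If you now extract an independent transversal from $G'$ (with or without the ``through $v$'' variant), the residual blocks have size $4\Delta - 1 < 4\Delta$, and Proposition~\ref{it-prop} no longer applies. The ``through $v$'' variant has the same block-size hypothesis as the plain version, so it offers no help here, and there is no remaining slack: the entire $\Delta$ of slack from $b \ge 5\Delta$ was already spent getting down to $4\Delta$. Your parenthetical about ``the gap between the true residual block size $4\Delta$ and the requirement $4\Delta$'' is zero, not $\Delta$, and no amount of bookkeeping rescues a greedy peel-off-color-classes strategy: to peel all the way down to blocks of size $1$ you would need slack of order $4\Delta$, i.e.\ $b \ge 8\Delta$ or so, not $5\Delta$.

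The paper (following Aharoni, Berger, Ziv) avoids this by \emph{never shrinking the graph}. It maintains a partial proper coloring of the full graph with blocks always of size $b$, and at each step picks a missing color, say color $1$, at some uncolored vertex $w$. In every block it chooses a vertex $v_i$ to receive color $1$; if that block already has a vertex $u_i$ of color $1$, the colors of $v_i$ and $u_i$ are \emph{swapped}. To keep the coloring proper after the swap one must forbid, within each block, the at most $\Delta$ choices of $v_i$ whose current color clashes with a neighbor of $u_i$; this leaves $\ge b - \Delta \ge 4\Delta$ admissible vertices per block, and Proposition~\ref{it-prop} (the ``through $w$'' version) produces the required independent transversal in $O(n\Delta^2)$. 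Because the swap never un-colors a previously colored vertex except in exchange for coloring $v_i$, and $w$ goes from uncolored to colored, the number of colored vertices strictly increases; $n$ iterations suffice, giving $O(n^2\Delta^2)$. The color-swapping step is the missing ingredient in your argument — without it, there is no way to make progress once blocks drop to size $4\Delta$.
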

\begin{proof}
(This proof is almost identical to the proof of Theorem 5.3 of \cite{aharoni}). We maintain a \emph{partial coloring} of the graph $G$, in which some vertices are colored with $\{1, \dots, b\}$ and some vertices are uncolored. Initially all vertices are uncolored. We require that in a block, no vertices have the same color, and no adjacent vertices have the same color. 

Now, suppose some color is partially missing from the strong coloring; say without loss of generality there is a vertex $w$ missing color $1$. In each block $i = 1,\dots, k$, we will select some vertex $v_i$ to have color 1. If the block does not have such a vertex already, we will simply assign $v_i$ to have color 1. If the block $i$ \emph{already} had some vertex $u_i$ with color $1$, we will swap the colors of $v_i$ and $u_i$ (if $v_i$ was previously uncolored, then $u_i$ will become uncolored).

We need to ensure three things. First, the vertices $v_1, \dots, v_k$ must form an independent transversal of $G$. Second, if we select vertex $v_i$ and swap its color with $u_i$, this cannot cause $u_i$ to have any conflicts with its neighbors. Third, we insist of selecting $w$ itself for the independent traversal. 

A vertex $u_i$ will have conflicts with its neighbors if $v_i$ currently has the same color as one of the neighbors of $u_i$. In each block, there are at least $b - \Delta$ possible choices of $v_i$ that avoid that; we must select an independent transversal among these vertices, which also includes the designated vertex $w$. By Proposition~\ref{it-prop}, this can be done in time $O(n^2 \Delta^2)$ as long as $b \geq 4 \Delta$.

Whenever we select the independent transversal $v_1, \dots, v_k$, the total number of colored vertices increases by at least one: for, the vertex $w$ becomes colored while it was not initially, and in every other block the number of colored vertices does not decrease. So, after $n$ iterations, the entire graph has a strong coloring; the total time is $O(n^2 \Delta^2)$.
\end{proof}

The algorithm based on the ordinary LLL is slow and is inherently sequential. Using the permutation LLL, one can obtain a more direct and faster construction; however, the hypothesis of the theorem will need to be slightly stronger.
\begin{theorem}
Suppose we have a given graph $G$ of maximum degree $\Delta$, whose vertices are partitioned into blocks of size $b$. Then if $b \geq \frac{256}{27} \Delta$, it is possible to strongly color graph $G$ in expected time $O(n \Delta)$. If $b \geq (\frac{256}{27} + \epsilon) \Delta$ for some constant $\epsilon > 0$, there is an RNC algorithm to construct such a strong coloring. 
\end{theorem}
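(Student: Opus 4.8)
The plan is to model a strong coloring directly as a family of independent uniformly random bijections and run the Swapping Algorithm, mimicking the Latin-transversal argument with the matrix dimension $n$ replaced by the block size $b$. For each block $V_i$ (of size $b$) introduce a permutation $\pi_i : V_i \to [b]$; vertex $v \in V_i$ receives color $\pi_i(v)$. Since $\pi_i$ is a bijection, every block automatically sees all $b$ colors, and every edge inside a block is automatically bichromatic, so the only obstructions are cross-block monochromatic edges. For each cross-block edge $e = \{u,v\}$ with $u \in V_i$, $v \in V_j$ ($i \neq j$) and each color $c \in [b]$, declare the atomic bad-event $B_{e,c} = \{(i,u,c),(j,v,c)\}$, which is true iff $\pi_i(u) = \pi_j(v) = c$; then $P_{\Omega}(B_{e,c}) = 1/b^2$. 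If the Swapping Algorithm applied to the $\pi_i$'s terminates, its output is a strong coloring.

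Next I would pin down the dependency graph: two bad-events are $\sim$-related iff they overlap in a domain slice (same block, same vertex --- i.e.\ they share a vertex) or in a range slice (same block, same color). For a fixed $B_{e,c}$ with $e = \{u,v\}$, $u \in V_i$, $v \in V_j$, its neighborhood is covered by four cliques: the bad-events sharing vertex $u$; those sharing vertex $v$; those using block $V_i$ with color $c$; and those using block $V_j$ with color $c$. Each of these four families has size at most $b\Delta$ (e.g.\ ``share $u$'': at most $\Delta$ edges at $u$, times $b$ colors; ``block $V_i$, color $c$'': at most $b$ choices of a vertex in $V_i$, times $\le \Delta$ edges at it, color fixed). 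Assign the uniform weight $\mu \equiv \alpha$. Exactly as in the Latin-transversal theorem above, Lemma~\ref{witness-tree-lemma} together with Pegden's independent-set criterion \cite{pegden} reduces convergence of the Swapping Algorithm to solvability of
$$
\alpha \ge \frac{1}{b^2}\bigl(1 + b\Delta\,\alpha\bigr)^4 .
$$
Putting $\beta = b\Delta\alpha$ and $p = \Delta/b$, this reads $\beta \ge p(1+\beta)^4$; since $\max_{\beta>0}\beta/(1+\beta)^4 = 27/256$, attained at $\beta = 1/3$, it is solvable precisely when $p \le 27/256$, i.e.\ when $b \ge \tfrac{256}{27}\Delta$, in which case $\alpha = \Theta(1/(b\Delta))$ works.

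For the running time, Theorem~\ref{thm:constr-lll} bounds the expected number of resamplings by $\sum_{B}\mu(B) = |\mathcal B|\,\alpha \le (n\Delta b)\cdot\Theta(1/(b\Delta)) = O(n)$. Each resampling performs one or two Swap operations, which re-color $O(1)$ vertices; maintaining a list of the currently-monochromatic (edge, color) pairs, an update touches only the $O(\Delta)$ edges incident to the re-colored vertices, so each resampling costs $O(\Delta)$, while the initial scan of all edges costs $O(|E|) = O(n\Delta)$; hence total expected time $O(n\Delta)$. For the RNC statement, replacing $\tfrac{256}{27}$ by $\tfrac{256}{27}+\epsilon$ yields a constant factor of slack $1+\epsilon'$ with $\epsilon' = \Theta(\epsilon)$ in the displayed inequality; since $|\mathcal B| = n^{O(1)}$ and every $|B| \le 2 \le \log^{O(1)} n$, the Parallel Swapping Algorithm of Theorem~\ref{parallel-thm} runs in $\log^{O(1)}(n\sum_B \mu(B))/\epsilon' = \log^{O(1)} n$ time.

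I expect two fiddly points rather than one deep obstacle. First, getting the dependency structure exactly right: that domain slices correspond to vertices and range slices to colors, that the neighborhood of $B_{e,c}$ is genuinely covered by four mutually-adjacent families, and hence that the Pegden sum over independent sets of neighbors is at most $(1+b\Delta\alpha)^4$ (any overlaps among the four families only improve the bound). Second, the running-time bookkeeping --- the data structure that lets each resampling be processed in $O(\Delta)$ time while keeping the set of true bad-events available in $O(1)$. A minor loose end is that Theorem~\ref{parallel-thm} is stated for the plain LLL criterion, so one should note that Pegden's independent-set refinement is compatible with the witness-tree-height analysis underlying the parallel algorithm, just as it is with the Witness Tree Lemma. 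Were the blocks allowed to have unequal sizes one would instead invoke the injective-function extension of Section~\ref{sec:injective}, but the hypothesis fixes all block sizes to $b$.
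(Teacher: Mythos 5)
Your proposal is correct and follows essentially the same route as the paper's proof: one uniformly random bijection per block, an atomic bad-event for each (cross-block edge, color) pair of probability $1/b^2$, a neighborhood covered by the same four cliques of size $b\Delta$ each, and Pegden's criterion yielding $\alpha \ge b^{-2}(1+b\Delta\alpha)^4$, solvable exactly when $b \ge \tfrac{256}{27}\Delta$. The extra details you supply (the explicit optimization at $\beta=1/3$, the $O(\Delta)$-per-resampling bookkeeping, and the remark that Theorem~\ref{parallel-thm} must be read with Pegden's independent-set refinement) are all consistent with, and slightly more explicit than, what the paper states.
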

\begin{proof}
We will use the permutation LLL. For each block, we assume the vertices and colors are identified with the set $[b]$. Then any proper coloring of a block corresponds to a permutation of $S_b$. When we discuss the color of a vertex $v$, we refer to $\pi_k(v)$ where $k$ is the block containing vertex $v$.

For each edge $f = \langle u, v \rangle \in G$ and any color $c \in [1, \dots b]$, we have a bad-event that both $u$ and $v$ have color $c$. (Note that we cannot specify simply that $u$ and $v$ have the \emph{same color}; because we have restricted ourselves to \emph{atomic} bad-events, we must list every possible color $c$ with a separate bad event.)

Each bad-event has probability $1/b^2$. We give weight $\mu(B) = \alpha$ to every bad event, where $\alpha$ is a scalar to be determined. 

Now, each such event $(u,v,c)$ is dependent with four other types of bad-events:
\begin{enumerate}
\item An event $u, v', c'$ where $v'$ is connected to vertex $u$;
\item An event $u', v, c'$ where $u'$ is connected to vertex $v$;
\item An event $u', v', c$ where $u'$ is in the block of $u$ and $v'$ is connected to $u'$;
\item An event $u', v', c$ where $v'$ is in the block of $v$ and $u'$ is connected to $v'$
\end{enumerate}

 There are $b \Delta$ neighbors of each type.  For any of these four types, all the neighbors are themselves connected to each other. Hence an \emph{independent} set of neighbors of the bad-event $(u,v,c)$ can contain one or zero of each of the four types of bad-events. 

Using Lemma~\ref{witness-tree-lemma} and Pegden's criterion \cite{pegden}, a sufficient condition for the convergence of the Swapping Algorithm is that
$$
\alpha \geq (1/b^2) \cdot (1 + b \Delta \alpha)^4
$$

When $b \geq \frac{256}{27} \Delta$, this has a real positive root $\alpha^*$ (which is a complicated algebraic expression). Furthermore,  in this case the expected number of swaps of each permutation is $\leq b^2 \Delta \alpha^* \leq \frac{256}{81} \Delta$. So the Swapping Algorithm terminates in expected time $O(n \Delta)$. A similar argument applies to the parallel Swapping Algorithm.
\end{proof}

\subsection{Hypergraph packing}
In \cite{random-inj}, the following packing problem was considered. Suppose we are given two $r$-uniform hypergraphs $H_1, H_2$ and an integer $n$. Is it possible to find two injections $\phi_i: V(H_i) \rightarrow [n]$ with the property that $\phi_1(H_1)$ is edge-disjoint to $\phi_2(H_2)$? (That is, there are no edges $e_1 \in H_1, e_2 \in H_2$ with $\{ \phi_1(v) \mid v \in e_1 \} = \{ \phi_2(v) \mid v \in e_2 \}$. ). A sufficient condition on $H_1, H_2, n$ was given using the LLLL. We achieve this algorithmically as well:

\begin{theorem}
Suppose that $H_1, H_2$ have $m_1, m_2$ edges respectively. Suppose that each edge of $H_i$ intersects with at most $d_i$ other edges of $H_i$, and suppose that
$$
(d_1 + 1) m_2 + (d_2 + 1) m_1 < \frac{\binom{n}{r}}{e}
$$

Then the Swapping Algorithm finds injections $\phi_i: V(H_i) \rightarrow [n]$ such that $\phi_1(H_1)$ is edge-disjoint to $\phi_2(H_2)$.

Suppose further that $r \leq \log^{O(1)} n$ and
$$
(d_1 + 1) m_2 + (d_2 + 1) m_1 < \frac{(1-\epsilon) \binom{n}{r}}{e}
$$

Then the Parallel Swapping Algorithm finds such injections with high probability in $\frac{\log^{O(1)} n}{\epsilon}$ time and using $\text{poly}(m_1, m_2, n)$ processors.
\end{theorem}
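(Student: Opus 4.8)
The plan is to reduce the packing problem to a single random permutation and then invoke Theorem~\ref{thm:constr-lll} for the first conclusion and Theorem~\ref{parallel-thm} for the second. Since the statement presupposes that injections $\phi_i\colon V(H_i)\to[n]$ exist, fix one injection $\phi_2$ arbitrarily and set $\mathcal F=\{\phi_2(e):e\in H_2\}$; this is a family of at most $m_2$ $r$-subsets of $[n]$ in which, because $\phi_2$ is injective, each set meets at most $d_2$ others. It then remains to choose $\phi_1$ so that no edge of $H_1$ is mapped onto a member of $\mathcal F$, and following Section~\ref{sec:injective} I model $\phi_1$ as a single permutation $\pi$ of $[n]$ whose entries outside $V(H_1)$ are dummies, running the Swapping Algorithm on $\pi$.

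For each edge $e\in H_1$, each set $f\in\mathcal F$, and each bijection $g\colon e\to f$, introduce the atomic bad-event $B_{e,f,g}=\bigwedge_{v\in e}\bigl(\pi(v)=g(v)\bigr)$; avoiding all of them is exactly edge-disjointness of $\phi_1(H_1)$ and $\phi_2(H_2)$. Each $B_{e,f,g}$ involves $r$ cells of the single permutation, so $P_\Omega(B_{e,f,g})=(n-r)!/n!$. Crucially, $B_{e,f,g}\sim B_{e',f',g'}$ holds iff $e\cap e'\neq\emptyset$ or $f\cap f'\neq\emptyset$, a condition that does not mention the bijections, since $g(e)=f$ and $g'(e')=f'$. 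Hence a fixed $B_{e,f,g}$ has at most $r!\,\bigl((d_1+1)m_2+(d_2+1)m_1\bigr)$ neighbours: at most $d_1+1$ edges $e'$ meet $e$ (each paired with any of the $m_2$ sets of $\mathcal F$ and any of the $r!$ bijections onto it), and at most $d_2+1$ sets $f'$ meet $f$ (each paired with any of the $m_1$ edges and the $r!$ bijections).

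Now apply the local lemma with uniform weights $\mu(B)\equiv\alpha$, $\alpha:=e\,(n-r)!/n!$. Writing $D:=(d_1+1)m_2+(d_2+1)m_1$, the neighbour bound together with the hypothesis $D<\binom{n}{r}/e$ gives $\alpha\cdot r!\,D=e\,\dfrac{r!\,(n-r)!}{n!}\,D=e\,\dfrac{D}{\binom{n}{r}}<1$, whence $(1+\alpha)^{r!D}<e$ and therefore $P_\Omega(B)\prod_{B'\sim B}(1+\mu(B'))<\alpha=\mu(B)$, so the condition of Theorem~\ref{thm:constr-lll} is met. That theorem yields termination with probability one, and the expected total number of resamplings is at most $\sum_B\mu(B)\le m_1m_2\,r!\,\alpha=e\,m_1m_2/\binom{n}{r}<\min(m_1,m_2)$, using $\binom{n}{r}>e\max(m_1,m_2)$. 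A true bad-event is found and resampled in time polynomial in $n,m_1,m_2,r$ — hash each $r$-set $\phi_1(e)$, $e\in H_1$, against $\mathcal F$, and on a collision run $\mathrm{Swap}(\pi;e)$ — so this is in fact a polynomial-time algorithm. For the parallel statement, the extra factor $1-\epsilon$ in the hypothesis supplies the multiplicative slack $\Omega(\epsilon)$ needed by Theorem~\ref{parallel-thm}; the structural requirements hold because $|B_{e,f,g}|=r\le\log^{O(1)}n$, and although $|\mathcal B|=m_1m_2\,r!$ is not literally $n^{O(1)}$, at any moment only one bijection $g$ is realised for each pair $(e,f)$, so the algorithm manipulates only $O(m_1m_2)$ ``live'' events and the separation-oracle relaxation of Section~\ref{sec:parallel} applies. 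Since $\sum_B\mu(B)=O(\min(m_1,m_2))$, the bounds on the number of rounds, the number of sub-rounds, and the cost per sub-round give running time $\log^{O(1)}(n)/\epsilon$ using $\mathrm{poly}(m_1,m_2,n)$ processors.

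The step I expect to be the main obstacle is the atomization in the second paragraph: one must check that replacing the natural non-atomic event ``$\phi_1(e)=f$'' by its $r!$ atomic refinements inflates neither the probability of an individual event nor the dependency degree by more than a factor $r!$, which then cancels cleanly against the $r!$ in $n!/(n-r)!=r!\binom{n}{r}$ and leaves exactly the threshold $(d_1+1)m_2+(d_2+1)m_1<\binom{n}{r}/e$. The reason this works is that fixing $\phi_2$, rather than randomizing both injections, prevents a ``global'' dependency among all atomic events that share a common value of $[n]$; were both $\phi_1$ and $\phi_2$ random, the atomic refinements sharing a common image set would all be mutually dependent and the argument would lose a further factor of order $r^2m_1m_2/n$, which would wreck the bound once $r\ge2$.
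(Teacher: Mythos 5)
Your proposal is correct and follows essentially the same route as the paper's proof (which defers most details to the Lu--Sz\'{e}k\'{e}ly argument): fix one injection, randomize the other as a permutation with dummy entries, atomize each collision event ``$\phi(e)=f$'' into its $r!$ orderings, and observe that the $r!$ in the dependency count cancels against the $r!$ in $n!/(n-r)!=r!\binom{n}{r}$; your handling of the super-polynomial number of atomic events in the parallel case (only one bijection per pair $(e,f)$ is ever true at a time) is exactly the paper's remark. The only difference is cosmetic---you fix $\phi_2$ and randomize $\phi_1$ while the paper does the reverse---and your write-up supplies the explicit $\mu\equiv e\,(n-r)!/n!$ computation that the paper leaves implicit.
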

\begin{proof}
\cite{random-inj} proves this fact using the LLLL, and the proof immediately applies to the Swapping Algorithm as well. We review the proof briefly: we may assume without loss of generality that the vertex set of $H_1$ is $[n]$ and the vertex set of $H_2$ has cardinality $n$ and that $\phi_1$ is the identity permutation; then we only need to select the bijection $\phi_2: H_2 \rightarrow [n]$. For each pair of edges $e_1 = \{u_1, \dots, u_r \} \in H_1, e_2 = \{v_1, \dots, v_r \} \in H_2$, and each ordering $\sigma \in S_r$, there is a separate bad-event $\phi_2( v_1 ) = u_{\sigma 1} \wedge \dots \wedge \phi_2 (v_r) = u_{\sigma r}$. Now observe that the LLL criterion is satisfied for these bad-events, under the stated hypothesis.

The proof for the Parallel Swapping Algorithm is almost immediate. There is one slight complication: the total number of atomic bad-events is $m_1 m_2 r!$, which could be super-polynomial for $r = \Theta(\log n)$. However, it is easy to see that the total number of bad-events \emph{which are true at any one time} is at most $m_1 m_2$; namely, for each pair of edges $e_1, e_2$, there may be at most one $\sigma$ such that $\phi_2( v_1 ) = u_{\sigma 1} \wedge \dots \wedge \phi_2 (v_r) = u_{\sigma r}$. It is not hard to see that Theorem~\ref{parallel-thm} still holds under this condition.
\end{proof}

\section{Conclusion}
\label{sec:conclusion}
The original formulation of the LLLL \cite{erdos-spencer} applies in a natural way to general probability spaces. There has been great progress over the last few years in developing constructive algorithms, which find in polynomial time the combinatorial structures in these probability spaces whose existence is guaranteed by the LLL. These algorithms have been developed in great generality, encompassing the Swapping Algorithm as a special case. 

However, the Moser-Tardos algorithm has uses beyond simply finding a object which avoids the bad-events. In many ways, the Moser-Tardos algorithm is more powerful than the LLL. We have already seen problems that feature its extensions: e.g., Theorem~\ref{szabo-thm} requires the use of the Partial Resampling variant of the Moser-Tardos algorithm, and Proposition~\ref{it-prop} requires the use of the Moser-Tardos distribution (albeit in the context of the original Moser-Tardos algorithm, not the Swapping Algorithm). 

While the algorithmic frameworks of Achlioptas \& Iliopoulous and Harvey \& Vondr\'{a}k achieve the main goal of a generalized constructive LLL algorithm, they do not match the full power of the Moser-Tardos algorithm. However, our analysis shows that the Swapping Algorithm matches nearly all of the additional features of the Moser-Tardos algorithm. In our view, one main goal of our paper is to serve as a roadmap to the construction of a \emph{true} generalized LLL algorithm. Behind all the difficult technical analysis, there is the underlying theme: even complicated probability spaces such as permutations can be reduced to ``variables'' (the domain and range elements of the range) which interact in a somewhat ``independent'' fashion. 

Encouragingly, there has been progress toward this goal. For example, one main motivation of \cite{achlioptas,harvey} was to generalize the Swapping Algorithm. Then, Kolmogorov noticed in \cite{kolmogorov} that our Swapping Algorithm had a certain nice property, namely the ability to select the resampled bad-event in an arbitrary fashion, that the analysis of \cite{achlioptas} lacked; this led to the work of \cite{kolmogorov} which partially generalized that property (which Kolmogorov refers to as \emph{commutativity}).

At the current time, we do not even know how to define a truly generalized LLL algorithm, let alone analyze it. But we hope that we have at least provided an example approach toward such an algorithm.

\section{Acknowledgments}
We would like to thank the anonymous reviewers of the conference and journal versions of this paper, for their helpful comments and suggestions.

\appendix

\section{Symmetry properties of the swapping subroutine}
\label{symmetry-sec}
In the following series of propositions, we show a variety of symmetry properties of the swapping subroutine. This analysis will use simple results and notations of group theory. We let $S_l$ denote the symmetric group on $l$ letters, which we identify with the set of permutations of $[l]$. We let $(a \ b)$ denote the permutation (of whatever dimension is appropriate) that swaps $a/b$ and is the identity otherwise. We write multiplications on the right, so that $\sigma \tau$ denotes the permutation which maps $x$ to $\sigma(\tau(x))$. Finally, we will sometimes write $\sigma x$ instead of the more cumbersome $\sigma(x)$.

\begin{proposition}
\label{swap-invariant-prop}
The swapping subroutine is invariant under permutations of the domain or range, namely that for any permutations $\tau, \sigma$ we have
$$
P(\text{Swap}(\pi; x_1, \dots, x_r) = \sigma) = P(\text{Swap}(\pi \tau; \tau^{-1} x_1 , \dots, \tau^{-1} x_r) = \sigma \tau)
$$
and
$$
P(\text{Swap}( \pi; x_1, \dots, x_r) =  \sigma) = P(\text{Swap}(\tau \pi; x_1, \dots, x_r = \tau \sigma)
$$
\end{proposition}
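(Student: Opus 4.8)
The plan is to reduce the Swap subroutine to multiplication by a random permutation whose law depends only on the dimension $t$ and the list $x_1, \dots, x_r$ (never on $\pi$), and then read off both identities by elementary group theory. First I would fix the composition convention: ``swapping the entries at positions $a,b$ of $\pi$'' means replacing $\pi$ by $\pi(a\ b)$. With this convention, after running $\text{Swap}(\pi; x_1, \dots, x_r)$ the output is
$$\text{Swap}(\pi; x_1, \dots, x_r) = \pi\,(x_1\ x'_1)(x_2\ x'_2)\cdots(x_r\ x'_r),$$
where at step $i$ the point $x'_i$ is drawn uniformly from $[t] \setminus \{x_1, \dots, x_{i-1}\}$. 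The crucial observation is that the rule generating the sequence $(x'_1, \dots, x'_r)$ refers only to $t$ and to $x_1, \dots, x_r$, not to $\pi$; hence $\text{Swap}(\pi; x_1, \dots, x_r) = \pi R$ for a random $R = R(x_1, \dots, x_r) \in S_t$ whose distribution does not involve $\pi$.

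The second identity is then immediate. Running $\text{Swap}(\tau\pi; x_1, \dots, x_r)$ produces $\tau\pi R$ with exactly the same random $R$ (the list of $x_i$'s is unchanged), so for fixed $\sigma,\tau$,
$$P(\text{Swap}(\tau\pi; x_1, \dots, x_r) = \tau\sigma) = P(\tau\pi R = \tau\sigma) = P(\pi R = \sigma) = P(\text{Swap}(\pi; x_1, \dots, x_r) = \sigma),$$
where the middle step uses that left-multiplication by the fixed permutation $\tau$ is a bijection of $S_t$.

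For the first identity I would argue by coupling. When we run $\text{Swap}(\pi\tau; \tau^{-1}x_1, \dots, \tau^{-1}x_r)$, at step $i$ we pick $z_i$ uniformly from $[t] \setminus \{\tau^{-1}x_1, \dots, \tau^{-1}x_{i-1}\}$ and multiply by $(\tau^{-1}x_i\ z_i)$. Put $x'_i := \tau z_i$; since $\tau$ is a bijection of $[t]$ carrying $\{\tau^{-1}x_1, \dots, \tau^{-1}x_{i-1}\}$ onto $\{x_1, \dots, x_{i-1}\}$, the sequence $(x'_1, \dots, x'_r)$ has precisely the law it would have inside $\text{Swap}(\pi; x_1, \dots, x_r)$, so this is a legitimate coupling. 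Now apply the conjugation identity $(\tau^{-1}a\ \tau^{-1}b) = \tau^{-1}(a\ b)\tau$ to each factor; the adjacent $\tau\tau^{-1}$'s telescope and
$$(\pi\tau)\,\tau^{-1}(x_1\ x'_1)\tau\,\tau^{-1}(x_2\ x'_2)\tau\cdots\tau^{-1}(x_r\ x'_r)\tau = \pi\,(x_1\ x'_1)(x_2\ x'_2)\cdots(x_r\ x'_r)\,\tau.$$
Thus under this coupling $\text{Swap}(\pi\tau; \tau^{-1}x_1, \dots, \tau^{-1}x_r) = \text{Swap}(\pi; x_1, \dots, x_r)\cdot\tau$ as random variables, and cancelling the fixed right factor $\tau$ gives $P(\text{Swap}(\pi\tau; \tau^{-1}x_1, \dots, \tau^{-1}x_r) = \sigma\tau) = P(\text{Swap}(\pi; x_1, \dots, x_r) = \sigma)$.

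The only real work — and the main (modest) obstacle — is bookkeeping: pinning down the composition convention so that a swap of entries is right-multiplication by a transposition, getting the direction of the conjugation $(\tau^{-1}a\ \tau^{-1}b) = \tau^{-1}(a\ b)\tau$ correct, and verifying that at every step the change of variables $z_i \mapsto \tau z_i$ maps the uniform law on $[t] \setminus \{\tau^{-1}x_1, \dots, \tau^{-1}x_{i-1}\}$ to the uniform law on $[t] \setminus \{x_1, \dots, x_{i-1}\}$. Once these conventions are set, both statements drop out of the telescoping computation above, and these two symmetries (plus the $\pi \mapsto \pi^{-1}$ symmetry) are exactly what is invoked as Propositions~\ref{exchange-sym-prop}, \ref{swap-invariant-prop2} elsewhere in the paper.
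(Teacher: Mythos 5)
Your proof is correct, and it takes a genuinely different route than the paper's.  The paper proves the proposition by induction on $r$, using a recursive description of $\text{Swap}$ in which one swaps $x_1$, then restricts the permutation to a permutation on $n-1$ letters (dropping the domain point $x_1$) and recurses; the computation then unwinds one level at a time, invoking the inductive hypothesis for the restricted swap.  You instead make two observations that bypass induction entirely: (i) the output of $\text{Swap}(\pi; x_1, \dots, x_r)$ factors as $\pi R$ where $R = (x_1\ x'_1)\cdots(x_r\ x'_r)$, and the law of $R$ depends only on $t$ and the list $x_1, \dots, x_r$, never on $\pi$, so left-composition with $\tau$ is a trivial bijection-of-the-sample-space argument; (ii) for right-composition, the change of variables $x'_i := \tau z_i$ couples the two runs, and the conjugation identity $(\tau^{-1}a\ \tau^{-1}b) = \tau^{-1}(a\ b)\tau$ telescopes to give an a.s.\ equality $\text{Swap}(\pi\tau; \tau^{-1}x_1,\dots,\tau^{-1}x_r) = \text{Swap}(\pi;x_1,\dots,x_r)\,\tau$ under the coupling.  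This is shorter and arguably more transparent than the paper's inductive/restriction argument, and the coupling version gives an almost-sure statement rather than merely an equality of distributions.  The one small imprecision is your parenthetical at the end: Proposition~\ref{exchange-sym-prop} (the $\pi\mapsto\pi^{-1}$ symmetry) actually \emph{uses} the present proposition in its proof rather than being a parallel instance of the same symmetry; this does not affect the argument itself.
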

\begin{proof}
We prove this by induction on $r$. The following equivalence will be useful. We can view a single call to Swap as follows: we select a random $x'_1$ and swap $x_1$ with $x'_1$; let $\pi' = \pi \cdot (x_1 \ x'_1) $ denote the permutation after this swap. Now consider the permutation on $n-1$ letters obtained by removing $x_1$ from the range and $\pi'(x_1)$ from the range of $\pi'$; we use the notation $\pi' - (x_1, *)$ to denote this restriction of range/domain. We then recursively call $\text{Swap}(\pi' - (x_1, *), x_2, \dots, x_r)$. 

Now, in order to have $\text{Swap}( \pi \tau; \tau^{-1} x_1, \dots, \tau^{-1} x_r) = \sigma \tau$ we must first swap $\tau^{-1} x_1$ with $x'_1 = \tau^{-1} \pi^{-1} \sigma \tau x_1 $; this occurs with probability $1/n$. Then we would have
\begin{align*}
&P(\text{Swap}( \pi \tau; \tau^{-1} x_1, \dots, \tau^{-1} x_r) =  \sigma \tau)  \\
&\qquad = \tfrac{1}{n} P(\text{Swap}( \pi \tau (\tau^{-1} x_1 \ \ \tau^{-1} \pi^{-1} \sigma x_1 )  - (\tau^{-1} x_1, *); \tau^{-1} x_2, \dots,  \tau^{-1} x_r) = \sigma \tau - (\tau^{-1} x_1, *) ) \\
&\qquad = \tfrac{1}{n} P(\text{Swap}( \pi \tau (\tau^{-1} x_1 \ \ \tau^{-1} \pi^{-1} \sigma x_1 ) \tau^{-1} - (x_1, *); \tau^{-1} \tau x_2, \dots,  \tau^{-1} \tau x_r) = \sigma \tau \tau^{-1} - (x_1, *) ) \\
&\qquad \qquad \text{by inductive hypothesis} \\
&\qquad = \tfrac{1}{n} P(\text{Swap}( \pi (x_1 \ \pi^{-1} \sigma x_1 ) \tau^{-1} - (x_1, *); x_2, \dots,  x_r) = \sigma - (x_1, *) ) \\
&\qquad =  P(\text{Swap}( \pi; x_1, x_2, \dots,  x_r) = \sigma)
\end{align*}

A similar argument applies for permutation of the range (i.e., post-composition by $\tau$).
\end{proof}

Also, the order in which we perform the swaps is irrelevant:
\begin{proposition}
\label{swap-invariant-prop2}
Let $\pi \in S_n$ be fixed, and let $x_1, \dots, x_r \in [n]$ be fixed as well. Let $\rho: [r] \rightarrow [r]$ be a permutation on $r$ letters; then for any permutation $\sigma \in S_n$ we have
$$
P(\text{Swap}(\pi; x_1, \dots, x_r) = \sigma) = P(\text{Swap}(\pi; x_{\rho(1)}, \dots, x_{\rho(r)} = \sigma)
$$
\end{proposition}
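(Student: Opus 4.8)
The plan is to reduce the claim, through two standard simplifications, to a finite computation about products of two transpositions. \emph{Reduction to adjacent transpositions.} Since the adjacent transpositions $(i\ i{+}1)$ generate $S_r$, it suffices to show that $\text{Swap}$ is unchanged in distribution when the $i$-th and $(i{+}1)$-st entries of its argument list are swapped, for each $i$: any reordering of $x_1,\dots,x_r$ into $x_{\rho(1)},\dots,x_{\rho(r)}$ is then obtained by a sequence of such exchanges, leaving the distribution fixed at every step.

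\emph{Reduction to $r=2$.} Fix $i$ and write both runs --- the original one, and the one with the $i$-th and $(i{+}1)$-st entries exchanged --- as three consecutive blocks: the Swap-steps for $x_1,\dots,x_{i-1}$; a ``middle'' block consisting of the two Swap-steps for $x_i$ and $x_{i+1}$ performed in one order or the other; and the Swap-steps for $x_{i+2},\dots,x_r$. Put $S=\{x_1,\dots,x_{i-1}\}$. The first block is identical in both runs; inside the middle block the two forbidden sets are $S$ and then $S$ together with whichever of $x_i,x_{i+1}$ was used first; and after the middle block the forbidden set is $S\cup\{x_i,x_{i+1}\}$ regardless of that internal order, so the third block is the same randomized procedure with the same forbidden sets in both runs. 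By the substitution principle for random procedures --- if $X\overset{d}{=}Y$ and $\Phi$ is applied with fresh independent randomness then $\Phi(X)\overset{d}{=}\Phi(Y)$ --- it therefore suffices to show that the middle block has the same output distribution in the two orders. Factoring out the deterministic pre-composition by the permutation entering the middle block, this amounts to the following: for $U\subseteq[t]$ and distinct $a,b\in U$, the random permutation $(a\ p)(b\ q)$ with $p$ uniform on $U$ and $q$ uniform on $U\setminus\{a\}$ (independently) has the same law as $(b\ p')(a\ q')$ with $p'$ uniform on $U$ and $q'$ uniform on $U\setminus\{b\}$.

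\emph{The core computation.} I would prove this last statement by enumeration. As $(p,q)$ ranges over its $m(m-1)$ values ($m=|U|$), the permutation $(a\ p)(b\ q)$ is always one of: the identity; a transposition $(a\ c)$ or $(b\ c)$ with $c\notin\{a,b\}$; the transposition $(a\ b)$; a product of disjoint transpositions $(a\ c)(b\ d)$ with $c,d\notin\{a,b\}$; or a $3$-cycle on some set $\{a,b,e\}$. A short case analysis shows that every permutation in this list arises from exactly one pair $(p,q)$ --- the element counts matching as $1+(m-2)+(m-2)+1+(m-2)(m-3)+2(m-2)=m(m-1)$ --- so the law is uniform on this explicitly described set. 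That set is symmetric under interchanging $a$ and $b$, and the second process is obtained from the first by precisely that interchange, so the two laws coincide. I expect this enumeration, together with keeping the forbidden-set bookkeeping straight in the $r=2$ reduction, to be the only real work. (An alternative to the enumeration would be to pass to inverses and use the inversion-symmetry of $\text{Swap}$ exploited in Section~\ref{total-prob-sec}, but the bookkeeping there is comparable.)
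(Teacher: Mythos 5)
Your proof is correct, and it takes a genuinely different route from the paper's. The paper argues by induction on $r$ using the recursive form of Swap: it reduces to the case $\rho(1)\neq 1$, unfolds two levels of the recursion for each ordering, and closes with the identity $(a_2\ b_2)(a_1\ \ (a_2\ b_2)b_1)=(a_1\ b_1)(a_2\ \ (a_1\ b_1)b_2)$, which gives, for each fixed target $\sigma$, a probability-preserving bijection between the pairs of swap choices in the two orders that reach $\sigma$. You instead factor the statement through adjacent transpositions and a block decomposition down to an $r=2$ kernel, and then identify the two-step law outright: uniform on the identity, $(a\ b)$, all $(a\ c)$ and $(b\ c)$, both $3$-cycles on each $\{a,b,e\}$, and all disjoint products $(a\ c)(b\ d)$ --- a set invariant under conjugation by $(a\ b)$, which is exactly the relabeling carrying one process to the other. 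The crux of your enumeration, the ``exactly one pair'' claim, does check out: the two orientations of the $3$-cycle on $\{a,b,c\}$ arise once each, from $(p,q)=(b,c)$ and $(p,q)=(c,c)$, and the counts sum to $m(m-1)$. The two arguments are two faces of the same exchange --- the paper's identity is the per-$\sigma$ version of your distributional equality --- but yours is more modular and yields an explicit description of the two-step law, while the paper's induction handles arbitrary $\rho$ directly without passing through adjacent transpositions. Both implicitly require $x_1,\dots,x_r$ to be distinct (you need $a\neq b$ and $a,b\in U$), consistent with the paper's standing assumption on bad-events.
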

\begin{proof}
We will prove this by induction on $r$. We assume $\rho(1) \neq 1$ or else this follows immediately from induction.

We have:
\begin{align*}
&P(\text{Swap}(\pi; x_{\rho(1)}, \dots, x_{\rho(r)}) = \sigma)  \\
& \qquad= \tfrac{1}{n} P(\text{Swap}(\pi (x_{\rho(1)} \ \pi^{-1} \sigma x_{\rho(1)}); x_{\rho(2)}, \dots, x_{\rho(r)}) = \sigma)  \\
& \qquad = \tfrac{1}{n} P(\text{Swap}(\pi (x_{\rho(1)} \ \pi^{-1} \sigma x_{\rho(1)}); x_1, x_2, \dots, x_{\rho(1)-1}, x_{\rho(1)+1}, \dots, x_r)  = \sigma)  \qquad \text{by I.H.} \\
& \qquad = \tfrac{1}{n(n-1)} P(\text{Swap}(\pi (x_{\rho(1)} \ \pi^{-1} \sigma x_{\rho(1)})(x_1 \  (\pi (x_{\rho(1)} \ \pi^{-1} \sigma x_{\rho(1)})^{-1}) \sigma x_1)  x_2, \dots, x_{\rho(1)-1}, x_{\rho(1)+1}, \dots, x_r)  = \sigma)  \\
& \qquad = \tfrac{1}{n(n-1)} P(\text{Swap}(\pi (x_{\rho(1)} \ \pi^{-1} \sigma x_{\rho(1)})(x_1 \  (x_{\rho(1)} \ \pi^{-1} \sigma x_{\rho(1)}) \pi^{-1} \sigma x_1)  x_2, \dots, x_{\rho(1)-1}, x_{\rho(1)+1}, \dots, x_r)  = \sigma) 
\end{align*}

At this point, consider the following simple fact about permutations: for any $a_1, a_2, b_1, b_2 \in [l]$ with $a_1 \neq a_2, b_1 \neq b_2$, we have
$$
(a_2 \ b_2) (a_1 \ (a_2 \  b_2) b_1) = (a_1 \ b_1) (a_2 \  (a_1 \ b_1) b_2)
$$
This fact is simple to prove by case analysis considering which of the letters $a_1, a_2, b_1, b_2$ are equal to each other.

We now apply this fact using $a_1 = x_1, a_2 = x_{\rho(1)}, b_1 =\pi^{-1} \sigma a_1, b_2 =  \pi^{-1} \sigma a_2$; this gives us 
\begin{align*}
&P(\text{Swap}(\pi; x_{\rho(1)}, \dots, x_{\rho(r)}) = \sigma)  \\
& \qquad = \tfrac{1}{n(n-1)} P(\text{Swap}(\pi (x_{1} \ \pi^{-1} \sigma x_1)(x_{\rho(1)} \  (x_{1} \ \pi^{-1} \sigma x_{1}) \pi^{-1} \sigma x_{\rho(1)});  x_2, \dots, x_{\rho(1)-1}, x_{\rho(1)+1}, \dots, x_r)  = \sigma) \\
& \qquad = \tfrac{1}{n} P(\text{Swap}(\pi (x_{1} \ \pi^{-1} \sigma x_1); x_{\rho(1)},  x_2, \dots, x_{\rho(1)-1}, x_{\rho(1)+1}, \dots, x_r)  = \sigma) \\
& \qquad = \tfrac{1}{n} P(\text{Swap}(\pi (x_{1} \ \pi^{-1} \sigma x_1); x_2, \dots, x_r) = \sigma) \qquad \text{by I.H.} \\
& \qquad =P(\text{Swap}(\pi; x_1, \dots, x_r) = \sigma)
\end{align*}
\end{proof}

In our analysis and algorithm, we will seek to maintain the symmetry between the ``domain" and ``range" of the permutation. The swapping subroutine seems to break this symmetry, inasmuch as the swaps are all based on the \emph{domain} of the permutation. However, this symmetry-breaking is only superficial as shown in Proposition~\ref{exchange-sym-prop}.

\begin{proposition}
\label{exchange-sym-prop}
Define the alternate swapping subroutine, which we denote $\text{Swap2}(\pi; y_1, \dots, y_r)$ as follows: 
\begin{enumerate}
\item Suppose $\pi$ is a permutation of $[n]$. Repeat the following for $i = 1, \dots, r$:
\item Select $y'_i$ uniformly at random among $[n] - \{y_1, \dots, y_{i-1} \}$. 
\item Swap entries $\pi^{-1}(y_i)$ and $\pi^{-1} (y'_i)$ of the permutation $\pi$.
\end{enumerate}

More compactly:
$$
\text{Swap2}(\pi; y_1, \dots, y_r) = \text{Swap}(\pi^{-1}, y_1, \dots, y_r)^{-1}
$$

Then the algorithms Swap and Swap2 induce the same distribution, namely that if $\pi(x_1) = y_1, \dots, \pi(x_r) = y_r$, then for any permutation $\sigma$ we have
$$
P(\text{Swap}(\pi; x_1, \dots, x_r) = \sigma) = P(\text{Swap2}(\pi; y_1, \dots, y_r) = \sigma)
$$
\end{proposition}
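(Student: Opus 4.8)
The plan is to strip the fixed permutation $\pi$ out of both sides and reduce the statement to a single symmetry: that the ``bare'' output of the swapping subroutine, started from the identity, has the same distribution as its inverse. The first observation is that the random choices $x'_i$ made inside $\text{Swap}(\pi; x_1, \dots, x_r)$ depend only on $x_1, \dots, x_{i-1}$, never on the current value of the permutation; so running the routine on $\pi$ and on the identity with the same coins produces outputs that differ exactly by left multiplication by $\pi$. Writing $T$ for a random variable with the distribution of $\text{Swap}(\mathrm{id}_{[n]}; x_1, \dots, x_r)$ --- equivalently $T = (x_1\ x'_1)(x_2\ x'_2)\cdots(x_r\ x'_r)$ with each $x'_i$ uniform on $[n] \setminus \{x_1, \dots, x_{i-1}\}$ --- we therefore have $\text{Swap}(\pi; x_1, \dots, x_r) \stackrel{d}{=} \pi T$.

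Applying the same observation to $\pi^{-1}$ gives $\text{Swap}(\pi^{-1}; y_1, \dots, y_r) \stackrel{d}{=} \pi^{-1} \cdot \text{Swap}(\mathrm{id}; y_1, \dots, y_r)$. Now I use $y_i = \pi(x_i)$: conjugating each transposition $(y_i\ y'_i)$ by $\pi^{-1}$ turns it into $(x_i\ \pi^{-1} y'_i)$, while the constraint $y'_i \notin \{y_1, \dots, y_{i-1}\}$ becomes $\pi^{-1}y'_i \notin \{x_1, \dots, x_{i-1}\}$; hence $\text{Swap}(\mathrm{id}; y_1, \dots, y_r) \stackrel{d}{=} \pi T \pi^{-1}$. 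Consequently $\text{Swap2}(\pi; y_1, \dots, y_r) = \text{Swap}(\pi^{-1}; y_1, \dots, y_r)^{-1} \stackrel{d}{=} (\pi^{-1} \cdot \pi T \pi^{-1})^{-1} = \pi T^{-1}$. So the two sides of the proposition are distributed as $\pi T$ and $\pi T^{-1}$, and since left multiplication by a fixed permutation is a bijection of $S_n$, it suffices to prove $T \stackrel{d}{=} T^{-1}$.

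This I prove by induction on $r$, the case $r = 0$ being trivial. For the inductive step, factor $T = (x_1\ x'_1)\, U$ with $x'_1$ uniform on $[n]$ and $U := (x_2\ x'_2)\cdots(x_r\ x'_r)$ independent of $x'_1$. Every transposition in $U$ avoids $x_1$, so $U$ fixes $x_1$; moreover $U$ is distributed as the output of $\text{Swap}(\mathrm{id}_{[n]\setminus\{x_1\}}; x_2, \dots, x_r)$ on the $(n-1)$-element ground set $[n] \setminus \{x_1\}$, so by the inductive hypothesis $U \stackrel{d}{=} U^{-1}$. Then $T^{-1} = U^{-1}(x_1\ x'_1) \stackrel{d}{=} U(x_1\ x'_1)$, and it remains to show $U(x_1\ x'_1) \stackrel{d}{=} (x_1\ x'_1)\, U$. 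Fix a target $g \in S_n$. Because $U$ fixes $x_1$, exactly one value of $x'_1$ is admissible on each side, yielding $\Pr[(x_1\ x'_1)U = g] = \tfrac1n \Pr[U = (x_1\ g(x_1))\, g]$ and $\Pr[U(x_1\ x'_1) = g] = \tfrac1n \Pr[U = g\, (x_1\ g^{-1}(x_1))]$, where a transposition $(a\ b)$ is read as the identity when $a = b$. These two probabilities coincide because $g\,(x_1\ g^{-1}(x_1))\, g^{-1} = (g(x_1)\ x_1)$, i.e. $g\,(x_1\ g^{-1}(x_1)) = (x_1\ g(x_1))\, g$.

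The main obstacle is the reduction in the second paragraph: one has to spot that pulling $\pi$ out of the swapping output and conjugating the transpositions --- using $y_i = \pi(x_i)$ to convert $\text{Swap2}$'s range-side choices into ordinary domain-side choices --- collapses the whole statement to the bare fact that $\text{Swap}(\mathrm{id}; x_1, \dots, x_r)$ has the same law as its inverse. After that the induction is routine, the only delicate points being the elementary transposition identity $g\,(x_1\ g^{-1}(x_1)) = (x_1\ g(x_1))\, g$ and the bookkeeping observation that $U$ fixes $x_1$, which is what keeps all these manipulations inside the support of $U$. (A more pedestrian route would be a direct induction on $r$ that peels the first swap off each of $\text{Swap}$ and $\text{Swap2}$; but then the two ``remainders'' live on differently restricted ground sets, precisely the nuisance the reduction above sidesteps.)
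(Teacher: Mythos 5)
Your proof is correct, but it takes a genuinely different route from the paper's. The paper proves this by a direct recursion on the definitions of \textup{Swap} and \textup{Swap2}: it peels off the first swap, applies the inductive hypothesis to the restricted permutation on the smaller ground set (the ``$\pi' - (x_1, *)$'' notation), and then invokes Proposition~\ref{swap-invariant-prop} twice --- once for pre-composition and once for post-composition --- to realign the two recursive calls, which live on differently restricted ground sets. You instead observe that the random transpositions chosen inside \textup{Swap} do not depend on the permutation being shuffled, so $\text{Swap}(\pi; x_1, \dots, x_r) = \pi T$ with $T = (x_1\ x'_1)\cdots(x_r\ x'_r)$ independent of $\pi$; after conjugating by $\pi$ to convert the range-side choices of \textup{Swap2} into domain-side choices, the whole proposition collapses to the single assertion $T \stackrel{d}{=} T^{-1}$. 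Your inductive proof of that assertion is also correct: the factorization $T = (x_1\ x'_1)\,U$, the independence of $x'_1$ from $U$, the fact that $U$ fixes $x_1$, and the transposition identity $g\,(x_1\ g^{-1}(x_1)) = (x_1\ g(x_1))\, g$ all check out. What your approach buys is self-containment --- you never need Proposition~\ref{swap-invariant-prop} --- and it makes the content of the statement transparent (a Fisher--Yates-style shuffle from the identity is equidistributed with its inverse). What the paper's approach buys is uniformity with the surrounding appendix, where pre/post-composition invariance is the workhorse lemma; its recursion stays entirely inside the \textup{Swap}/\textup{Swap2} formalism rather than introducing the auxiliary random permutation $T$.
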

\begin{proof}
A similar recursive definition applies to Swap2 as for Swap: we select $x'_1$ uniformly at random, swap $x_1/x'_1$, and then call $\text{Swap2}(\pi(x_1 \  x_1') - (*, y_1); y_2, \dots, y_r)$. The main difference is that we remove the image point $(*,y_1)$ instead of the domain point $(x_1, *)$.

Now, in order to have $\text{Swap2}(\pi; y_1, \dots, y_r) = \sigma$ we must first swap $x_1$ with $x'_1 = \pi^{-1} \sigma x_1 $; this occurs with probability $1/n$. Next, we recursively call Swap2 on the permutation $\pi (x_1 x'_1) - (*, y_1)$ yielding:
\begin{align*}
P(\text{Swap2}(\pi; y_1, \dots, y_r) = \sigma) &= \tfrac{1}{n} P(\text{Swap2}(\pi (x_1 x'_1) - (*, y_1); y_2, \dots, y_r) = \sigma - (*, y_1) ) \\
&= \tfrac{1}{n} P(\text{Swap}(\pi (x_1 \  x'_1) - (*, y_1); (x_1  \ x'_1) \pi^{-1} y_2, \dots,(x_1  \ x'_1) \pi^{-1} y_r) = \sigma - (*, y_1) ) \\
& \qquad \text{by inductive hypothesis} \\
&= \tfrac{1}{n} P(\text{Swap}(\pi - (x_1, y_1);  x_2, \dots, x_r) = \sigma (x_1 \ x_1')- (x_1, y_1) ) \\
& \qquad \text{by Proposition~\ref{swap-invariant-prop}, when we pre-compose with $(x_1 \ x_1')$} \\
&= \tfrac{1}{n} P(\text{Swap}((\sigma x_1 \ \sigma x'_1) \pi - (x_1, *);  x_2, \dots, x_r) = (\sigma x_1 \ \sigma x'_1) \sigma (x_1 \ x_1')- (x_1,* ) \\
& \qquad \text{by Proposition~\ref{swap-invariant-prop}; when we post-compose with $(\sigma x_1 \ \sigma x'_1)$} \\
&= \tfrac{1}{n} P(\text{Swap}(\pi (x_1 \ x'_1) - (x_1, *);  x_2, \dots, x_r) = \sigma- (x_1, *) \\
&= P(\text{Swap}(\pi; x_1, \dots, x_r) = \sigma)
\end{align*}
\end{proof}

\bibliographystyle{tocplain}   




\end{document}